\numberwithin{equation}{section}
\theoremstyle{plain}
\newtheorem{thm}{Theorem}[section]
\newcommand{\ud}{\mathrm{d}}
\newcommand{\E}{{\mathbb{E}}}
\newcommand{\Cov}{{\mathbb{C}}\mathrm{ov}}
\newcommand{\Pro}{\mathrm{Pr}}
\newcommand{\TPR}{\mathrm{TPR}}
\newcommand{\FPR}{\mathrm{FPR}}
\newcommand{\TCF}{\mathrm{TCF}}
\newcommand{\I}{\mathrm{I}}
\begin{document}

\begin{frontmatter}

\title{Bias--corrected methods for estimating the receiver operating characteristic surface of continuous diagnostic tests}
\runtitle{Bias--corrected methods for ROC surfaces}

\begin{aug}
\author{\fnms{Khanh} \snm{To Duc}\thanksref{t1} 
\ead[label = e1]{toduc@stat.unipd.it},}
\author{\fnms{Monica} \snm{Chiogna}
\ead[label = e2]{monica.chiogna@unipd.it}}
\and
\author{\fnms{Gianfranco} \snm{Adimari}
\ead[label = e3]{gianfranco.adimari@unipd.it}}
\address{Department of Statistical Sciences, University of Padua \\
Via C. Battisti, 241-243, 35121 Padua, Italy\\
\printead{e1,e2,e3}}

\thankstext{t1}{Corresponding author}

\runauthor{K. To Duc et al.}

\affiliation{Department of Statistical Sciences, University of Padua}

\end{aug}

\begin{abstract}
Verification bias is a well-known problem that may occur in the evaluation of predictive ability of diagnostic tests. When a binary disease status is considered, various solutions can be found in the literature to correct inference based on usual measures of test accuracy, such as the receiver operating characteristic (ROC) curve or the area underneath. Evaluation of the predictive ability of continuous diagnostic tests in the presence of verification bias for  a three-class disease status is here discussed. In particular, several verification bias-corrected estimators of the ROC surface and of the volume underneath are proposed. Consistency and asymptotic normality of the proposed estimators are established and their finite sample behavior is investigated by means of  Monte Carlo simulation studies. Two illustrations are also given.
\end{abstract}

\begin{keyword}[class=MSC]
\kwd[Primary ]{62C99}
\kwd{}
\kwd[; secondary ]{62P10}
\end{keyword}

\begin{keyword}
\kwd{verification bias}
\kwd{missing at random}
\kwd{ROC surface analysis}
\kwd{true class fractions}
\end{keyword}


\tableofcontents

\end{frontmatter}

\section{Introduction}
\label{s:intro}
Before applying a diagnostic test in clinical settings, a rigorous statistical assessment of its performance in discriminating the disease status from the non-disease status is necessary. For a continuous-scale test $T$, the diagnosis is dependent upon whether the test result is above or below a specified cut  point $c.$ Assuming, without loss of generality, that higher test values indicate a higher likelihood of disease,  a result is called positive if its value exceeds the cut point, and negative otherwise.  A positive test indicates presence of the disease.

At a fixed cut point $c$, the accuracy of the test can be evaluated by its true positive rate (TPR) and its true negative rate (TNR), which are defined as the probabilities that the test correctly identifies the diseaded and non-diseaded subjects, respectively. 
The Receiver Operating Characteristic (ROC) curve
is the plot of TPR versus 1-TNR by varying the cut point $c$.
Usually, the ROC curve is monotone and lies in the upper triangle of the unit square, which consist of three vertices $(0,0), (0,1)$ and $(1,1)$. The shape of ROC curve allows to evaluate the ability of the test. For example, a ROC curve equal to a straight line joining points $(0,0)$ and $(1,1)$ represents a diagnostic tests which is the random guess. A commonly used summary measure that aggregates performance information across the range of possible cut points  is the area under ROC curve (AUC). Reasonable values of AUC range from 0.5, suggesting that the test is no better than chance alone, to 1.0, which indicates a perfect test. 

Clearly, the ROC curve and the AUC of a test under assessment are unknown and the statistical evaluation of the test requires suitable inferential procedures. See, for example, \citet{zho} and \citet{pep} as general references. In principle, knowledge of the true disease status of the subjects under study is obtained by the most accurate available test, called gold standard (GS) test. In practice, there may be many drawbacks to implement the GS test, which can be too expensive, or too invasive, or both for regular use. Thus, often only a subset of patients undergoes disease verification and the decision to send a patient to verification typically depends on the test result and other patient characteristic. Statistical evaluations based only on data from subjects with verified disease status are typically biased. This bias is known as verification bias. 

In the last fifteen years, various methods have been developed to deal with the verification bias problem, most of which assume that the true disease status, if missing, is missing at random (MAR, \citet{lit}). Among the others, we cite the papers by \cite{adi:15}, \cite{alo:05}, \cite{flu}, \cite{gu} and \cite{rot}. 
In particular, \citet{alo:05} proposed four types of partially parametric estimators of TPR and TNR, i.e., full imputation (FI), mean score imputation (MSI), inverse probability weighting (IPW) and semiparametric efficient (SPE) estimators.

In some medical studies, however, the disease status often involves more than two categories; for example, Alzheimer's dementia can be classified into three categories (see \cite{chi} for more details). 
In such situations, quantities used to evaluate the accuracy of a diagnostic test are the true class fractions (TCF's). These are well defined as a generalization of TPR and TNR. In a three--class diagnostic problem, given a pair of cut points $(c_1,c_2)$,  with $c_1 < c_2$, subjects are classified into class 1 if $T < c_1$; class 2 if $c_1 \le T < c_2$; and class 3 otherwise. The true class fractions of the test $T$ at $(c_1,c_2)$  are defined as
\begin{eqnarray}
\TCF_1(c_1) &=& \Pro(T < c_1|\mathrm{class}\, 1) = 1 - \Pro(T \ge c_1|\mathrm{class}\, 1), \nonumber \\
\TCF_2(c_1,c_2) &=& \Pro(c_1 < T < c_2|\mathrm{class}\, 2) \nonumber \\
&=& \Pro(T \ge c_1|\mathrm{class}\, 2) - \Pro(T \ge c_2|\mathrm{class}\, 2), \nonumber \\
\TCF_3(c_2) &=& \Pro(T > c_2|\mathrm{class}\, 3) = \Pro(T \ge c_2|\mathrm{class}\, 3) \nonumber.
\end{eqnarray}
The plot of (TCF$_1$, TCF$_2$, TCF$_3$) by varying the pair $(c_1,c_2)$ produces the ROC surface of $T$ in the unit cube. \citet{scu} and \citet{nak:04} mentioned that a ROC surface is well defined as a generalization of the ROC curve. Indeed, the projection of ROC surface to the plane defined by TCF$_2$ versus TCF$_1$ yields the ROC curve between classes 1 and 2. Similarly, on projecting the  ROC surface to the plane defined by the axes TCF$_2$ and TCF$_3$, the ROC curve between classes 2 and 3 is produced (see also \cite{nak:14}). The ROC surface will be the triangular plane with vertices $(0, 0, 1), (0, 1, 0)$, and $(1, 0, 0)$ if all of three TCF's are equal for every pair $(c_1,c_2)$. In this case, we say that the diagnostic test is the random guess, again. In practice, one can imagine that the graph of ROC surface lies in the unit cube and above the plane of the triangle with three vertices $(0, 0, 1), (0, 1, 0)$, and $(1, 0, 0)$. A summary of the overall diagnostic accuracy of the test under consideration is the volume under the ROC surface (VUS) which can be seen as a generalization of the AUC. Reasonable values of VUS vary from 1/6 to 1, ranging from bad to perfect diagnostic tests.

\citet{nak:04} and \citet{nak:14} gave some interesting results about ROC surface analysis in absence of verification bias. In particular, the authors formularized the ROC surface by a functional form and proposed a nonparametric approach for VUS estimation. Again without verification bias, parametric estimation of VUS is supplied in the work of \citet{xio}, where the assumption of normality distribution was used, whereas \citet{li} tackled the nonparametric and semi-parametric estimation of the ROC surface. \citet{china} proposed a regression approach to ROC surface, and in \citet{kang} a kernel smoothing based approach for estimation of VUS is employed. 

The issue of correcting for the verification bias in ROC surface analysis is very scarcely considered in the statistical literature. Until now, only \citet{chi} discussed about the issue. The authors proposed  maximum likelihood estimates for ROC surface and VUS. However, these results only concern ordinal diagnostic tests. This motivated us to develop bias-corrected methods for continuous diagnostic tests with three--class disease status.

In this paper, we propose several verification bias-corrected estimators of TCF$_1$, TCF$_2$ and TCF$_3$ for continuous diagnostic tests. The proposed estimators are the extension of FI, MSI, IPW and SPE estimators for the ROC curves in \cite{alo:05}. The new estimators allow to obtain bias-corrected ROC surfaces. Corresponding estimators of the VUS are also presented. Consistency and asymptotic normality of the proposed estimators are established under the MAR assumption.
 
The rest of paper is organized as follows. In Section 2, we review the estimators of ROC curves discussed in \cite{alo:05}.  The proposed extension, giving bias-corrected estimators of the ROC surface and of VUS, is  presented in Section 3, along with the relevant asymptotic results. In Section 4, some simulation results are produced and two applications of the methods are contained in Section 5. Finally, conclusions are drawn in Section 6.

\section{Background}
\label{s:existing}
In this section, we review  the approaches presented in \cite{alo:05} for bias--corrected ROC analysis in two--class problems.
\subsection{Notation and assumption}
Let us consider a study with $n$ subjects, for whom the result of a continuous test $T$ is available.  The patient's true condition (or disease status), $D$, is defined by a GS test. $D$ is a binary variable, that is $0$ if the subject is healthy and $1$ in case of disease. Further, let $V$ be a binary verification status of a patient, such that $V = 1$ if he/she is underwent the GS test, and $V = 0$ otherwise. In practice, some information, other than the test results, can be obtained for each patient. Let $A$ be a covariate vector for a patient, that may be associated with both $D$ and $V$. For the sake of reader's convenience, without loss of generality, in what follows we will consider $A$ to be univariate.
We assume that the verification status $V$ and the response $D$ are mutually independent given the test result $T$ and covariate $A$, i.e.,  $\Pro (V|T,A) = \Pro (V|D,T,A)$ or, equivalently, $\Pro (D|T,A) = \Pro (D|V,T,A)$. This assumption corresponds to the MAR assumption.
\subsection{Bias correction for ROC curve}\label{bc:curve}
Let FPR = 1-TNR. When all subjects are verified by GS, we have a full (or complete) data set. For a given cut point $c$, TPR and FPR are
\begin{eqnarray}
\TPR(c) &=& \Pro(T \ge c|D = 1) = \frac{\Pro(T \ge c ,D = 1)}{\Pro(D = 1)} = \frac{\beta_1}{\theta}, \label{beta:theta} \\
\FPR(c) &=& \Pro(T \ge c|D = 0) = \frac{\Pro(T \ge c ,D = 0)}{\Pro(D = 0)} = \frac{\beta_0}{1 - \theta} \nonumber.
\end{eqnarray}
Then, one can employ the empirical estimators $\hat{\beta}_0,\hat{\beta}_1$ and $\hat{\theta}$ to obtain the nonparametric estimators of TPR and FPR 
\begin{eqnarray}\label{npfull}
\widehat{\TPR}(c) = \frac{\hat{\beta}_1}{\hat{\theta}} = \frac{\sum\limits_{i=1}^{n}\mathrm{I}(T_i \ge c)D_i}{\sum\limits_{i=1}^{n} D_i}, \,
\widehat{\FPR}(c) = \frac{\hat{\beta}_0}{1 - \hat{\theta}} = \frac{\sum\limits_{i=1}^{n}\mathrm{I}(T_i \ge c)(1-D_i)}{\sum\limits_{i=1}^{n} (1-D_i)},
\end{eqnarray}
where $\mathrm{I}(\cdot)$ is the indicator function. 

If not all patients have their disease status verified, the nonparametric estimators (\ref{npfull}) can not be computed. If one computes the Na\"{i}ve estimators, i.e., estimators (\ref{npfull}) based only on verified subjects, typically gets estimates that are 
biased and inconsistent.

\citet{alo:05} proposed four partially parametric estimators to assess  continuous diagnostic (or screening) tests under the MAR assumption. 
In particular, FI estimators of $\TPR(c)$ and $\FPR(c)$ are
\[
\widehat{\TPR}_{\mathrm{FI}}(c) = \frac{\hat{\beta}_{1,\mathrm{FI}}}{\hat{\theta}_{\mathrm{FI}}} =  \frac{\sum\limits_{i=1}^{n}\mathrm{I}(T_i \ge c)\hat{\rho}_i}{\sum\limits_{i=1}^{n} \hat{\rho}_i}, \,
\widehat{\FPR}_{\mathrm{FI}}(c) = \frac{\hat{\beta}_{0,\mathrm{FI}}}{1 - \hat{\theta}_{\mathrm{FI}}} = \frac{\sum\limits_{i=1}^{n}\mathrm{I}(T_i \ge c)(1-\hat{\rho}_i)}{\sum\limits_{i=1}^{n} (1-\hat{\rho}_i)}.
\]
Here, the estimates $\hat{\rho}_i$ of $\rho_{i} = \Pro(D_i =1 | T_i,A_i)$ are obtained by using some suitable parametric model (e.g., logistic regression model) computed from verified subjects. MSI estimators only imputes the disease status for subjects who did not undergo the GS, resulting to be  
\begin{eqnarray}
\widehat{\TPR}_{\mathrm{MSI}}(c) &=& \frac{\hat{\beta}_{1,\mathrm{MSI}}}{\hat{\theta}_{\mathrm{MSI}}} =  \frac{\sum\limits_{i=1}^{n}\mathrm{I}(T_i \ge c)\left\{V_i D_i + (1-V_i)\hat{\rho}_i\right\}}{\sum\limits_{i=1}^{n} \left\{V_i D_i + (1-V_i)\hat{\rho}_i\right\}}, \label{est:msi:curve} \\
\widehat{\FPR}_{\mathrm{MSI}}(c) &=& \frac{\hat{\beta}_{0,\mathrm{MSI}}}{1 - \hat{\theta}_{\mathrm{MSI}}} =  \frac{\sum\limits_{i=1}^{n}\mathrm{I}(T_i \ge c)\left\{V_i (1-D_i) + (1-V_i)(1-\hat{\rho}_i)\right\}}{\sum\limits_{i=1}^{n} \left\{V_i (1-D_i) + (1-V_i)(1-\hat{\rho}_i)\right\}}. \nonumber
\end{eqnarray}
IPW method weights each verified subject by the inverse of the conditional verification probability $\pi_i = \Pro (V_i = 1|T_i,A_i)$ (i.e. the probability that the subject is selected for verification). Therefore, the estimators are
\begin{eqnarray}
\widehat{\TPR}_{\mathrm{IPW}}(c) &=& \frac{\hat{\beta}_{1,\mathrm{IPW}}}{\hat{\theta}_{\mathrm{IPW}}} =  \frac{\sum\limits_{i=1}^{n}\mathrm{I}(T_i \ge c)V_i D_i \hat{\pi}_i^{-1}}{\sum\limits_{i=1}^{n} V_i D_i \hat{\pi}_i^{-1}}, \label{est:ipw:curve} \\
\widehat{\FPR}_{\mathrm{IPW}}(c) &=& \frac{\hat{\beta}_{0,\mathrm{IPW}}}{1 - \hat{\theta}_{\mathrm{IPW}}} = \frac{\sum\limits_{i=1}^{n}\mathrm{I}(T_i \ge c)V_i (1-D_i) \hat{\pi}_i^{-1}}{\sum\limits_{i=1}^{n} V_i (1-D_i) \hat{\pi}_i^{-1}}.\nonumber
\end{eqnarray}
Again, the estimates $\hat{\pi}_i$ need to be obtained by using parametric regression models such as logistic or probit models. Finally, SPE estimators are defined as:
\begin{align}
\widehat{\TPR}_{\mathrm{SPE}}(c) &= \frac{\hat{\beta}_{1,\mathrm{SPE}}}{\hat{\theta}_{\mathrm{SPE}}} =  \frac{\sum\limits_{i=1}^{n}\mathrm{I}(T_i \ge c)\left\{V_i D_i\hat{\pi}_i^{-1} - (V_i \hat{\pi}_i^{-1} - 1)\hat{\rho}_i \right\}} {\sum\limits_{i=1}^{n} \left\{V_i D_i\hat{\pi}_i^{-1} - (V_i \hat{\pi}_i^{-1} - 1)\hat{\rho}_i \right\}}, \label{est:spe:curve} \\
\widehat{\FPR}_{\mathrm{SPE}}(c) &= \frac{\hat{\beta}_{0,\mathrm{SPE}}}{1 - \hat{\theta}_{\mathrm{SPE}}} = \frac{\sum\limits_{i=1}^{n}\mathrm{I}(T_i \ge c)\left\{V_i (1-D_i) \hat{\pi}_i^{-1} - (V_i \hat{\pi}_i^{-1} - 1) (1 - \hat{\rho}_i)\right\}}{\sum\limits_{i=1}^{n} \left\{V_i (1-D_i) \hat{\pi}_i^{-1} - (V_i \hat{\pi}_i^{-1} - 1)(1-\hat{\rho}_i)\right\}}. \nonumber
\end{align}
Alonzo and Pepe \cite{alo:05} showed that SPE estimators are doubly robust because they are consistent if either 
the $\pi_i$'s or the $\rho_i$'s are consistently estimated.   However, it is worth noting that SPE estimates may not be range-respecting, i.e., they could fall outside the interval $(0,1)$. This happens because the quantities $\left\{V_i D_i\hat{\pi}_i^{-1} - (V_i \hat{\pi}_i^{-1} - 1)\hat{\rho}_i\right\}$ or $\left\{V_i (1 - D_i) \hat{\pi}_i^{-1} - (V_i \hat{\pi}_i^{-1} - 1)(1 - \hat{\rho}_i)\right\}$ can be negative.

For each of the above methods, an estimated bias-corrected ROC curve can be obtained by plotting $\widehat{\TPR}(c)$ versus $\widehat{\FPR}(c)$ for all cut points $c$.

\section{Proposal}
\label{s:extension}
Consider now a three--class problem. We model the  disease status by a trinomial random vector $D = (D_1,D_2,D_3)$, such that $D_{k}$ is a Bernoulli random variable having  mean $\theta_k = \Pro(D_k = 1)$, with $\theta_1 + \theta_2 + \theta_3 = 1$. Let $\beta_{jk} = \Pro(T \ge c_j, D_k = 1)$ with $j = 1,2$ and $k = 1,2,3$. In this notation,
\begin{eqnarray}
\TCF_1(c_1) &=& 1 - \frac{\Pro(T \ge c_1,D_1 = 1)}{\Pro(D_1 = 1)} = 1 - \frac{\beta_{11}}{\theta_1}, \nonumber \\
\TCF_2(c_1,c_2) &=& \frac{\Pro(T \ge c_1,D_2 = 1) - \Pro(T \ge c_2,D_2 = 1)}{\Pro(D_2 = 1)} = \frac{\beta_{12} - \beta_{22}}{\theta_2}, \nonumber \\
\TCF_3(c_2) &=& \frac{\Pro(T \ge c_2,D_3 = 1)}{\Pro(D_3 = 1)} = \frac{\beta_{23}}{\theta_3} \label{est:original}.
\end{eqnarray}

When all subjects are verified, the nonparametric estimators of $\TCF_1, \TCF_2$ and $\TCF_3$ are given by
\begin{align}
\widehat{\TCF}_1(c_1) &= 1 - \frac{\sum\limits_{i=1}^{n}\mathrm{I}(T_i \ge c_1)D_{1i}}{\sum\limits_{i=1}^{n}D_{1i}} \nonumber \\
\widehat{\TCF}_2(c_1,c_2) &= \frac{\sum\limits_{i=1}^{n}\left\{\mathrm{I}(T_i \ge c_1) - \mathrm{I}(T_i \ge c_2)\right\}D_{2i}}{\sum\limits_{i=1}^{n}D_{2i}}  \nonumber \displaybreak[1] \\
\widehat{\TCF}_3(c_2) &= \frac{\sum\limits_{i=1}^{n}\mathrm{I}(T_i \ge c_2)D_{3i}}{\sum\limits_{i=1}^{n}D_{3i}} \nonumber.
\end{align}

In the presence of verification bias, we propose  four estimators for $\TCF_1(c_1)$, $\TCF_2(c_1,c_2)$ and $\TCF_3(c_2).$  The proposed estimators work under the MAR assumption  
and are based on the estimation of the quantities $\theta_1, \theta_2, \beta_{11}, \beta_{12}, \beta_{22}$ and $\beta_{23}.$
They can be seen as an extension of estimators reviewed in Subsection \ref{bc:curve}. In expressions (\ref{beta:theta}) and (\ref{est:original}), we note that parameters $\theta$ and $\theta_k,$  so as  $\beta_1$ and $\beta_{jk},$   play, in essence, a similar role. Therefore, estimates of $\theta_k$  and $\beta_{jk}$  can be obtained by mimicking what was done in the two-class problem.
\subsection{Full imputation}\label{sec:FI}
For each $j = 1,2$ and $k = 1,2,3$, the FI estimators of $\theta_k$ and $\beta_{jk}$ are obtained as
\begin{eqnarray}
\hat{\theta}_{k,\mathrm{FI}} &=& \widehat{\Pro}(D_k = 1) = \frac{1}{n}\sum_{i=1}^{n}\hat{\rho}_{ki}, \label{est:fi1} \\
\hat{\beta}_{jk,\mathrm{FI}} &=& \widehat{\Pro}(T \ge c_j, D_k = 1) = \frac{1}{n}\sum_{i=1}^{n}\mathrm{I}(T_i \ge c_j)\hat{\rho}_{ki},\label{est:fi2}
\end{eqnarray}
where $\hat{\rho}_{ki}$ is an estimate of $\rho_{ki} = \Pro(D_{ki} = 1|T_i,A_i)$ given by some suitable model, such as the multinomial logistic or probit regression model, applied to
the verified sample units. Therefore, the FI estimator $\widehat{\TCF}_{1,\mathrm{FI}}(c_1)$, $\widehat{\TCF}_{2,\mathrm{FI}}(c_1,c_2)$ and $\widehat{\TCF}_{3,\mathrm{FI}}(c_2)$ are
\begin{align}
\widehat{\TCF}_{1,\mathrm{FI}}(c_1) &= 1 - \frac{\sum\limits_{i=1}^{n}\mathrm{I}(T_i \ge c_1)\hat{\rho}_{1i}}{\sum\limits_{i=1}^{n}\hat{\rho}_{1i}}  \nonumber, \displaybreak[1] \\
\widehat{\TCF}_{2,\mathrm{FI}}(c_1,c_2) &= \frac{\sum\limits_{i=1}^{n}\mathrm{I}(c_1 \le T_i < c_2)\hat{\rho}_{2i}}{\sum\limits_{i=1}^{n}\hat{\rho}_{2i}}, \nonumber \label{est:fi}  \\
\widehat{\TCF}_{3,\mathrm{FI}}(c_2) &= \frac{\sum\limits_{i=1}^{n}\mathrm{I}(T_i \ge c_2)\hat{\rho}_{3i}}{\sum\limits_{i=1}^{n}\hat{\rho}_{3i}}. \nonumber
\end{align}
It is worth noting that  estimates $\hat{\theta}_{k,\mathrm{FI}}$ and $\hat{\beta}_{jk,\mathrm{FI}}$ in (\ref{est:fi1}) and (\ref{est:fi2}) are the solutions of the estimating equations
\begin{equation}
\sum_{i=1}^{n}(\hat{\rho}_{ki} - \theta_k) = 0 \quad \text{ and } \quad \sum_{i=1}^{n}\left\{\mathrm{I}(T_i \ge c_j)\hat{\rho}_{ki} - \beta_{jk}\right\} = 0 \label{est_eq:fi2}.
\end{equation}
\subsection{Mean score imputation}\label{sec:MSI}
By inspection of (\ref{est:msi:curve}), we get the MSI estimators of $\theta_k$, $k = 1,2,3$, as follows 
\[
\hat{\theta}_{k,\mathrm{MSI}} = \widehat{\Pro}(D_k = 1) = \frac{1}{n}\sum_{i=1}^{n}\left[V_i D_{ki} + (1 - V_i) \hat{\rho}_{ki}\right]. 
\]
The estimators of $\beta_{jk}$ are given by
\[
\hat{\beta}_{jk,\mathrm{MSI}} = \widehat{\Pro}(T \ge c_j, D_k = 1) = \frac{1}{n}\sum_{i=1}^{n}\mathrm{I}(T_i \ge c_j)\left[V_i D_{ki} + (1 - V_i) \hat{\rho}_{ki}\right].
\]
Then, the 
MSI estimators of $\TCF_1(c_1)$, $\TCF_2(c_1, c_2)$ and $\TCF_3(c_3)$ are :
\begin{eqnarray}
\widehat{\TCF}_{1,\mathrm{MSI}}(c_1) &=& 1 - \frac{\sum\limits_{i=1}^{n}\mathrm{I}(T_i \ge c_1)\left[V_i D_{1i} + (1 - V_i) \hat{\rho}_{1i}\right]}{\sum\limits_{i=1}^{n}\left[V_i D_{1i} + (1 - V_i) \hat{\rho}_{1i}\right]}, \nonumber \\
\widehat{\TCF}_{2,\mathrm{MSI}}(c_1,c_2) &=&  \frac{\sum\limits_{i=1}^{n}\mathrm{I}(c_1 \le T_i < c_2)\left[V_i D_{2i} + (1 - V_i) \hat{\rho}_{2i}\right]}{\sum\limits_{i=1}^{n}\left[V_i D_{2i} + (1 - V_i) \hat{\rho}_{2i}\right]}, \nonumber \\
\widehat{\TCF}_{3,\mathrm{MSI}}(c_2) &=&  \frac{\sum\limits_{i=1}^{n}\mathrm{I}(T_i \ge c_2)\left[V_i D_{3i} + (1 - V_i) \hat{\rho}_{3i}\right]}{\sum\limits_{i=1}^{n}\left[V_i D_{3i} + (1 - V_i) \hat{\rho}_{3i}\right]}. \nonumber
\end{eqnarray}
Again, we can obtain $\hat{\theta}_k$ and $\hat{\beta}_{jk}$ as solution of the estimating equations 
\begin{align}
\sum_{i=1}^{n}\left\{V_i (D_{ki} - \theta_k) + (1 - V_i)(\hat{\rho}_{ki} - \theta_k)\right\} &= 0, \label{est_eq:msi1} \\
\sum_{i=1}^{n}\left\{V_i (\mathrm{I}(T_i \ge c_j)D_{ki} - \beta_{jk}) + (1 - V_i)(\mathrm{I}(T_i \ge c_j)\hat{\rho}_{ki} - \beta_{jk})\right\} &= 0. \label{est_eq:msi2}
\end{align}
\subsection{Inverse probability weighted}\label{sec:IPW}
From the IPW estimators of $\beta_1$ and $\theta$ in (\ref{est:ipw:curve}), we derive, by analogy,
\begin{eqnarray}
\hat{\theta}_{k,\mathrm{IPW}} &=& \widehat{\Pro}(D_{k} = 1) = \frac{\sum\limits_{i=1}^{n}V_i \hat{\pi}_i^{-1}D_{ki}}{\sum\limits_{i=1}^{n}V_i \hat{\pi}_i^{-1}},
\label{est:ipw2} \nonumber \\
\hat{\beta}_{jk,\mathrm{IPW}} &=& \widehat{\Pro}(T \ge c_j, D_k = 1) = \frac{\sum\limits_{i=1}^{n}\mathrm{I}(T_i \ge c_j) V_i \hat{\pi}_i^{-1}D_{ki}}{\sum\limits_{i=1}^{n}V_i \hat{\pi}_i^{-1}}. \nonumber
\end{eqnarray}
The estimates $\hat{\pi}_i$ are obtained in the same way as in the two-class case. 
Then, the IPW estimates $\widehat{\TCF}_{1,\mathrm{IPW}}(c_1), \widehat{\TCF}_{2,\mathrm{IPW}}(c_1,c_2)$ and $\widehat{\TCF}_{3,\mathrm{IPW}}(c_2)$ are
\begin{eqnarray}
\widehat{\TCF}_{1,\mathrm{IPW}}(c_1) &=&  1 - \frac{\sum\limits_{i=1}^{n}\mathrm{I}(T_i \ge c_1) V_i \hat{\pi}_i^{-1}D_{1i}}{\sum\limits_{i=1}^{n}V_i \hat{\pi}_i^{-1}D_{1i}} \nonumber, \\
\widehat{\TCF}_{2,\mathrm{IPW}}(c_1,c_2) &=&  \frac{\sum\limits_{i=1}^{n}\mathrm{I}(c_1 \le T_i < c_2)V_i \hat{\pi}_i^{-1}D_{2i}}{\sum\limits_{i=1}^{n}V_i \hat{\pi}_i^{-1}D_{2i}}, \nonumber \\
\widehat{\TCF}_{3,\mathrm{IPW}}(c_2) &=& \frac{\sum\limits_{i=1}^{n}\mathrm{I}(T_i \ge c_2) V_i \hat{\pi}_i^{-1}D_{3i}}{\sum\limits_{i=1}^{n}V_i \hat{\pi}_i^{-1}D_{3i}}, \nonumber
\end{eqnarray}
and the estimating equations  corresponding to $\hat{\theta}_{k,\mathrm{IPW}}$ and $\hat{\beta}_{jk,\mathrm{IPW}}$ are
\begin{eqnarray}
\sum_{i=1}^{n}V_i \hat{\pi}_i^{-1}\left(D_{ki} - \theta_k\right) &=& 0,
\label{est_eq:ipw1} \\
\sum_{i=1}^{n}V_i\hat{\pi}_i^{-1}\left(\mathrm{I}(T_i \ge c_j) D_{ki} - \beta_{jk}\right) &=& 0 \label{est_eq:ipw2}.
\end{eqnarray}
Note that the IPW estimators only use verified subjects.
\subsection{Semiparametric efficient}\label{sec:SPE}
Similarly to three previous cases, the SPE estimators of $\beta_{jk}$ and $\theta_k$ are derived in analogy to  $\hat{\beta}_{1,\mathrm{SPE}}$ and $\hat{\theta}_{\mathrm{SPE}}$ in (\ref{est:spe:curve}), i.e,
\begin{eqnarray}
\hat{\theta}_{k,\mathrm{SPE}} &=& \frac{1}{n}\sum_{i=1}^{n}\left\{ V_i D_{ki}\hat{\pi}_i^{-1} - \hat{\rho}_{ki}(V_i \hat{\pi}_i^{-1} - 1) \right\}, \label{est:spe1}\\
\hat{\beta}_{jk,\mathrm{SPE}} &=& \frac{1}{n}\sum_{i=1}^{n} \mathrm{I}(T_i \ge c_j) \left\{V_i D_{ki}\hat{\pi}_i^{-1} - \hat{\rho}_{ki}(V_i \hat{\pi}_i^{-1} - 1) \right\}. \label{est:spe2}
\end{eqnarray}
Therefore, we obtain
\begin{align}
\widehat{\TCF}_{1,\mathrm{SPE}}(c_1) &= 1 - \frac{\sum\limits_{i=1}^{n} \mathrm{I}(T_i \ge c_1) \left\{ V_i D_{1i}\hat{\pi}_i^{-1} - \hat{\rho}_{1i}(V_i \hat{\pi}_i^{-1} - 1) \right\}} {\sum\limits_{i=1}^{n} \left\{V_i D_{1i}\hat{\pi}_i^{-1} - \hat{\rho}_{1i}(V_i \hat{\pi}_i^{-1} - 1) \right\}} \nonumber, \\
\widehat{\TCF}_{2,\mathrm{SPE}}(c_1,c_2) &=  \frac{\sum\limits_{i=1}^{n} \mathrm{I}(c_1 \le T_i < c_2) \left\{ V_i D_{2i}\hat{\pi}_i^{-1} - \hat{\rho}_{2i}(V_i \hat{\pi}_i^{-1} - 1) \right\}}{\sum\limits_{i=1}^{n} \left\{V_i D_{2i}\hat{\pi}_i^{-1} - \hat{\rho}_{2i}(V_i \hat{\pi}_i^{-1} - 1) \right\}}, \nonumber \displaybreak[1] \\
\widehat{\TCF}_{3,\mathrm{SPE}}(c_2) &= \frac{\sum\limits_{i=1}^{n} \mathrm{I}(T_i \ge c_2) \left\{ V_i D_{3i}\hat{\pi}_i^{-1} - \hat{\rho}_{3i}(V_i \hat{\pi}_i^{-1} - 1) \right\}} {\sum\limits_{i=1}^{n} \left\{ V_i D_{3i}\hat{\pi}_i^{-1} - \hat{\rho}_{3i}(V_i \hat{\pi}_i^{-1} - 1) \right\}}. \nonumber
\end{align}
The estimates $\hat{\theta}_{k,\mathrm{SPE}}$ and $\hat{\beta}_{jk,\mathrm{SPE}}$ solve the  estimating equations
\begin{align}
\sum_{i=1}^{n}\left\{\frac{V_i}{\hat\pi_i}\left[\mathrm{I}(T_i \ge c_j)D_{ki} - \beta_{jk}\right] - \frac{V_i - \hat\pi_i}{\hat\pi_i}\left[\mathrm{I}(T_i \ge c_j)\hat\rho_{ki} - \beta_{jk}\right]\right\} &= 0, \label{est_eq:spe1} \\
\sum_{i=1}^{n}\left\{\frac{V_i}{\hat\pi_i}\left(D_{ki} - \theta_k\right) - \frac{V_i - \hat\pi_i}{\hat\pi_i}\left(\hat\rho_{ki} - \theta_k\right)\right\} &= 0. \label{est_eq:spe2}
\end{align}

\subsection{Asymptotic distribution theory}
The parameters of interest $\TCF_1(c_1), \TCF_2(c_1,c_2)$ and $\TCF_3(c_2)$ are functions of $\theta_1,$ $\theta_2,$ $\beta_{11},$ $\beta_{12},$ $\beta_{22},$ $\beta_{23}$ and  $\tau = (\tau_\rho,\tau_\pi)$, where $\tau$ is the vector of  parameters of the models used to estimate $\rho = (\rho_1,\rho_2)$, or $\pi$, or both. Let us denote  $\alpha = (\theta_1,\theta_2,\beta_{11}, \beta_{12},\beta_{22},\beta_{23},\tau)$. The estimators (FI, MSI, IPW, SPE) of $\alpha$ are obtained by solving suitable estimating equations. Hence, we use results in \cite{alo:03} and \cite{alo:05} to give consistency and asymptotic normality of the
proposed bias--corrected estimators. 

According to equations (\ref{est_eq:fi2}), (\ref{est_eq:msi1}), (\ref{est_eq:msi2}), (\ref{est_eq:ipw1}), (\ref{est_eq:ipw2}), (\ref{est_eq:spe1}) and (\ref{est_eq:spe2}), let $G_{*}^{\theta_s}(\alpha) =  \sum\limits_{i=1}^{n}g_{i,*}^{\theta_s} (\alpha)$ and $G_{*}^{\beta_{jk}}(\alpha) =  \sum\limits_{i=1}^{n}g_{i,*}^{\beta_{jk}}(\alpha)$ be the estimating functions for $\theta_s$ and $\beta_{jk},$ with $k = 1,2,3,$  $s$ and $j = 1,2,$ for one  of the four previously introduced approaches (the star indicates FI, MSI, IPW, SPE). We assume that $\hat{\tau}$ is the solution to a classic set of estimating equations of the form $G^{\tau}(\alpha) =  \sum\limits_{i=1}^{n}g_i^{\tau}(\alpha) = 0$. For example, such estimating equations could be those derived from a  multinomial logistic regression model for estimation of the disease process and from a logistic regression model for estimation of the verification process. The estimate $\hat{\alpha}_*$ of $\alpha$ is then obtained by solving $G_{*}(\alpha) = \sum\limits_{i=1}^{n}g_{i,*}(\alpha) = 0$, where $g_{i,*}(\alpha) = \left(g_{i,*}^{\theta_1}(\alpha),g_{i,*}^{\theta_2}(\alpha),g_{i,*}^{\beta_{11}}(\alpha),g_{i,*}^{\beta_{12}}(\alpha),g_{i,*}^{\beta_{22}}(\alpha),g_{i,*}^{\beta_{23}}(\alpha),g_i^{\tau}(\alpha)\right)^\top$. 
 
Let $\alpha_0 = (\theta_{10},\theta_{10},\beta_{110}, \beta_{120},\beta_{220},\beta_{230},\tau_0)$ be  the true value of $\alpha.$
We assume that
\begin{enumerate}[({A}1)]
\item $D$ is missing at random (MAR);
\item the data $(D_i,T_i,A_i,V_i)$ are iid;
\item $(T,A)$ is a bounded random vector;
\item $\E \left[\frac{\partial}{\partial \alpha} g_{i,*}(\alpha_0)\right]$ is negative definite;
\item[(A5)] $\rho_{ki}$ and $\pi_i$ are bounded away from $0$.
\end{enumerate}
We consider also the following standard regularity conditions.
\begin{enumerate}[({C}1)]
\item $g_{i,*}(\alpha_0)$ are iid and $\E \left\{g_{i,*}(\alpha_0)\right\} = 0$.
\item Elements of $G_{*}(\alpha)$, $\frac{\partial}{\partial \alpha}G_{*}(\alpha)$, and $\frac{\partial^2}{\partial \alpha \partial \alpha^\top}G_{*}(\alpha)$ exist in a bounded $\delta$-neighborhood of $\alpha_0$, $N_{\delta}(\alpha_0)$.
\item $g_{i,*}(\alpha)$, $\frac{\partial}{\partial \alpha}g_{i,*}(\alpha)$, and $\frac{\partial^2}{\partial \alpha \partial \alpha^\top}g_{i,*}(\alpha)$ are uniformly bounded in $N_{\delta}(\alpha_0)$.
\end{enumerate}
Under the assumptions (A1)--(A5) and conditions (C1)--(C3), we obtain the asymptotic results summarized in the following theorem.

\begin{thm}\label{theo:cons}
Let $\TCF_{10}(c_1), \TCF_{20}(c_1,c_2), \TCF_{30}(c_2)$ be the true parameter values.
The FI, MSI, IPW or SPE bias-corrected estimators   $\widehat{\TCF}_{1,*}(c_1)$, $\widehat{\TCF}_{2,*}(c_1,c_2)$ and $\widehat{\TCF}_{3,*}(c_2)$  are consistent. Furthermore, 
\begin{equation}
\sqrt{n} \left[\left(\begin{array}{c}
\widehat{\TCF}_{1,*}(c_1) \\ \widehat{\TCF}_{2,*}(c_1,c_2) \\ \widehat{\TCF}_{3,*}(c_2)
\end{array}\right) - 
\left(\begin{array}{c}
\TCF_{10}(c_1) \\ \TCF_{20}(c_1,c_2) \\ \TCF_{30}(c_2)
\end{array}\right) \right]
\stackrel{d}{\to} {\cal N}_3 \left(\boldsymbol{0}, \Xi \right),
\label{asym:2}
\end{equation}
where 
\[
\Xi = \frac{\partial h(\alpha_0)}{\partial \alpha} {\Sigma} \frac{\partial h^\top(\alpha_0)}{\partial \alpha},
\]
with 
$h(\alpha) = \left(1 - \frac{\beta_{11}}{\theta_1}, \frac{\beta_{12} - \beta_{22}}{\theta_2}, \frac{\beta_{23}}{1 - (\theta_1 + \theta_2)}\right)^\top$ and
\[
\Sigma = \left[\E \left\{\frac{\partial}{\partial \alpha}g_{i,*}(\alpha_0)\right\}\right]^{-1} \Cov\{g_{i,*}(\alpha_0)\}\left[\E \left\{\frac{\partial}{\partial \alpha}g_{i,*}^\top(\alpha_0)\right\}\right]^{-1}.
\]
\end{thm}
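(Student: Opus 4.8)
The plan is to realize $\hat\alpha_*$ as a standard $Z$-estimator (an M-estimator defined through estimating equations) and then transport its limit theory to the three bias--corrected class fractions via the delta method, exploiting that $\big(\widehat{\TCF}_{1,*}(c_1),\widehat{\TCF}_{2,*}(c_1,c_2),\widehat{\TCF}_{3,*}(c_2)\big)^\top = h(\hat\alpha_*)$ for the map $h$ displayed in the statement. First I would record that, for each fixed pair $(c_1,c_2)$ and each of the four methods, the stacked estimating function $g_{i,*}(\alpha)$ — whose six leading components $g_{i,*}^{\theta_s},\,g_{i,*}^{\beta_{jk}}$ are read off from the estimating equations (\ref{est_eq:fi2}), (\ref{est_eq:msi1})--(\ref{est_eq:msi2}), (\ref{est_eq:ipw1})--(\ref{est_eq:ipw2}) and (\ref{est_eq:spe1})--(\ref{est_eq:spe2}), stacked on top of the nuisance block $g_i^{\tau}$ — is a smooth function of $\alpha$ on $N_\delta(\alpha_0)$: the parameters $\theta_s$ and $\beta_{jk}$ enter linearly, $\tau$ enters only through $\hat\rho_{ki}(\tau)$ and $\hat\pi_i(\tau)$, which are smooth under the usual logistic or probit links, and the indicators $\mathrm{I}(T_i\ge c_j)$ are constants in $\alpha$. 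Together with (C1)--(C3), this places us within the framework of \cite{alo:03,alo:05}.

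\textbf{Consistency and asymptotic normality.} Consistency of $\hat\alpha_*$ follows from (C1)--(C3) and the nonsingularity of the limiting Jacobian $\E\{\partial g_{i,*}(\alpha_0)/\partial\alpha\}$ guaranteed by (A4), while (A3) and (A5) ensure the relevant moments are finite (boundedness of $(T,A)$ and of $\rho_{ki},\pi_i$ away from $0$). Since $h$ is continuously differentiable at $\alpha_0$ — note that $\theta_{10},\theta_{20}$ and $\theta_{30}=1-\theta_{10}-\theta_{20}$ are all strictly positive in a genuine three--class problem, so the denominators of $h$ do not vanish — the continuous mapping theorem gives consistency of $h(\hat\alpha_*)$, i.e. of $\widehat{\TCF}_{1,*},\widehat{\TCF}_{2,*},\widehat{\TCF}_{3,*}$. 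For the limiting law I would Taylor--expand $0=G_*(\hat\alpha_*)$ about $\alpha_0$, bound the second--order remainder by $O_p(\|\hat\alpha_*-\alpha_0\|^2)=o_p(n^{-1/2})$ using (C2)--(C3), apply the law of large numbers to $n^{-1}\partial G_*(\alpha_0)/\partial\alpha\to\E\{\partial g_{i,*}(\alpha_0)/\partial\alpha\}$ and the multivariate CLT to $n^{-1/2}G_*(\alpha_0)$ (legitimate by (A2) and (C1), which make the summands iid, mean zero, with finite covariance), obtaining $\sqrt n(\hat\alpha_*-\alpha_0)\stackrel{d}{\to}\mathcal N(0,\Sigma)$ with the sandwich matrix $\Sigma$ of the statement. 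The delta method applied to $h$ then yields $\sqrt n\{h(\hat\alpha_*)-h(\alpha_0)\}\stackrel{d}{\to}\mathcal N_3(0,\Xi)$ with $\Xi=\frac{\partial h(\alpha_0)}{\partial\alpha}\Sigma\frac{\partial h^\top(\alpha_0)}{\partial\alpha}$, and since $h(\alpha_0)=(\TCF_{10}(c_1),\TCF_{20}(c_1,c_2),\TCF_{30}(c_2))^\top$ by (\ref{est:original}), this is exactly (\ref{asym:2}).

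\textbf{Main obstacle.} The step requiring the most care is not the delta--method machinery but the verification of (C1) — that $\E\{g_{i,*}(\alpha_0)\}=0$ — for each of the four methods, because this is precisely where the MAR assumption (A1) and correct specification of the working models enter: for FI and MSI one needs the disease model $\rho$ correctly specified, for IPW the verification model $\pi$, whereas the SPE estimating function has mean zero whenever at least one of the two is correct, reflecting the double robustness inherited from \cite{alo:05}. I would therefore devote the bulk of the argument to writing $g_{i,*}(\alpha_0)$ explicitly in the four cases and checking, by iterated expectations conditioning on $(T_i,A_i)$ and invoking (A1), that each component has zero mean; once this is established, the remaining steps are the routine $Z$-estimator computations sketched above.
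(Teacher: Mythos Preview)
Your proposal is correct and follows essentially the same route as the paper: the paper invokes Theorems~1 and~2 of \cite{alo:03} to obtain consistency and asymptotic normality of $\hat\alpha_*$ with sandwich covariance $\Sigma$, then applies the multivariate delta method through $h$ to arrive at (\ref{asym:2}); the bulk of the work, namely checking (C1)--(C3) for each of FI, MSI, IPW and SPE via iterated expectation under MAR, is deferred to Appendix~\ref{s:asymptotic}, exactly as you identify. Your sketch of the $Z$-estimator argument (Taylor expansion, LLN on the Jacobian, CLT on $n^{-1/2}G_*(\alpha_0)$) simply unpacks what \cite{alo:03} packages, so there is no substantive difference in approach.
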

\begin{proof}
We apply Theorem 1 and Theorem 2 of \cite{alo:03}. Under assumptions (A1)--(A5) and conditions (C1)--(C3), $\hat{\alpha}_*$ 
is consistent and $\sqrt{n}\left(\hat{\alpha}_* - \alpha_0\right) \stackrel{d}{\to} \mathcal{N}(\boldsymbol{0}, \Xi)$. Thus, the estimators $\widehat{\TCF}_{1,*}(c_1) = 1 - \hat{\beta}_{11}/\hat{\theta}_1, \widehat{\TCF}_{2,*}(c_1,c_2)$ $= (\hat{\beta}_{12} - \hat{\beta}_{22})/\hat{\theta}_2$ and $\widehat{\TCF}_{3,*}(c_2) = \hat{\beta}_{23}/(1 - (\hat{\theta}_1 + \hat{\theta}_2))$ are consistent for the true $\TCF_{10}(c_1)$, $ \TCF_{20}(c_1,c_2)$ and $\TCF_{30}(c_2)$ and, by and application of the multivariate delta method, result (\ref{asym:2}) follows. In Appendix \ref{s:asymptotic}, we check conditions (C1)--(C3) for each estimator, i.e., FI, MSI, IPW and SPE, under assumptions (A1)--(A5). This is done when a multinomial logistic regression model is used for the estimation of the disease process and a logistic regression model or a probit model is used for the estimation of the verification process.
\end{proof}

The above theorem gives a general result for all estimates, i.e., FI, MSI, IPW and SPE. In Appendix, the explicit form of the  asymptotic variance--covariance matrix is obtained.
In practice, the variance--covariance matrix $\Sigma$ is replaced by a consistent estimate $\hat{\Sigma}$  
\[
\hat{\Sigma} = n \left[\sum_{i=1}^{n}\frac{\partial}{\partial \alpha}g_{i,*}(\hat{\alpha})\right]^{-1} \left[\sum_{i=1}^{n}g_{i,*}(\hat{\alpha})g_{i,*}(\hat{\alpha})^\top\right]\left[\sum_{i=1}^{n}\frac{\partial}{\partial \alpha}g_{i,*}^\top(\hat{\alpha})\right]^{-1}.
\]

It is worth noting that SPE estimators of $\theta_k$ and $\beta_{jk}$ in (\ref{est:spe1}) and (\ref{est:spe2}), will inherit the double robustness property of $\hat{\theta}_{\mathrm{SPE}}$ and $\hat{\beta}_{1,\mathrm{SPE}}$ in (\ref{est:spe:curve}). That is, $\hat{\theta}_{k,\mathrm{SPE}}$ and $\hat{\beta}_{jk,\mathrm{SPE}}$ remain consistent if only one of  the disease model $P(D_k = 1|T,A)$ or the verification model $P(V = 1|T,A)$ is correctly specified in the estimation process; they are inconsistent if both models are misspecified. Clearly, this property holds also for the estimators $\widehat{\TCF}_{1,\mathrm{SPE}}(c_1), \widehat{\TCF}_{2,\mathrm{SPE}}(c_1,c_2)$ and $\widehat{\TCF}_{3,\mathrm{SPE}}(c_2)$.

\subsection{VUS estimation}\label{s:vus}

Let $\mu$ be the volume under the ROC surface (VUS) of $T$. A straightforward calculation (\citet{nak:04}) shows that 
\begin{eqnarray}
\mu &=& \Pro \left(T_i < T_\ell < T_r|D_{1i} = 1,D_{2\ell} = 1,D_{3r} = 1\right) \nonumber \\
&& + \: \frac{1}{2} \Pro\left(T_i < T_\ell = T_r|D_{1i} = 1,D_{2\ell} = 1,D_{3r} = 1\right) \nonumber\\
&& + \: \frac{1}{2} \Pro\left(T_i = T_\ell < T_r|D_{1i} = 1,D_{2\ell} = 1,D_{3r} = 1\right) \nonumber\\
&& + \: \frac{1}{6} \Pro\left(T_i = T_\ell = T_r|D_{1i} = 1,D_{2\ell} = 1,D_{3r} = 1\right) \nonumber
\end{eqnarray}
or, equivalently,
\begin{eqnarray}
\mu &=& \frac{\E \left(D_{1i}D_{2\ell}D_{3r}\I_{i\ell r}\right)}{\E \left(D_{1i}D_{2\ell}D_{3r}\right)},  \nonumber
\label{org:vus}
\end{eqnarray}
where $\I_{i\ell r} = \I(T_i < T_\ell < T_r) + \frac{1}{2}\I(T_i < T_\ell = T_r) + \frac{1}{2}\I(T_i = T_\ell < T_r) + \frac{1}{6}\I(T_i = T_\ell = T_r)$. 
Then, in the absence of missing data, a natural nonparametric estimator $\hat{\mu}$ of $\mu$ is given by
\begin{equation}
\hat{\mu} = \frac{\sum\limits_{i=1}^{n}\sum\limits_{\ell = 1, \ell \ne i}^{n} \sum\limits_{\stackrel{r=1}{r \ne \ell, r\ne i}}^{n}\I_{i\ell r}D_{1i}D_{2\ell}D_{3r}}{\sum\limits_{i=1}^{n}\sum\limits_{\ell = 1, \ell \ne i}^{n} \sum\limits_{\stackrel{r=1}{r \ne \ell, r\ne i}}^{n} D_{1i}D_{2\ell}D_{3r}}.
\label{nonp:vus}
\end{equation}

When the disease status is missing for some of the subjects, verification bias--corrected estimators of VUS,  can be obtained by using suitable estimates of quantities  $D_{1i}, D_{2i}$ and $D_{3i}$ in (\ref{nonp:vus}).
More precisely,  FI, MSI, IPW and SPE estimators of VUS take the form
\begin{eqnarray}
\hat{\mu}_{\mathrm{*}} &= &\frac{\sum\limits_{i=1}^{n}\sum\limits_{\ell = 1, \ell \ne i}^{n} \sum\limits_{\stackrel{r=1}{r \ne \ell, r\ne i}}^{n}\I_{i\ell r}\tilde{D}_{1i,\mathrm{*}}\tilde{D}_{2\ell,\mathrm{*}}\tilde{D}_{3r,\mathrm{*}}}{\sum\limits_{i=1}^{n}\sum\limits_{\ell = 1, \ell \ne i}^{n} \sum\limits_{\stackrel{r=1}{r \ne \ell, r\ne i}}^{n} \tilde{D}_{1i,\mathrm{*}}\tilde{D}_{2\ell,\mathrm{*}}\tilde{D}_{3r,\mathrm{*}}},\nonumber
\label{msi:vus}
\end{eqnarray}
where the star again stands for FI, MSI, IPW, SPE, and
\[
\tilde{D}_{ki,\mathrm{FI}}=\hat{\rho}_{ki}, \ \ \
\tilde{D}_{ki,\mathrm{MSI}}=V_i D_{ki} + (1-V_i)\hat{\rho}_{ki}, \ \ \
\tilde{D}_{ki,\mathrm{IPW}}=V_i D_{ki} \hat{\pi}_{i}^{-1},
\]
\[
\tilde{D}_{ki,\mathrm{SPE}} = V_i D_{ki} \hat{\pi}_i^{-1} - \hat{\rho}_{ki}(V_i\hat{\pi}_i^{-1} - 1)  \qquad   (k = 1,2,3).
\]

As for the estimators of the TCFs, under the MAR assumption and certain suitable regularity conditions, we can establish consistency and asymptotic normality of the above given bias-corrected VUS estimators (proof available from the authors). 
Moreover, the asymptotic variance of $\hat{\mu}_*$, i.e. the variance of $\sqrt{n}\left(\hat{\mu}_{*} - \mu_0 \right),$ 
can be consistently estimated by
\[
\frac{\frac{1}{n - 1} \sum\limits_{i=1}^{n}\hat{Q}_i^2(\hat{\mu}_{*},\hat{\tau})}{\hat{\theta}_{1,*}^2 \hat{\theta}_{2,*}^2 \hat{\theta}_{3,*}^2},
\]
where 
\begin{align}
\hat{Q}_i (\hat{\mu}_{*},\hat{\tau}) = & - \left\{\frac{1}{(n-1)(n-2)}\sum_{i=1}^{n}\sum_{\stackrel{\ell=i}{\ell \ne i}}^{n}\sum_{\stackrel{r = 1}{r \ne \ell, r\ne i}}^{n} \frac{\partial G_{i\ell r,*}(\hat{\mu}_{*},\tau_\rho,\hat{\tau}_\pi)}{\partial \tau_\rho^\top}\bigg|_{\tau_\rho = \hat{\tau}_\rho}\right\} \nonumber \\
&  \times \: \left\{\sum_{i=1}^{n}\frac{\partial g_i^{\tau_\rho}}{\partial \tau_\rho}\bigg|_{\tau_\rho = \hat{\tau}_\rho}\right\}^{-1} g_i^{{\tau}_\rho}\bigg|_{\tau_\rho = \hat{\tau}_\rho} \nonumber \\
&  - \: \left\{\frac{1}{(n-1)(n-2)}\sum_{i=1}^{n}\sum_{\stackrel{\ell=i}{\ell \ne i}}^{n}\sum_{\stackrel{r = 1}{r \ne \ell, r\ne i}}^{n} \frac{\partial G_{i\ell r,*}(\hat{\mu}_{*},\hat{\tau}_\rho,\tau_\pi)}{\partial \tau_\pi^\top}\bigg|_{\tau_\pi = \hat{\tau}_\pi}\right\} \nonumber \\
& \times \: \left\{\sum_{i=1}^{n}\frac{\partial g_i^{\tau_\pi}}{\partial \tau_\pi}\bigg|_{\tau_\pi = \hat{\tau}_\pi}\right\}^{-1} g_i^{{\tau}_\pi}\bigg|_{\tau_\pi = \hat{\tau}_\pi} \nonumber\\
& + \: \frac{1}{(n-1)(n-2)} \sum_{\stackrel{\ell=1}{\ell \ne i}}^{n} \sum_{\stackrel{r=1}{r \ne i, r \ne \ell}}^{n}\bigg\{ G_{i\ell r,*}(\hat{\mu}_{*},\hat{\tau}_\rho,\hat{\tau}_\pi) + G_{\ell ir,*}(\hat{\mu}_{*},\hat{\tau}_\rho,\hat{\tau}_\pi) \nonumber \\
& + \: G_{r\ell i,*}(\hat{\mu}_{*},\hat{\tau}_\rho,\hat{\tau}_\pi)\bigg\}, \label{asy:var:Qi}
\end{align}
with
\begin{align}
G_{i\ell r,\mathrm{FI}}(\mu,\tau_\rho,\tau_\pi) &= \rho_{1i}(\tau_\rho)\rho_{2\ell}(\tau_\rho)\rho_{3r}(\tau_\rho)\left(I_{i\ell r} - \mu\right), \nonumber \\
G_{i\ell r,\mathrm{MSI}}(\mu,\tau_\rho,\tau_\pi) &= D_{1i,\mathrm{MSI}}(\tau_\rho)D_{2\ell,\mathrm{MSI}}(\tau_\rho)D_{3r,\mathrm{MSI}}(\tau_\rho)\left(I_{i\ell r} - \mu\right), \nonumber \\
G_{i\ell r,\mathrm{IPW}}(\mu,\tau_\rho,\tau_\pi) &= \frac{V_iV_\ell V_r}{\pi_i(\tau_\pi)\pi_{\ell}(\tau_\pi)\pi_r(\tau_\pi)} D_{1i}D_{2\ell}D_{3r} \left(I_{i\ell r} - \mu\right), \nonumber \\
G_{i\ell r,\mathrm{SPE}}(\mu,\tau_\rho,\tau_\pi) &= D_{1i,\mathrm{SPE}}(\tau_\rho,\tau_\pi)D_{2\ell,\mathrm{SPE}}(\tau_\rho,\tau_\pi)D_{3r,\mathrm{SPE}}(\tau_\rho,\tau_\pi)\left(I_{i\ell r} - \mu\right), \nonumber
\end{align}
and
\begin{eqnarray}
D_{ki,\mathrm{MSI}}(\tau_\rho) &=& V_i D_{ki} + (1-V_i)\rho_{ki}(\tau_\rho), \nonumber \\
D_{ki,\mathrm{SPE}}(\tau_\rho,\tau_\pi) &=& V_i D_{ki} \pi^{-1}_i(\tau_\pi) - \rho_{ki}(\tau_\rho)(V_i\pi^{-1}_i(\tau_\pi) - 1), \nonumber
\end{eqnarray}
for $k = 1,2,3$. In (\ref{asy:var:Qi}), the functions $g^{\tau_\rho}_i(\cdot)$ and $g_i^{\tau_\pi}(\cdot)$ are the elements of the functions $g_i^\tau(\cdot)$ in the estimating function $G^\tau(\cdot)$ for the parameters of the models adopted for the disease and the verification processes. See the Appendix \ref{s:asymptotic} for their specification when the models chosen are, the multinomial logistic regression and the logistic (or probit) model, respectively.

\section{Simulation studies}
\label{s:simulation}
In this section, the ability of FI, MSI, IPW and SPE methods to estimate TCF$_1$, TCF$_2$ and TCF$_3$ are evaluated by using Monte Carlo experiments. Also, the square root of the estimates of the variances are compared with Monte Carlo and bootstrap standard deviations. Some simulation results concerning the behaviour of the estimators for the VUS are given in Appendix \ref{app:simu:vus}.

Note that, the bias-corrected estimators of TCF$_1$, TCF$_2$ and TCF$_3$ require a parametric regression model to estimate $\rho_{ki} = \Pro(D_{ki} = 1|T_i, A_i)$, or $\pi_i = \Pro(V_i = 1 | T_i,A_i)$, or both. A wrong specification of such models may affect the estimation. Therefore, in the simulation study we consider four scenarios:
\begin{enumerate}[(i)]
\item the disease model and the verification model are both correctly specified;
\item the verification model is misspecified;
\item the disease model is misspecified;
\item the disease model and the verification model are both misspecified.
\end{enumerate}
All scenarios allow to evaluate the behavior of the proposed estimators in finite samples. In practice, we consider $5000$ Monte Carlo replications, and three sample sizes, i.e., $250$, $500$ and $1000$ in scenario (i) and a sample size equal to $1000$ in scenarios (ii)--(iv). The choice of such sample size in scenarios (ii)--(iv) allows to dig up expected bad behaviors of the estimators under misspecification,  when a great amount of information is available, i.e., in large samples.

\subsection{Study 1}
The true disease $D$ is generated by a trinomial random vector $(D_{1},D_{2},D_{3})$, such that $D_{k}$ is a Bernoulli random variable with mean $\theta_k$, $k=1,2,3$. We set $\theta_1 = 0.4, \theta_2 = 0.35$ and $\theta_3 = 0.25$. The continuous test results $T$ and $A$ are generated from the following conditional models
\[
T,A |D_{k} \sim \mathcal{N}_2 \left(\mu_k, \Lambda\right), \qquad k = 1,2,3,
\]
where $\mu_k = (2 k, k)^\top.$ We consider three different values for $\Lambda$, specifically 
\[
\left(\begin{array}{c c}
1.75 & 0.1 \\
0.1 & 2.5
\end{array}\right) , \qquad 
\left(\begin{array}{c c}
2.5 & 1.5 \\
1.5 & 2.5
\end{array}\right) , \qquad 
\left(\begin{array}{c c}
5.5 & 3 \\
3 & 2.5
\end{array}\right),
\]
giving rise to a correlation between $T$ and $A$ equal to $0.36, 0.69$ and $0.84$, respectively.

In this scenario -and also in the next one- we consider six pairs for cut points $(c_1,c_2)$, i.e., $(2,4), (2,5)$, $(2,7), (4,5), (4,7)$ and $(5,7)$. Since the conditional distribution of $T$ given $D_{k}$ is the normal distribution, the true values of TCF's are obtained as
\begin{eqnarray}
{\TCF}_{1}(c_1) &=& \Phi \left(\frac{c_1 - 2}{\sigma_{T|D}}\right) \nonumber, \\
{\TCF}_{2}(c_1,c_2) &=& \Phi \left(\frac{c_2 - 4}{\sigma_{T|D}}\right) - \Phi \left(\frac{c_1 - 4}{\sigma_{T|D}}\right), \nonumber\\
{\TCF}_{3}(c_2) &=& 1 - \Phi \left(\frac{c_2 - 6}{\sigma_{T|D}}\right), \nonumber
\end{eqnarray}
where $\sigma_{T|D}$ denotes the entry in the 1-st row and 1-st column of $\Lambda$ and  $\phi(\cdot)$ and $\Phi(\cdot)$ are the density function and the cumulative distribution function of the standard normal random variable, respectively.

Under our data--generating process, the true conditional disease model is a multinomial logistic model
\[
\mathrm{Pr}(D_{k} = 1|T,A) = \frac{\exp \left(\tau_{\rho_{1k}} + \tau_{\rho_{2k}} T + \tau_{\rho_{3k}} A\right)}{1 + \exp \left(\tau_{\rho_{11}} + \tau_{\rho_{21}} T + \tau_{\rho_{31}} A\right) + \exp \left(\tau_{\rho_{12}} + \tau_{\rho_{22}} T + \tau_{\rho_{32}} A\right)},
\]
for suitable $\tau_{\rho_{1k}},\tau_{\rho_{2k}},\tau_{\rho_{3k}}$, where $k = 1,2$. The verification status $V$ is generated by the following model
\[
\mathrm{logit}\left\{\Pro(V = 1|T,A)\right\} = 0.5  - 0.3 T + 0.75 A.
\]
This choice corresponds to a verification rate of about $0.65$. In this study, the FI, MSI, IPW and SPE estimators are computed under correct working models for both the disease and the verification status. Therefore, in particular, the conditional verification probabilities $\pi_i$ are estimated from a logistic model for $V$ given $T$ and $A$. 

Tables \ref{scen1:1:250}--\ref{scen1:3:250} and Tables \ref{scen1:1:500}--\ref{scen1:3:1000} in Appendix \ref{app:simu:roc} show Monte Carlo means, Monte Carlo standard deviations (MC.sd), the square roots of the variance estimated via asymptotic results (asy.sd) and bootstrap standard deviations (boot.sd) of $\widehat{\TCF}_1$, $\widehat{\TCF}_2$ and $\widehat{\TCF}_3$. Here and in the following bootstrap estimates are obtained from 250 bootstrap replications. Overall, the estimators FI, MSI, IPW and SPE  behave similarly in this scenario, with  the IPW estimator showing a slightly bigger standard deviation. Simulation results, in this and in the following scenarios, also show that, excluding the SPE approach, bootstrap estimates of standard deviations are generally more accurate than estimates obtained via asymptotic theory. 

\begin{sidewaystable}
\begin{center}
\caption{Simulation results from 5000 replications when both models for $\rho_{k}$ and $\pi$ are correctly specified (Study 1) and the first value of $\Lambda$ is considered. ``True'' denotes the true parameter value. Sample size = 250.}
\label{scen1:1:250}
\begin{scriptsize}
\begin{tabular}{r c c c c c c c c c c c c}
  \hline
 & TCF$_1$ & TCF$_2$ & TCF$_3$ & MC.sd$_1$ & MC.sd$_2$ & MC.sd$_3$ & asy.sd$_1$ & asy.sd$_2$ & asy.sd$_3$ & boot.sd$_1$ & boot.sd$_2$ & boot.sd$_3$\\
 \hline 
 & \multicolumn{12}{c}{cut-point = (2,4)} \\
True  & 0.5000 & 0.4347 & 0.9347 &  &  &  &  &  &  &  &  &  \\ 
  FI & 0.5008 & 0.4357 & 0.9342 & 0.0529 & 0.0472 & 0.0272 & 0.0486 & 0.0440 & 0.0512 & 0.0530 & 0.0488 & 0.0276 \\ 
  MSI & 0.5007 & 0.4353 & 0.9341 & 0.0544 & 0.0536 & 0.0318 & 0.0500 & 0.0501 & 0.0538 & 0.0542 & 0.0542 & 0.0324 \\ 
  IPW & 0.5017 & 0.4352 & 0.9341 & 0.0714 & 0.0721 & 0.0371 & 0.0687 & 0.0697 & 0.0398 & 0.0704 & 0.0711 & 0.0373 \\ 
  SPE & 0.5008 & 0.4352 & 0.9343 & 0.0574 & 0.0648 & 0.0364 & 0.0562 & 0.0632 & 0.0340 & 0.0596 & 0.1497 & 0.0425 \\
  \hline
  & \multicolumn{12}{c}{cut-point = (2,5)} \\
  True  & 0.5000 & 0.7099 & 0.7752 &  &  &  &  &  &  &  &  &  \\ 
  FI & 0.5008 & 0.7122 & 0.7756 & 0.0529 & 0.0464 & 0.0537 & 0.0486 & 0.0454 & 0.0618 & 0.0530 & 0.0467 & 0.0533 \\ 
  MSI & 0.5007 & 0.7112 & 0.7747 & 0.0544 & 0.0511 & 0.0568 & 0.0500 & 0.0503 & 0.0644 & 0.0542 & 0.0514 & 0.0563 \\ 
  IPW & 0.5017 & 0.7123 & 0.7739 & 0.0714 & 0.0683 & 0.0666 & 0.0687 & 0.0658 & 0.0704 & 0.0704 & 0.0677 & 0.0655 \\ 
  SPE & 0.5008 & 0.7116 & 0.7751 & 0.0574 & 0.0619 & 0.0630 & 0.0562 & 0.0597 & 0.0603 & 0.0596 & 0.1219 & 0.1033 \\
   \hline
  & \multicolumn{12}{c}{cut-point = (2,7)} \\
  True  & 0.5000 & 0.9230 & 0.2248 &  &  &  &  &  &  &  &  &  \\ 
  FI & 0.5008 & 0.9231 & 0.2229 & 0.0529 & 0.0236 & 0.0520 & 0.0486 & 0.0327 & 0.0437 & 0.0530 & 0.0243 & 0.0525 \\ 
  MSI & 0.5007 & 0.9230 & 0.2230 & 0.0544 & 0.0285 & 0.0530 & 0.0500 & 0.0361 & 0.0447 & 0.0542 & 0.0287 & 0.0534 \\ 
  IPW & 0.5017 & 0.9234 & 0.2216 & 0.0714 & 0.0376 & 0.0748 & 0.0687 & 0.0341 & 0.0706 & 0.0704 & 0.0368 & 0.0727 \\ 
  SPE & 0.5008 & 0.9234 & 0.2236 & 0.0574 & 0.0361 & 0.0571 & 0.0562 & 0.0334 & 0.0559 & 0.0596 & 0.0474 & 0.4185 \\
  \hline
  & \multicolumn{12}{c}{cut-point = (4,5)} \\
  True & 0.9347 & 0.2752 & 0.7752 &  &  &  &  &  &  &  &  &  \\ 
  FI & 0.9350 & 0.2765 & 0.7756 & 0.0244 & 0.0408 & 0.0537 & 0.0224 & 0.0350 & 0.0618 & 0.0247 & 0.0415 & 0.0533 \\ 
  MSI & 0.9351 & 0.2759 & 0.7747 & 0.0270 & 0.0467 & 0.0568 & 0.0247 & 0.0411 & 0.0644 & 0.0271 & 0.0467 & 0.0563 \\ 
  IPW & 0.9356 & 0.2770 & 0.7739 & 0.0413 & 0.0690 & 0.0666 & 0.0343 & 0.0645 & 0.0704 & 0.0395 & 0.0663 & 0.0655 \\ 
  SPE & 0.9356 & 0.2764 & 0.7751 & 0.0378 & 0.0587 & 0.0630 & 0.0333 & 0.0560 & 0.0603 & 0.0471 & 0.1566 & 0.1033 \\
  \hline
  & \multicolumn{12}{c}{cut-point = (4,7)} \\
  True & 0.9347 & 0.4883 & 0.2248 &  &  &  &  &  &  &  &  &  \\ 
  FI & 0.9350 & 0.4874 & 0.2229 & 0.0244 & 0.0523 & 0.0520 & 0.0224 & 0.0494 & 0.0437 & 0.0247 & 0.0537 & 0.0525 \\ 
  MSI & 0.9351 & 0.4877 & 0.2230 & 0.0270 & 0.0559 & 0.0530 & 0.0247 & 0.0528 & 0.0447 & 0.0271 & 0.0567 & 0.0534 \\ 
  IPW & 0.9356 & 0.4881 & 0.2216 & 0.0413 & 0.0741 & 0.0748 & 0.0343 & 0.0708 & 0.0706 & 0.0395 & 0.0723 & 0.0727 \\ 
  SPE & 0.9356 & 0.4882 & 0.2236 & 0.0378 & 0.0661 & 0.0571 & 0.0333 & 0.0640 & 0.0559 & 0.0471 & 0.1745 & 0.4185 \\ 
  \hline
  & \multicolumn{12}{c}{cut-point = (5,7)} \\
  True & 0.9883 & 0.2132 & 0.2248 &  &  &  &  &  &  &  &  &  \\ 
  FI & 0.9880 & 0.2109 & 0.2229 & 0.0075 & 0.0432 & 0.0520 & 0.0066 & 0.0387 & 0.0437 & 0.0081 & 0.0436 & 0.0525 \\ 
  MSI & 0.9881 & 0.2118 & 0.2230 & 0.0098 & 0.0460 & 0.0530 & 0.0075 & 0.0423 & 0.0447 & 0.0101 & 0.0468 & 0.0534 \\ 
  IPW & 0.9885 & 0.2111 & 0.2216 & 0.0203 & 0.0634 & 0.0748 & 0.0097 & 0.0601 & 0.0706 & 0.0185 & 0.0625 & 0.0727 \\ 
  SPE & 0.9883 & 0.2117 & 0.2236 & 0.0191 & 0.0569 & 0.0571 & 0.0117 & 0.0542 & 0.0559 & 0.0180 & 0.1382 & 0.4185 \\
   \hline
\end{tabular}
\end{scriptsize}
\end{center}
\end{sidewaystable}

\begin{sidewaystable}
\begin{center}
\caption{Simulation results from 5000 replications when both models for $\rho_{k}$ and $\pi$ are correctly specified (Study 1) and the second value of $\Lambda$ is considered. ``True'' denotes the true parameter value. Sample size = 250.}
\label{scen1:2:250}
\begin{scriptsize}
\begin{tabular}{r c c c c c c c c c c c c}
  \hline
 & TCF$_1$ & TCF$_2$ & TCF$_3$ & MC.sd$_1$ & MC.sd$_2$ & MC.sd$_3$ & asy.sd$_1$ & asy.sd$_2$ & asy.sd$_3$ & boot.sd$_1$ & boot.sd$_2$ & boot.sd$_3$ \\ 
  \hline 
      & \multicolumn{12}{c}{cut-point = (2,4)} \\
True & 0.5000 & 0.3970 & 0.8970 &  &  &  &  &  &  &  &  &  \\ 
  FI & 0.4995 & 0.3966 & 0.8972 & 0.0502 & 0.0419 & 0.0357 & 0.0461 & 0.0375 & 0.0502 & 0.0506 & 0.0429 & 0.0362 \\ 
  MSI & 0.4996 & 0.3966 & 0.8970 & 0.0519 & 0.0498 & 0.0409 & 0.0479 & 0.0463 & 0.0536 & 0.0522 & 0.0506 & 0.0410 \\ 
  IPW & 0.5001 & 0.3972 & 0.8979 & 0.0659 & 0.0700 & 0.0523 & 0.0646 & 0.0677 & 0.0504 & 0.0658 & 0.0687 & 0.0510 \\ 
  SPE & 0.4996 & 0.3968 & 0.8980 & 0.0565 & 0.0623 & 0.0508 & 0.0559 & 0.0617 & 0.0469 & 0.0576 & 0.1619 & 0.0502 \\
  \hline 
      & \multicolumn{12}{c}{cut-point = (2,5)} \\
  True & 0.5000 & 0.6335 & 0.7365 &  &  &  &  &  &  &  &  &  \\ 
  FI & 0.4995 & 0.6340 & 0.7378 & 0.0502 & 0.0431 & 0.0580 & 0.0461 & 0.0410 & 0.0619 & 0.0506 & 0.0440 & 0.0580 \\ 
  MSI & 0.4996 & 0.6335 & 0.7370 & 0.0519 & 0.0502 & 0.0617 & 0.0479 & 0.0485 & 0.0653 & 0.0522 & 0.0510 & 0.0616 \\ 
  IPW & 0.5001 & 0.6330 & 0.7379 & 0.0659 & 0.0679 & 0.0733 & 0.0646 & 0.0660 & 0.0737 & 0.0658 & 0.0671 & 0.0721 \\ 
  SPE & 0.4996 & 0.6335 & 0.7377 & 0.0565 & 0.0616 & 0.0686 & 0.0559 & 0.0610 & 0.0665 & 0.0576 & 0.1438 & 0.0686 \\ 
  \hline 
      & \multicolumn{12}{c}{cut-point = (2,7)} \\
  True & 0.5000 & 0.8682 & 0.2635 &  &  &  &  &  &  &  &  &  \\ 
  FI & 0.4995 & 0.8679 & 0.2640 & 0.0502 & 0.0307 & 0.0559 & 0.0461 & 0.0333 & 0.0499 & 0.0506 & 0.0314 & 0.0558 \\ 
  MSI & 0.4996 & 0.8680 & 0.2644 & 0.0519 & 0.0362 & 0.0588 & 0.0479 & 0.0387 & 0.0523 & 0.0522 & 0.0372 & 0.0580 \\ 
  IPW & 0.5001 & 0.8678 & 0.2659 & 0.0659 & 0.0492 & 0.0695 & 0.0646 & 0.0472 & 0.0682 & 0.0658 & 0.0492 & 0.0690 \\ 
  SPE & 0.4996 & 0.8684 & 0.2649 & 0.0565 & 0.0467 & 0.0615 & 0.0559 & 0.0451 & 0.0591 & 0.0576 & 0.0593 & 0.0610 \\ 
  \hline 
      & \multicolumn{12}{c}{cut-point = (4,5)} \\  
  True & 0.8970 & 0.2365 & 0.7365 &  &  &  &  &  &  &  &  &  \\ 
  FI & 0.8974 & 0.2374 & 0.7378 & 0.0284 & 0.0368 & 0.0580 & 0.0274 & 0.0318 & 0.0619 & 0.0288 & 0.0371 & 0.0580 \\ 
  MSI & 0.8972 & 0.2369 & 0.7370 & 0.0320 & 0.0441 & 0.0617 & 0.0306 & 0.0395 & 0.0653 & 0.0320 & 0.0439 & 0.0616 \\ 
  IPW & 0.8978 & 0.2358 & 0.7379 & 0.0377 & 0.0603 & 0.0733 & 0.0361 & 0.0574 & 0.0737 & 0.0372 & 0.0586 & 0.0721 \\ 
  SPE & 0.8975 & 0.2368 & 0.7377 & 0.0364 & 0.0538 & 0.0686 & 0.0352 & 0.0519 & 0.0665 & 0.0363 & 0.2833 & 0.0686 \\ 
  \hline 
      & \multicolumn{12}{c}{cut-point = (4,7)} \\  
  True & 0.8970 & 0.4711 & 0.2635 &  &  &  &  &  &  &  &  &  \\ 
  FI & 0.8974 & 0.4713 & 0.2640 & 0.0284 & 0.0504 & 0.0559 & 0.0274 & 0.0467 & 0.0499 & 0.0288 & 0.0510 & 0.0558 \\ 
  MSI & 0.8972 & 0.4714 & 0.2644 & 0.0320 & 0.0554 & 0.0588 & 0.0306 & 0.0525 & 0.0523 & 0.0320 & 0.0562 & 0.0580 \\ 
  IPW & 0.8978 & 0.4706 & 0.2659 & 0.0377 & 0.0693 & 0.0695 & 0.0361 & 0.0677 & 0.0682 & 0.0372 & 0.0687 & 0.0690 \\ 
  SPE & 0.8975 & 0.4716 & 0.2649 & 0.0364 & 0.0635 & 0.0615 & 0.0352 & 0.0627 & 0.0591 & 0.0363 & 0.1949 & 0.0610 \\ 
  \hline 
      & \multicolumn{12}{c}{cut-point = (5,7)} \\  
  True & 0.9711 & 0.2347 & 0.2635 &  &  &  &  &  &  &  &  &  \\ 
  FI & 0.9710 & 0.2339 & 0.2640 & 0.0121 & 0.0404 & 0.0559 & 0.0118 & 0.0369 & 0.0499 & 0.0127 & 0.0409 & 0.0558 \\ 
  MSI & 0.9708 & 0.2345 & 0.2644 & 0.0165 & 0.0458 & 0.0588 & 0.0151 & 0.0431 & 0.0523 & 0.0167 & 0.0465 & 0.0580 \\ 
  IPW & 0.9710 & 0.2348 & 0.2659 & 0.0203 & 0.0569 & 0.0695 & 0.0178 & 0.0556 & 0.0682 & 0.0201 & 0.0569 & 0.0690 \\ 
  SPE & 0.9710 & 0.2348 & 0.2649 & 0.0201 & 0.0526 & 0.0615 & 0.0179 & 0.0521 & 0.0591 & 0.0199 & 0.1084 & 0.0610 \\ 
   \hline
\end{tabular}
\end{scriptsize}
\end{center}
\end{sidewaystable}

\begin{sidewaystable}
\begin{center}
\caption{Simulation results from 5000 replications when both models for $\rho_{k}$ and $\pi$ are correctly specified (Study 1) and the third value of $\Lambda$ is considered. ``True'' denotes the true parameter value. Sample size = 250.}
\label{scen1:3:250}
\begin{scriptsize}
\begin{tabular}{r c c c c c c c c c c c c}
  \hline
 & TCF$_1$ & TCF$_2$ & TCF$_3$ & MC.sd$_1$ & MC.sd$_2$ & MC.sd$_3$ & asy.sd$_1$ & asy.sd$_2$ & asy.sd$_3$ & boot.sd$_1$ & boot.sd$_2$ & boot.sd$_3$ \\ 
  \hline 
        & \multicolumn{12}{c}{cut-point = (2,4)} \\
True & 0.5000 & 0.3031 & 0.8031 &  &  &  &  &  &  &  &  &  \\ 
  FI & 0.4997 & 0.3037 & 0.8055 & 0.0498 & 0.0338 & 0.0498 & 0.0453 & 0.0294 & 0.0529 & 0.0498 & 0.0348 & 0.0489 \\ 
  MSI & 0.4999 & 0.3035 & 0.8046 & 0.0519 & 0.0453 & 0.0554 & 0.0480 & 0.0412 & 0.0578 & 0.0522 & 0.0453 & 0.0542 \\ 
  IPW & 0.5003 & 0.3033 & 0.8042 & 0.0617 & 0.0632 & 0.0655 & 0.0616 & 0.0617 & 0.0633 & 0.0624 & 0.0627 & 0.0639 \\ 
  SPE & 0.5001 & 0.3034 & 0.8044 & 0.0564 & 0.0588 & 0.0636 & 0.0564 & 0.0573 & 0.0615 & 0.0570 & 0.0579 & 0.0624 \\ 
  \hline 
        & \multicolumn{12}{c}{cut-point = (2,5)} \\  
  True & 0.5000 & 0.4682 & 0.6651 &  &  &  &  &  &  &  &  &  \\ 
  FI & 0.4997 & 0.4697 & 0.6684 & 0.0498 & 0.0381 & 0.0617 & 0.0453 & 0.0339 & 0.0608 & 0.0498 & 0.0390 & 0.0608 \\ 
  MSI & 0.4999 & 0.4691 & 0.6675 & 0.0519 & 0.0503 & 0.0670 & 0.0480 & 0.0460 & 0.0654 & 0.0522 & 0.0499 & 0.0653 \\ 
  IPW & 0.5003 & 0.4687 & 0.6675 & 0.0617 & 0.0688 & 0.0763 & 0.0616 & 0.0668 & 0.0741 & 0.0624 & 0.0676 & 0.0742 \\ 
  SPE & 0.5001 & 0.4690 & 0.6674 & 0.0564 & 0.0641 & 0.0735 & 0.0564 & 0.0621 & 0.0706 & 0.0570 & 0.0627 & 0.0715 \\ 
  \hline 
        & \multicolumn{12}{c}{cut-point = (2,7)} \\  
  True & 0.5000 & 0.7027 & 0.3349 &  &  &  &  &  &  &  &  &  \\ 
  FI & 0.4997 & 0.7037 & 0.3370 & 0.0498 & 0.0378 & 0.0591 & 0.0453 & 0.0353 & 0.0545 & 0.0498 & 0.0384 & 0.0592 \\ 
  MSI & 0.4999 & 0.7037 & 0.3367 & 0.0519 & 0.0482 & 0.0626 & 0.0480 & 0.0451 & 0.0588 & 0.0522 & 0.0476 & 0.0632 \\ 
  IPW & 0.5003 & 0.7033 & 0.3371 & 0.0617 & 0.0642 & 0.0709 & 0.0616 & 0.0614 & 0.0715 & 0.0624 & 0.0624 & 0.0721 \\ 
  SPE & 0.5001 & 0.7038 & 0.3367 & 0.0564 & 0.0603 & 0.0660 & 0.0564 & 0.0581 & 0.0661 & 0.0570 & 0.0587 & 0.0670 \\ 
  \hline 
        & \multicolumn{12}{c}{cut-point = (4,5)} \\  
  True & 0.8031 & 0.1651 & 0.6651 &  &  &  &  &  &  &  &  &  \\ 
  FI & 0.8037 & 0.1660 & 0.6684 & 0.0393 & 0.0277 & 0.0617 & 0.0366 & 0.0236 & 0.0608 & 0.0388 & 0.0282 & 0.0608 \\ 
  MSI & 0.8033 & 0.1656 & 0.6675 & 0.0425 & 0.0369 & 0.0670 & 0.0400 & 0.0333 & 0.0654 & 0.0420 & 0.0369 & 0.0653 \\ 
  IPW & 0.8033 & 0.1654 & 0.6675 & 0.0486 & 0.0497 & 0.0763 & 0.0469 & 0.0486 & 0.0741 & 0.0475 & 0.0496 & 0.0742 \\ 
  SPE & 0.8032 & 0.1656 & 0.6674 & 0.0469 & 0.0460 & 0.0735 & 0.0457 & 0.0454 & 0.0706 & 0.0460 & 0.0458 & 0.0715 \\ 
  \hline 
       & \multicolumn{12}{c}{cut-point = (4,7)} \\
  True & 0.8031 & 0.3996 & 0.3349 &  &  &  &  &  &  &  &  &  \\ 
  FI & 0.8037 & 0.4000 & 0.3370 & 0.0393 & 0.0419 & 0.0591 & 0.0366 & 0.0383 & 0.0545 & 0.0388 & 0.0430 & 0.0592 \\ 
  MSI & 0.8033 & 0.4002 & 0.3367 & 0.0425 & 0.0513 & 0.0626 & 0.0400 & 0.0480 & 0.0588 & 0.0420 & 0.0519 & 0.0632 \\ 
  IPW & 0.8033 & 0.4000 & 0.3371 & 0.0486 & 0.0639 & 0.0709 & 0.0469 & 0.0643 & 0.0715 & 0.0475 & 0.0652 & 0.0721 \\ 
  SPE & 0.8032 & 0.4004 & 0.3367 & 0.0469 & 0.0604 & 0.0660 & 0.0457 & 0.0605 & 0.0661 & 0.0460 & 0.0612 & 0.0670 \\ 
  \hline 
        & \multicolumn{12}{c}{cut-point = (5,7)} \\ 
  True & 0.8996 & 0.2345 & 0.3349 &  &  &  &  &  &  &  &  &  \\ 
  FI & 0.9000 & 0.2340 & 0.3370 & 0.0271 & 0.0348 & 0.0591 & 0.0255 & 0.0313 & 0.0545 & 0.0269 & 0.0356 & 0.0592 \\ 
  MSI & 0.8998 & 0.2346 & 0.3367 & 0.0312 & 0.0441 & 0.0626 & 0.0296 & 0.0407 & 0.0588 & 0.0310 & 0.0441 & 0.0632 \\ 
  IPW & 0.8998 & 0.2347 & 0.3371 & 0.0359 & 0.0553 & 0.0709 & 0.0347 & 0.0545 & 0.0715 & 0.0355 & 0.0555 & 0.0721 \\ 
  SPE & 0.8998 & 0.2348 & 0.3367 & 0.0351 & 0.0521 & 0.0660 & 0.0347 & 0.0516 & 0.0661 & 0.0348 & 0.0521 & 0.0670 \\ 
   \hline
\end{tabular}
\end{scriptsize}
\end{center}
\end{sidewaystable}

\subsection{Study 2}
In this study, the true disease status $D$ and the test results $T$ and $A$ are generated in the same way as in the first scenario. The true conditional verification process $\pi$, instead, is chosen to be the following function of $T$ and $A$
\[
\pi(T,A) = 0.35 + 0.3 \mathrm{I}\left(T > t^{(0.8)}\right) + 0.35 \mathrm{I}\left(A > a^{(0.8)}\right),
\]
where $t^{(0.8)}$ and $a^{(0.8)}$ correspond to the 80-th percentile of distribution of $T$ and $A$, respectively. In this case, the verification probabilities are 1 for subjects with $T > t^{(0.8)}$ and $A > a^{(0.8)}$; $0.7$ for subjects with $T \le t^{(0.8)}$ and $A > a^{(0.8)}$; $0.65$ for subjects with $T > t^{(0.8)}$ and $A \le a^{(0.8)}$; $0.35$ otherwise. In our setting, the verification rate is approximately $0.48$. 

The aim in this scenario is to evaluate the behavior of the estimators, in particular that of IPW and SPE, under misspecification of the verification process. Therefore, $\hat{\pi}_i$ is estimated from a logistic regression model with $V$ as the response and $T$ as predictor, while $\hat{\rho}_{ki}$ is still obtained from the multinomial logistic model (similarly to the first scenario). Clearly, the model used for verification status is misspecified.

Table \ref{scen2:2:1000} and Tables \ref{scen2:1:1000}--\ref{scen2:3:1000} in Appendix \ref{app:simu:roc}, show Monte Carlo means and standard deviations  for the estimators of the true class fractions TCF$_1$, TCF$_2$ and TCF$_3$. Moreover, estimated  standard deviations (via asymptotic theory) and bootstrap standard deviations are also presented. The results clearly show the effect of misspecification on IPW estimates, despite the high sample size. In particular, in terms of bias, the IPW method performs almost alway poorly, with high distortion in some cases (values highlighted in bold). On the other hand, the SPE estimator behaves well, due to its doubly robustness property.
\begin{sidewaystable}
\begin{center}
\caption{Simulation results from 5000 replications when the  model for the verification process is misspecified (Study 2) and the second value of $\Lambda$ is used. ``True'' indicates the true parameter value. Sample size = 1000.}
\label{scen2:2:1000}
\begin{scriptsize}
\begin{tabular}{r c c c c c c c c c c c c}
  \hline
  & TCF$_1$ & TCF$_2$ & TCF$_3$ & MC.sd$_1$ & MC.sd$_2$ & MC.sd$_3$ & asy.sd$_1$ & asy.sd$_2$ & asy.sd$_3$ & boot.sd$_1$ & boot.sd$_2$ & boot.sd$_3$ \\
   \hline 
        & \multicolumn{12}{c}{cut-point = (2,4)} \\
True & 0.5000 & 0.3970 & 0.8970 &  &  &  &  &  &  &  &  &  \\ 
  FI & 0.4998 & 0.3970 & 0.8977 & 0.0267 & 0.0211 & 0.0207 & 0.0231 & 0.0172 & 0.0268 & 0.0268 & 0.0213 & 0.0204 \\ 
  MSI & 0.4997 & 0.3970 & 0.8978 & 0.0272 & 0.0240 & 0.0220 & 0.0237 & 0.0202 & 0.0279 & 0.0274 & 0.0239 & 0.0219 \\ 
  IPW & \textbf{0.5983} & \textbf{0.3743} & \textbf{0.9150} & 0.0364 & 0.0407 & 0.0257 & 0.0368 & 0.0399 & 0.0284 & 0.0369 & 0.0399 & 0.0258 \\ 
  SPE & 0.4996 & 0.3971 & 0.8980 & 0.0314 & 0.0342 & 0.0268 & 0.0314 & 0.0336 & 0.0267 & 0.0315 & 0.0336 & 0.0268 \\ 
   \hline 
        & \multicolumn{12}{c}{cut-point = (2,5)} \\   
  True & 0.5000 & 0.6335 & 0.7365 &  &  &  &  &  &  &  &  &  \\ 
  FI & 0.4998 & 0.6340 & 0.7381 & 0.0267 & 0.0218 & 0.0332 & 0.0231 & 0.0198 & 0.0344 & 0.0268 & 0.0219 & 0.0326 \\ 
  MSI & 0.4997 & 0.6338 & 0.7381 & 0.0272 & 0.0246 & 0.0344 & 0.0237 & 0.0226 & 0.0356 & 0.0274 & 0.0243 & 0.0338 \\ 
  IPW & \textbf{0.5983} & \textbf{0.5749} & \textbf{0.7965} & 0.0364 & 0.0406 & 0.0348 & 0.0368 & 0.0401 & 0.0387 & 0.0369 & 0.0402 & 0.0346 \\ 
  SPE & 0.4996 & 0.6338 & 0.7383 & 0.0314 & 0.0341 & 0.0368 & 0.0314 & 0.0335 & 0.0364 & 0.0315 & 0.0336 & 0.0364 \\ 
   \hline 
        & \multicolumn{12}{c}{cut-point = (2,7)} \\   
  True & 0.5000 & 0.8682 & 0.2635 &  &  &  &  &  &  &  &  &  \\ 
  FI & 0.4998 & 0.8690 & 0.2639 & 0.0267 & 0.0175 & 0.0295 & 0.0231 & 0.0190 & 0.0280 & 0.0268 & 0.0176 & 0.0290 \\ 
  MSI & 0.4997 & 0.8689 & 0.2639 & 0.0272 & 0.0197 & 0.0308 & 0.0237 & 0.0211 & 0.0292 & 0.0274 & 0.0198 & 0.0302 \\ 
  IPW & \textbf{0.5983} & \textbf{0.8307} & \textbf{0.3054} & 0.0364 & 0.0342 & 0.0347 & 0.0368 & 0.0341 & 0.0358 & 0.0369 & 0.0343 & 0.0343 \\ 
  SPE & 0.4996 & 0.8688 & 0.2639 & 0.0314 & 0.0283 & 0.0316 & 0.0314 & 0.0282 & 0.0308 & 0.0315 & 0.0284 & 0.0310 \\ 
   \hline 
        & \multicolumn{12}{c}{cut-point = (4,5)} \\   
  True & 0.8970 & 0.2365 & 0.7365 &  &  &  &  &  &  &  &  &  \\ 
  FI & 0.8975 & 0.2370 & 0.7381 & 0.0159 & 0.0191 & 0.0332 & 0.0153 & 0.0147 & 0.0344 & 0.0162 & 0.0191 & 0.0326 \\ 
  MSI & 0.8974 & 0.2368 & 0.7381 & 0.0168 & 0.0215 & 0.0344 & 0.0162 & 0.0175 & 0.0356 & 0.0171 & 0.0213 & 0.0338 \\ 
  IPW & \textbf{0.9216} & \textbf{0.2006} & \textbf{0.7965} & 0.0165 & 0.0315 & 0.0348 & 0.0167 & 0.0308 & 0.0387 & 0.0168 & 0.0309 & 0.0346 \\ 
  SPE & 0.8974 & 0.2367 & 0.7383 & 0.0189 & 0.0266 & 0.0368 & 0.0191 & 0.0261 & 0.0364 & 0.0191 & 0.0262 & 0.0364 \\ 
   \hline 
        & \multicolumn{12}{c}{cut-point = (4,7)} \\   
  True & 0.8970 & 0.4711 & 0.2635 &  &  &  &  &  &  &  &  &  \\ 
  FI & 0.8975 & 0.4721 & 0.2639 & 0.0159 & 0.0276 & 0.0295 & 0.0153 & 0.0247 & 0.0280 & 0.0162 & 0.0276 & 0.0290 \\ 
  MSI & 0.8974 & 0.4719 & 0.2639 & 0.0168 & 0.0300 & 0.0308 & 0.0162 & 0.0269 & 0.0292 & 0.0171 & 0.0296 & 0.0302 \\ 
  IPW & \textbf{0.9216} & \textbf{0.4564} & \textbf{0.3054} & 0.0165 & 0.0395 & 0.0347 & 0.0167 & 0.0387 & 0.0358 & 0.0168 & 0.0388 & 0.0343 \\ 
  SPE & 0.8974 & 0.4717 & 0.2639 & 0.0189 & 0.0339 & 0.0316 & 0.0191 & 0.0331 & 0.0308 & 0.0191 & 0.0333 & 0.0310 \\ 
   \hline 
        & \multicolumn{12}{c}{cut-point = (5,7)} \\  
  True & 0.9711 & 0.2347 & 0.2635 &  &  &  &  &  &  &  &  &  \\ 
  FI & 0.9712 & 0.2351 & 0.2639 & 0.0069 & 0.0208 & 0.0295 & 0.0067 & 0.0185 & 0.0280 & 0.0070 & 0.0209 & 0.0290 \\ 
  MSI & 0.9712 & 0.2351 & 0.2639 & 0.0083 & 0.0237 & 0.0308 & 0.0080 & 0.0214 & 0.0292 & 0.0084 & 0.0234 & 0.0302 \\ 
  IPW & \textbf{0.9752} & \textbf{0.2558} & \textbf{0.3054} & 0.0092 & 0.0319 & 0.0347 & 0.0091 & 0.0315 & 0.0358 & 0.0092 & 0.0316 & 0.0343 \\ 
  SPE & 0.9713 & 0.2350 & 0.2639 & 0.0101 & 0.0273 & 0.0316 & 0.0100 & 0.0266 & 0.0308 & 0.0101 & 0.0267 & 0.0310 \\ 
   \hline
\end{tabular}
\end{scriptsize}
\end{center}
\end{sidewaystable}

\subsection{Study 3}
Starting from two independent random variables $Z_1 \sim \mathcal{N}(0,0.5)$ and $Z_2 \sim \mathcal{N}(0,0.5)$, the true conditional disease $D$ is generated by a trinomial random vector $(D_1,D_2,D_3)$ such that
\[
D_1 = \left\{\begin{array}{r l}
1 & \mathrm{ if } \, Z_1 + Z_2 \le h_1 \\
0 & \mathrm{ otherwise}
\end{array}\right. ,
\,
D_2 = \left\{\begin{array}{r l}
1 & \mathrm{ if } \, h_1 < Z_1 + Z_2 \le h_2 \\
0 & \mathrm{ otherwise}
\end{array}\right. ,
\]
\[
D_3 = \left\{\begin{array}{r l}
1 & \mathrm{ if } \, Z_1 + Z_2 > h_2 \\
0 & \mathrm{ otherwise}
\end{array}\right. .
\]
Here, $h_1$ and $h_2$ are two thresholds. We choose $h_1$ and $h_2$ to make $\theta_1 = 0.4$ and $\theta_3 = 0.25$. The continuous test results $T$ and the covariate $A$ are generated to be related to $D$ through $Z_1$ and $Z_2$. More precisely,
\[
T  = 0.5(Z_1 + Z_2) + \varepsilon_1, \qquad A = Z_1 + Z_2 + \varepsilon_2,
\]
where $\varepsilon_1$ and $\varepsilon_2$ are two independent normal random variables with mean $0$ and the common variance $0.25$, independent also from $Z_1$ and $Z_2$. 
The verification status $V$ is simulated by the following logistic model
\[
\mathrm{logit}\left\{\Pro(V = 1|T,A)\right\} = 0.1 - 1.53 T + A.
\]
Under this model, the verification rate is roughly $0.52$. We consider the cut points $c_1$ and $c_2$ as the pairs $(-1, -0.5), (-1,0.7), (-1, 1.3), (-0.5,0.7), (-0.5,1.3)$ and $(0.7, 1.3)$. In this set--up, we determine the true values of TCF's as
\begin{align}
{\TCF}_{1}(c_1) &= \frac{1}{\Phi(h_1)} \int_{-\infty}^{h_1}\Phi\left(\frac{c_1 - 0.5 z}{\sqrt{0.25}}\right)\phi(z)\ud z\nonumber, \\
{\TCF}_{2}(c_1,c_2) &= \frac{1}{\Phi(h_2) - \Phi(h_1)} \int_{h_1}^{h_2}\left[\Phi\left(\frac{c_2 - 0.5 z}{\sqrt{0.25}}\right) - \Phi\left(\frac{c_1 - 0.5 z}{\sqrt{0.25}}\right)\right]\phi(z)\ud z, \nonumber \\
{\TCF}_{3}(c_2) &= 1 - \frac{1}{1 - \Phi(h_2)} \int_{h_2}^{\infty}\Phi\left(\frac{c_2 - 0.5 z}{\sqrt{0.25}}\right)\phi(z)\ud z. \nonumber
\end{align}

The aim in this scenario is to evaluate the behavior of the estimators, in particular that of FI, MSI and SPE, when the estimators $\hat{\rho}_{ki}$ are inconsistent, whereas $\hat{\pi}_{i}$ are consistent. Therefore, $\hat{\rho}_{ki}$ are obtained from a multinomial logistic regression model with $D = (D_1,D_2,D_3)$ as the response and $T$ as predictor. As the correct process is a multinomial probit process, the chosen model is clearly misspecified. To estimate the conditional verification process $\pi$, we use a generalized linear model for $V$ given $T$ and $A$ with logit link. Clearly, this model is correctly specified.

Table \ref{scen3:1000} shows Monte Carlo means and standard deviations for the estimators of the true class fractions TCF$_1$, TCF$_2$ and TCF$_3$. Moreover, estimated  standard deviations (via asymptotic theory) and bootstrap standard deviations are also presented. The results clearly show the effect of misspecification on FI and MSI estimates, despite the high sample size. In particular, in terms of bias, the two methods performs almost alway poorly, with high distortion in some cases (values highlighted in bold). Again, the SPE  estimator behaves well due to its doubly robustness property.

\begin{sidewaystable}
\begin{center}
\caption{Simulation results from 5000 replications when only model for $\rho_{k}$ is misspecified (Study 3). ``True'' indicates the true parameter value. Sample size = 1000.}
\label{scen3:1000}
\begin{scriptsize}
\begin{tabular}{r c c c c c c c c c c c c}
  \hline
  & TCF$_1$ & TCF$_2$ & TCF$_3$ & MC.sd$_1$ & MC.sd$_2$ & MC.sd$_3$ & asy.sd$_1$ & asy.sd$_2$ & asy.sd$_3$ & boot.sd$_1$ & boot.sd$_2$ & boot.sd$_3$ \\
   \hline 
        & \multicolumn{12}{c}{cut-point = (-1,-0.5)} \\
True & 0.1812 & 0.1070 & 0.9817 &  &  &  &  &  &  &  &  &  \\ 
   FI & \textbf{0.2144} & \textbf{0.1318} & 0.9813 & 0.0230 & 0.0152 & 0.0051 & 0.0243 & 0.0135 & 0.0200 & 0.0230 & 0.0150 & 0.0052 \\ 
   MSI & \textbf{0.2172} & \textbf{0.1328} & 0.9800 & 0.0237 & 0.0182 & 0.0074 & 0.0250 & 0.0166 & 0.0207 & 0.0237 & 0.0179 & 0.0075 \\ 
   IPW & 0.1819 & 0.1072 & 0.9817 & 0.0258 & 0.0197 & 0.0091 & 0.0258 & 0.0194 & 0.0135 & 0.0260 & 0.0196 & 0.0092 \\ 
   SPE & 0.1818 & 0.1073 & 0.9816 & 0.0208 & 0.0206 & 0.0093 & 0.0207 & 0.0202 & 0.0090 & 0.0208 & 0.0204 & 0.0094 \\ 
   \hline 
        & \multicolumn{12}{c}{cut-point = (-1,0.7)} \\   
  True & 0.1812 & 0.8609 & 0.4469 &  &  &  &  &  &  &  &  &  \\ 
  FI & \textbf{0.2144} & \textbf{0.8879} & \textbf{0.4010} & 0.0230 & 0.0149 & 0.0284 & 0.0243 & 0.0153 & 0.0242 & 0.0230 & 0.0146 & 0.0284 \\ 
  MSI & \textbf{0.2172} & \textbf{0.8931} & \textbf{0.4035} & 0.0237 & 0.0165 & 0.0292 & 0.0250 & 0.0172 & 0.0251 & 0.0237 & 0.0165 & 0.0292 \\ 
  IPW & 0.1819 & 0.8606 & 0.4462 & 0.0258 & 0.0350 & 0.0437 & 0.0258 & 0.0342 & 0.0447 & 0.0260 & 0.0348 & 0.0437 \\ 
  SPE & 0.1818 & 0.8608 & 0.4462 & 0.0208 & 0.0311 & 0.0455 & 0.0207 & 0.0305 & 0.0449 & 0.0208 & 0.0310 & 0.0482 \\ 
   \hline 
        & \multicolumn{12}{c}{cut-point = (-1,1.3)} \\   
  True & 0.1812 & 0.9732 & 0.1171 &  &  &  &  &  &  &  &  &  \\ 
  FI & \textbf{0.2144} & 0.9672 & \textbf{0.0949} & 0.0230 & 0.0063 & 0.0161 & 0.0243 & 0.0099 & 0.0104 & 0.0230 & 0.0062 & 0.0161 \\ 
  MSI & \textbf{0.2172} & 0.9708 & \textbf{0.0960} & 0.0237 & 0.0079 & 0.0164 & 0.0250 & 0.0110 & 0.0109 & 0.0237 & 0.0078 & 0.0164 \\ 
  IPW & 0.1819 & 0.9734 & 0.1164 & 0.0258 & 0.0167 & 0.0358 & 0.0258 & 0.0130 & 0.0347 & 0.0260 & 0.0160 & 0.0354 \\ 
  SPE & 0.1818 & 0.9734 & 0.1169 & 0.0208 & 0.0158 & 0.0281 & 0.0207 & 0.0128 & 0.0263 & 0.0208 & 0.0151 & 0.0333 \\ 
   \hline 
        & \multicolumn{12}{c}{cut-point = (-0.5,0.7)} \\   
  True & 0.4796 & 0.7539 & 0.4469 &  &  &  &  &  &  &  &  &  \\ 
  FI & \textbf{0.5497} & 0.7561 & \textbf{0.4010} & 0.0302 & 0.0196 & 0.0284 & 0.0284 & 0.0183 & 0.0242 & 0.0301 & 0.0192 & 0.0284 \\ 
  MSI & \textbf{0.5502} & 0.7603 & \textbf{0.4035} & 0.0312 & 0.0220 & 0.0292 & 0.0295 & 0.0211 & 0.0251 & 0.0310 & 0.0219 & 0.0292 \\ 
  IPW & 0.4801 & 0.7534 & 0.4462 & 0.0390 & 0.0373 & 0.0437 & 0.0384 & 0.0371 & 0.0447 & 0.0387 & 0.0374 & 0.0437 \\ 
  SPE & 0.4801 & 0.7535 & 0.4462 & 0.0327 & 0.0344 & 0.0455 & 0.0322 & 0.0339 & 0.0449 & 0.0324 & 0.0343 & 0.0482 \\ 
   \hline 
        & \multicolumn{12}{c}{cut-point = (-0.5,1.3)} \\  
  True & 0.4796 & 0.8661 & 0.1171 &  &  &  &  &  &  &  &  &  \\ 
  FI & \textbf{0.5497} & \textbf{0.8354} & \textbf{0.0949} & 0.0302 & 0.0189 & 0.0161 & 0.0284 & 0.0185 & 0.0104 & 0.0301 & 0.0186 & 0.0161 \\ 
  MSI & \textbf{0.5502} & \textbf{0.8380} &\textbf{ 0.0960} & 0.0312 & 0.0207 & 0.0164 & 0.0295 & 0.0204 & 0.0109 & 0.0310 & 0.0204 & 0.0164 \\ 
  IPW & 0.4801 & 0.8661 & 0.1164 & 0.0390 & 0.0248 & 0.0358 & 0.0384 & 0.0238 & 0.0347 & 0.0387 & 0.0245 & 0.0354 \\ 
  SPE & 0.4801 & 0.8660 & 0.1169 & 0.0327 & 0.0250 & 0.0281 & 0.0322 & 0.0239 & 0.0263 & 0.0324 & 0.0245 & 0.0333 \\ 
   \hline 
        & \multicolumn{12}{c}{cut-point = (0.7,1.3)} \\   
  True & 0.9836 & 0.1122 & 0.1171 &  &  &  &  &  &  &  &  &  \\ 
  FI & \textbf{0.9933} & \textbf{0.0793} & \textbf{0.0949} & 0.0023 & 0.0133 & 0.0161 & 0.0021 & 0.0119 & 0.0104 & 0.0023 & 0.0131 & 0.0161 \\ 
  MSI & \textbf{0.9930} & \textbf{0.0777} & \textbf{0.0960} & 0.0038 & 0.0145 & 0.0164 & 0.0032 & 0.0135 & 0.0109 & 0.0038 & 0.0145 & 0.0164 \\ 
  IPW & 0.9839 & 0.1128 & 0.1164 & 0.0183 & 0.0324 & 0.0358 & 0.0122 & 0.0319 & 0.0347 & 0.0173 & 0.0325 & 0.0354 \\ 
  SPE & 0.9839 & 0.1125 & 0.1169 & 0.0180 & 0.0283 & 0.0281 & 0.0122 & 0.0280 & 0.0263 & 0.0170 & 0.0285 & 0.0333 \\ 
   \hline
\end{tabular}
\end{scriptsize}
\end{center}
\end{sidewaystable}

\subsection{Study 4}
We generate data exactly as in Study 3. The aim in this scenario is to evaluate the behavior of FI, MSI, IPW and SPE estimators when the estimates $\hat{\rho}_{ki}$ and $\hat{\pi}_{i}$ are inconsistent. Therefore, $\hat{\rho}_{ki}$ are obtained from a multinomial logistic regression model with $D = (D_1,D_2,D_3)$ as the response and $T$ as predictor. This model is misspecified. To estimate the conditional verification disease $\pi$, we use a generalized linear model for $V$ given $T$ and $A^{2/3}$ with logit link. Clearly, this model is misspecified.

Table \ref{scen4:1000} shows Monte Carlo means and standard deviations  for the estimators of the true class fractions TCF$_1$, TCF$_2$ and TCF$_3$. 
Moreover,  estimated standard deviations (via asymptotic theory) and bootstrap standard deviations are also presented.
The results clearly show that when both the disease and verification models are misspecified, all estimators may behave poorly, with high distortion in some cases (values highlighted in bold).

\begin{sidewaystable}
\begin{center}
\caption{Simulation results from 5000 replications when both models for $\rho_{k}$ and $\pi$ are misspecified (Study 4). ``True'' indicates the true parameter value. Sample size = 1000.}
\label{scen4:1000}
\begin{scriptsize}
\begin{tabular}{r c c c c c c c c c c c c}
  \hline
  & TCF$_1$ & TCF$_2$ & TCF$_3$ & MC.sd$_1$ & MC.sd$_2$ & MC.sd$_3$ & asy.sd$_1$ & asy.sd$_2$ & asy.sd$_3$ & boot.sd$_1$ & boot.sd$_2$ & boot.sd$_3$ \\
   \hline 
        & \multicolumn{12}{c}{cut-point = (-1,-0.5)} \\
  	True & 0.1812 & 0.1070 & 0.9817 &  &  &  &  &  &  &  &  &  \\ 
  FI & \textbf{0.2143} & \textbf{0.1320} & 0.9814 & 0.0231 & 0.0149 & 0.0051 & 0.0243 & 0.0135 & 0.0200 & 0.0230 & 0.0150 & 0.0052 \\ 
  MSI & \textbf{0.2170} & \textbf{0.1330} & 0.9801 & 0.0238 & 0.0179 & 0.0074 & 0.0250 & 0.0166 & 0.0207 & 0.0237 & 0.0179 & 0.0075 \\ 
  IPW & \textbf{0.2185} & \textbf{0.1339} & 0.9792 & 0.0284 & 0.0234 & 0.0102 & 0.0282 & 0.0232 & 0.0105 & 0.0283 & 0.0233 & 0.0102 \\ 
  SPE & \textbf{0.2183} & \textbf{0.1339} & 0.9792 & 0.0247 & 0.0220 & 0.0101 & 0.0245 & 0.0219 & 0.0098 & 0.0246 & 0.0219 & 0.0102 \\  
   \hline 
        & \multicolumn{12}{c}{cut-point = (-1,0.7)} \\   
  True & 0.1812 & 0.8609 & 0.4469 &  &  &  &  &  &  &  &  &  \\ 
  FI & \textbf{0.2143} & \textbf{0.8887} & \textbf{0.4002} & 0.0231 & 0.0143 & 0.0285 & 0.0243 & 0.0153 & 0.0242 & 0.0230 & 0.0146 & 0.0285 \\ 
  MSI & \textbf{0.2170} & \textbf{0.8940} & \textbf{0.4029} & 0.0238 & 0.0164 & 0.0290 & 0.0250 & 0.0171 & 0.0251 & 0.0237 & 0.0165 & 0.0292 \\ 
  IPW & \textbf{0.2185} & \textbf{0.8994} & \textbf{0.4078} & 0.0284 & 0.0237 & 0.0397 & 0.0282 & 0.0232 & 0.0410 & 0.0283 & 0.0234 & 0.0397 \\ 
  SPE & \textbf{0.2183} & \textbf{0.8998} & \textbf{0.4071} & 0.0247 & 0.0223 & 0.0323 & 0.0245 & 0.0219 & 0.0325 & 0.0246 & 0.0220 & 0.0326 \\ 
   \hline 
        & \multicolumn{12}{c}{cut-point = (-1,1.3)} \\   
  True & 0.1812 & 0.9732 & 0.1171 &  &  &  &  &  &  &  &  &  \\ 
  FI & \textbf{0.2143} & 0.9675 & \textbf{0.0947} & 0.0231 & 0.0061 & 0.0160 & 0.0243 & 0.0099 & 0.0104 & 0.0230 & 0.0062 & 0.0161 \\ 
  MSI & \textbf{0.2170} & 0.9711 & \textbf{0.0958} & 0.0238 & 0.0078 & 0.0163 & 0.0250 & 0.0110 & 0.0108 & 0.0237 & 0.0078 & 0.0164 \\ 
  IPW & \textbf{0.2185} & 0.9742 & \textbf{0.0977} & 0.0284 & 0.0112 & 0.0269 & 0.0282 & 0.0107 & 0.0270 & 0.0283 & 0.0111 & 0.0273 \\ 
  SPE & \textbf{0.2183} & 0.9742 & \textbf{0.0978} & 0.0247 & 0.0110 & 0.0174 & 0.0245 & 0.0105 & 0.0175 & 0.0246 & 0.0108 & 0.0177 \\ 
   \hline 
        & \multicolumn{12}{c}{cut-point = (-0.5,0.7)} \\   
  True & 0.4796 & 0.7539 & 0.4469 &  &  &  &  &  &  &  &  &  \\ 
  FI & \textbf{0.5510} & 0.7567 & \textbf{0.4002} & 0.0306 & 0.0190 & 0.0285 & 0.0285 & 0.0183 & 0.0242 & 0.0301 & 0.0192 & 0.0285 \\ 
  MSI & \textbf{0.5514} & 0.7610 & \textbf{0.4029} & 0.0316 & 0.0219 & 0.0290 & 0.0295 & 0.0211 & 0.0251 & 0.0310 & 0.0219 & 0.0292 \\ 
  IPW & \textbf{0.5509} & 0.7655 & \textbf{0.4078} & 0.0360 & 0.0313 & 0.0397 & 0.0357 & 0.0310 & 0.0410 & 0.0358 & 0.0311 & 0.0397 \\ 
  SPE & \textbf{0.5509} & 0.7659 & \textbf{0.4071} & 0.0336 & 0.0286 & 0.0323 & 0.0329 & 0.0286 & 0.0325 & 0.0329 & 0.0287 & 0.0326 \\ 
   \hline 
        & \multicolumn{12}{c}{cut-point = (-0.5,1.3)} \\  
  True & 0.4796 & 0.8661 & 0.1171 &  &  &  &  &  &  &  &  &  \\ 
  FI & \textbf{0.5510} & \textbf{0.8355} & \textbf{0.0947} & 0.0306 & 0.0186 & 0.0160 & 0.0285 & 0.0186 & 0.0104 & 0.0301 & 0.0186 & 0.0161 \\ 
  MSI & \textbf{0.5514} & \textbf{0.8380} & \textbf{0.0958} & 0.0316 & 0.0205 & 0.0163 & 0.0295 & 0.0204 & 0.0108 & 0.0310 & 0.0204 & 0.0164 \\ 
  IPW & \textbf{0.5509} & \textbf{0.8403} & \textbf{0.0977} & 0.0360 & 0.0255 & 0.0269 & 0.0357 & 0.0251 & 0.0270 & 0.0358 & 0.0252 & 0.0273 \\ 
  SPE & \textbf{0.5509} & \textbf{0.8403} & \textbf{0.0978} & 0.0336 & 0.0240 & 0.0174 & 0.0329 & 0.0237 & 0.0175 & 0.0329 & 0.0238 & 0.0177 \\
   \hline 
        & \multicolumn{12}{c}{cut-point = (0.7,1.3)} \\   
  True & 0.9836 & 0.1122 & 0.1171 &  &  &  &  &  &  &  &  &  \\ 
  FI & \textbf{0.9934} & \textbf{0.0788} & \textbf{0.0947} & 0.0022 & 0.0129 & 0.0160 & 0.0021 & 0.0119 & 0.0104 & 0.0023 & 0.0131 & 0.0161 \\ 
  MSI & \textbf{0.9930} & \textbf{0.0771} & \textbf{0.0958} & 0.0038 & 0.0145 & 0.0163 & 0.0032 & 0.0134 & 0.0108 & 0.0037 & 0.0145 & 0.0164 \\ 
  IPW & \textbf{0.9925} & \textbf{0.0748} & \textbf{0.0977} & 0.0075 & 0.0213 & 0.0269 & 0.0057 & 0.0208 & 0.0270 & 0.0073 & 0.0211 & 0.0273 \\ 
  SPE & \textbf{0.9925} & \textbf{0.0744} & \textbf{0.0978} & 0.0074 & 0.0201 & 0.0174 & 0.0058 & 0.0196 & 0.0175 & 0.0073 & 0.0198 & 0.0177 \\ 
   \hline
\end{tabular}
\end{scriptsize}
\end{center}
\end{sidewaystable}

\section{Two illustrations}
To illustrate the application of the proposed methods, in this section we consider two quite distinct real data examples,
both dealing with epithelial ovarian cancer  (EOC).
In the first illustration, we consider diagnosis of EOC in one of three classes i.e.,  benign disease, early stage and late stage  
cancer on the basis of a well known tumor marker, i.e., CA125. We make use of a  publicly available dataset in which the disease status is known for all subjects. Then, we simulate a verification process and apply our estimators. This allows to compare results obtained in the complete data case with those obtained in the incomplete data case after correcting for verification bias.
In the second illustration, we focus on prediction of patients' response to chemotherapy, classified as sensitive, partially sensitive and resistant. Data are available for late stage  EOC patients. In this second example, the response is missing for 
about 25\% of the subjects involved in the study.

\subsection{Diagnosis of EOC}
We use data  from the Pre-PLCO Phase II Dataset from the SPORE/Early Detection Network/Prostate, Lung, Colon, and Ovarian Cancer Ovarian Validation Study. The study protocol and data are publicly available at the address\footnote{\url{http://edrn.nci.nih.gov/protocols/119-spore- edrn-pre-plco-ovarian-phase-ii-validation}}, along with descriptions of the study aims and analytic methods. In particular, we consider  the following three classes of EOC, i.e., benign disease, early stage (I and II) and late stage (III and IV) cancer, and 2 of the 59 available biomarkers, i.e.  CA125 and CA153, measured at Harvard laboratories. In detail, we use CA125 as the test $T$ s and CA153 as a covariate. Reasons for using CA153 as a covariate come from the medical literature that suggests that the concomitant measurement of CA153 with CA125 could be advantageous in the pre-operative discrimination of benign and malignant ovarian tumors. In addition, age of patients is also considered. Here, we have 134 patients with benign disease, 67 early stage samples and 77 late stage samples.

To mimic verification bias, a  subset of the complete dataset is constructed using the test $T$ and the vector $A=(A_1, A_2)$ of the two covariates, namely the marker CA153 ($A_1$) and age ($A_2$).
In this subset, $T$ and $A$ are known for all samples, but the true status (benign, early stage or late stage) is available only for some samples, that we select according to the following mechanism. We select all samples having a value for both $T$ and $A$ above their respective medians, i.e. 0.87 and (45,0.30); as for  the others, we apply the following selection process
\[
\Pro(V = 1) = 0.05 + \delta_1\mathrm{I}(T > 0.87) + \delta_2\mathrm{I}(A_1 > 0.30) + \delta_3\mathrm{I}(A_2 > 45),
\]
with $\delta_1 = 0.35$, $\delta_2 = 0.25$ and $\delta_3 = 0.35,$ leading to a marginal probability of selection equal to $0.634$. With such a choice, the verification probability is equal to about $0.65$ for subjects with $T > 0.87$, $A_1 > 0.30$ and $A_2 < 45$; $0.75$ for subjects with $T > 0.87$, $A_1 < 0.30$ and $A_2 > 45$; $0.65$ for subjects with $T < 0.87$, $A_1 > 0.30$ and $A_2 > 45$; $0.4$ for subjects with $T > 0.87$, $A_1 < 0.30$ and $A_2 < 45$; $0.3$ for subjects with $T < 0.87$, $A_1 > 0.30$ and $A_2 < 45$; $0.4$ for subjects with $T < 0.87$, $A_1 < 0.30$ and $A_2 > 45$; $0.05$ otherwise.

To apply FI, MSI and SPE estimators, we employ a  multinomial logistic model to estimate $\rho_{ki} = \Pro(D_{ki} = 1|T_i,A_{1i},A_{2i})$, where $D_k = 1$, $k = 1,2,3$ refers to  benign, early and late, respectively. On the other hand, SPE and IPW methods require estimates of $\pi_i = \Pro(V_{i} = 1|T_i,A_{1i},A_{2i})$. For estimating such quantities, we make use, firstly, of a correctly specified model, i.e., a linear threshold regression model and, then, of a misspecified model, i.e., a logistic model.

The estimated ROC surfaces for the test $T$ (CA125) obtained by applying the proposed methods are shown in Figure \ref{fg:com4}.

\begin{figure}
\begin{center}
\begin{tabular}{@{}c@{}c@{}}
  \multicolumn{2}{c}{ }\\[-0.5cm]
  \includegraphics[width=0.35\linewidth]{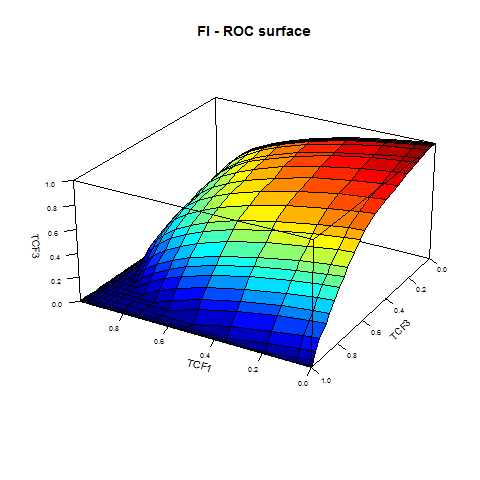}&
  \includegraphics[width=0.35\linewidth]{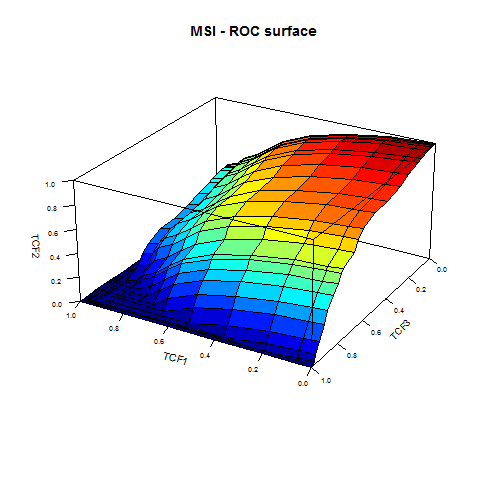}\\[-0.1em]
      \footnotesize (a) FI  & \footnotesize (b) MSI \\
  \multicolumn{2}{c}{\footnotesize }\\[-0.3cm]
  \includegraphics[width=0.35\linewidth]{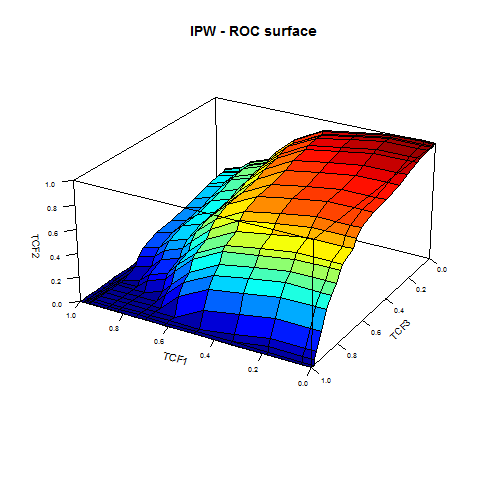}&
  \includegraphics[width=0.35\linewidth]{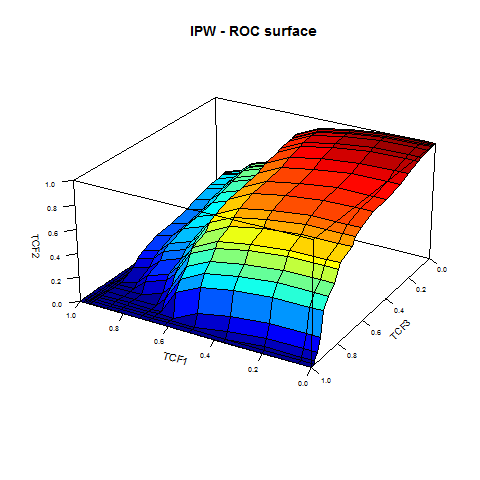}\\[-0.1em]
    \footnotesize (c) IPW--threshold model & \footnotesize (d) IPW--logit model\\
  \multicolumn{2}{c}{\footnotesize }\\[-0.3cm]  
  \includegraphics[width=0.35\linewidth]{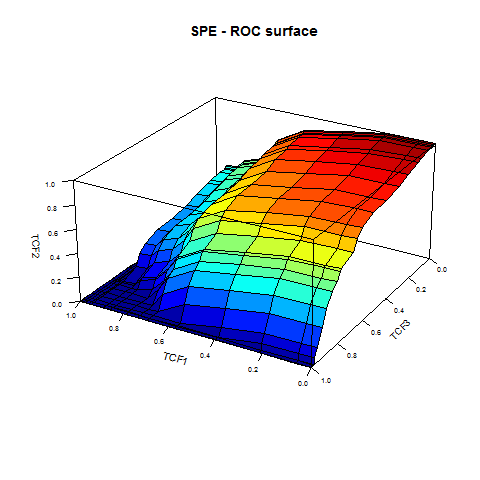}&
  \includegraphics[width=0.35\linewidth]{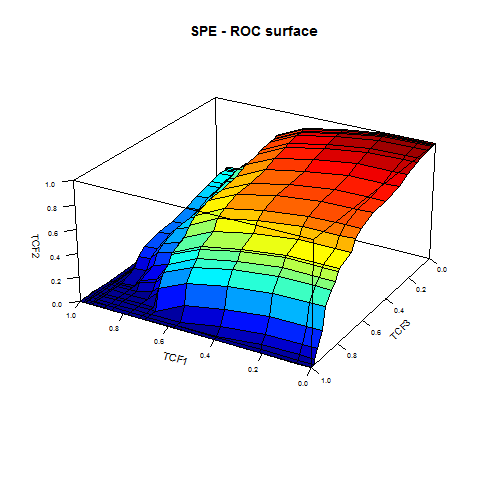}\\[-0.1em]
    \footnotesize (e) SPE--threshold model & \footnotesize (f) SPE--logit model\\
\end{tabular}
\caption{Bias--corrected estimated ROC surfaces for CA125, assessing the classification into three class of EOC: benign disease, early stage (I and II) and late stage (III and IV). For the SPE and IPW approaches, results for the correctly specified and misspecified model for the verification process are given.}
\label{fg:com4}
\end{center}
\end{figure}
For the sake of comparison, we also produced the estimate of the ROC surface  with full data (Full estimate), reported in Appendix \ref{app:fg:ex1}, Figure \ref{fg:full:ex1}. In Appendix \ref{app:fg:ex1}, Figure \ref{fg:roc_threshold_projection} and Figure \ref{fg:roc_logistic_projection}, we also give the projections of the bias--corrected estimated ROC surfaces to the planes defined by $\TCF_1$ versus $\TCF_2$, $\TCF_1$ versus $\TCF_3$ and $\TCF_2$ versus $\TCF_3$, i.e., the ROC curves between classes 1 and 2, classes 1 and 3, classes 2 and 3. Such plots are obtained by setting $\TCF_3 = 0$, $\TCF_2 = 0$ and $\TCF_1 = 0$, respectively. For example, the estimated ROC curves between classes 1 and 2 are defined as the set of points
\[
\left\{(\widehat{\TCF}_{1,*}(c_1), \widehat{\TCF}_{2,*}(c_1,+\infty)), c_1 \in \mathbb{R}\right\},
\]
that is, we ignore the cut point $c_2$. This is a construction equivalent to the most popular representation of an estimated ROC curve, which usually depicts $\widehat{\TCF}_{2}$ versus $1-\widehat{\TCF}_{1}.$   

Compared with the Full estimate, all the bias-corrected methods discussed in the paper seem to behave well, yielding reasonable estimates of the ROC surface and the ROC
curves. Moreover, Table~\ref{tab:result1} shows the VUS estimates obtained with the FI, MSI, IPW and SPE estimators (both for the correctly specified and misspecified model for the verification process), along with approximated 95\% confidence intervals.  Inspection of the table highlights that estimators with better performance are, overall, IPW and SPE. This might be an indication that the multinomial logistic model chosen for the disease process might not be fully adequate in this case.
\begin{table}[h]
\begin{center}
\caption{Bias--corrected (and Full) estimated VUS  for the marker CA125, assessing the classification into three classes of EOC: benign disease, early stage (I and II) and late stage (III and IV). For the SPE and IPW approaches, results for the correctly specified and misspecified model for the verification process are given.}
\label{tab:result1}
\begin{tabular}{l c c c c}
\hline
 & VUS Estimate  &  Asy.sd & Boot.sd &  95\% C.I. (with Asy.sd)  \\
\hline
Full & 0.5663 & & & \\
FI   & 0.5150 & 0.0404 & 0.0417 & (0.4357, 0.5942) \\
MSI  & 0.5183 & 0.0415 & 0.0431 & (0.4368, 0.5997) \\
IPW.logit & 0.5500 & 0.0416 & 0.0471 & (0.4685, 0.6314) \\
SPE.logit & 0.5581 & 0.0443 & 0.0463 & (0.4712, 0.6450) \\
IPW.thres & 0.5353 & 0.0393 & 0.0457 & (0.4583, 0.6123) \\
SPE.thres & 0.5470 & 0.0440 & 0.0438 & (0.4608, 0.6331) \\
\hline
\end{tabular}
\end{center}
\end{table}

\subsection{Prediction of response to chemotherapy}
A major challenge in advanced-stage EOC is prediction of response to platinum chemotherapy on the basis of markers measured at molecular level.  Indeed, several genomic profiling studies have shown that gene expressions relate with different aspects of ovarian cancer (tumor subtype, stage, grade, prognosis, and therapy resistance), although the measured association is usually very low. Here, we consider a cohort of 99 snap-frozen tumor biopsies taken from a frozen tissue bank, located at the Department of Oncology, IRCCS-Mario Negri Institute, Milano, Italy. Biopsies were collected from late stage (III and IV) cancer patients who underwent surgery at the Obstetrics and  Gynaecology Department, San Gerardo Hospital, Monza, Italy between September 1992 and March 2010. For 75 of the 99 subjects, the three-class response to platinum therapy is available, being 31 patients sensitive, 11 partially sensitive and 33 resistant. For all the subjects, we consider as test predictive of the response to therapy the marker ($T$) resulting as a given linear combination of the logarithm of the expression levels of six genes, i.e., Entrez Gene ID: 23513, 7284, 128408, 56996, 2969, 6170. As a covariate, we consider age at onset of patients.

The estimated ROC surfaces for $T$ obtained by applying the proposed methods are shown in Figure \ref{fg:com5}. FI, MSI, IPW and SPE estimators are based on the multinomial logistic model for the disease process and/or the logistic model for the verification process.  
\begin{figure}[h]
\begin{center}
\begin{tabular}{@{}c@{}c@{}}
  \multicolumn{2}{c}{ }\\[-0.5cm]
  \includegraphics[width=0.35\linewidth]{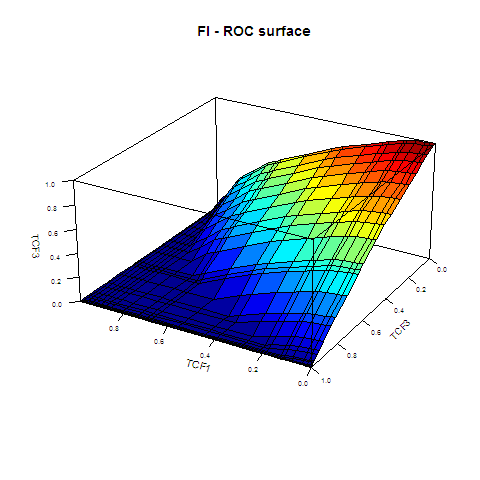}&
  \includegraphics[width=0.35\linewidth]{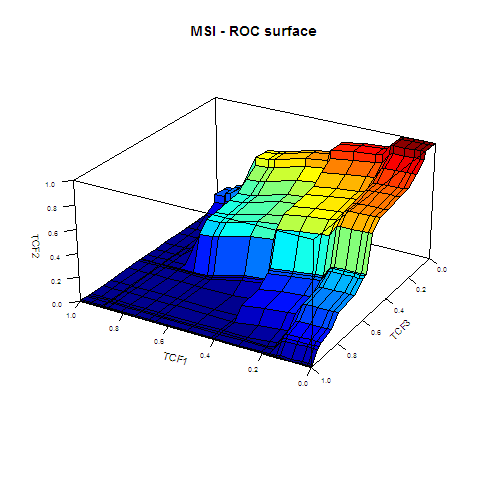}\\[-0.1em]
      \footnotesize (a) FI  & \footnotesize (b) MSI \\
  \multicolumn{2}{c}{\footnotesize }\\[-0.3cm]
  \includegraphics[width=0.35\linewidth]{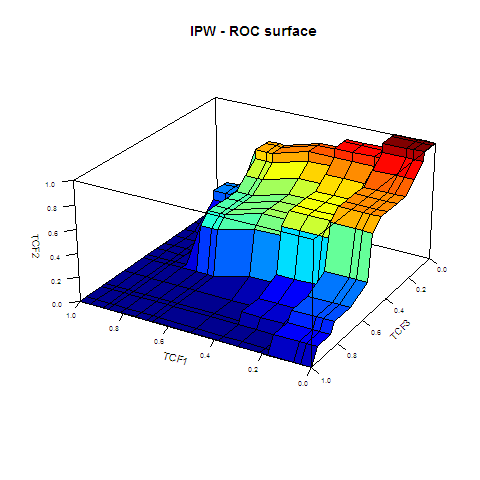}&
  \includegraphics[width=0.35\linewidth]{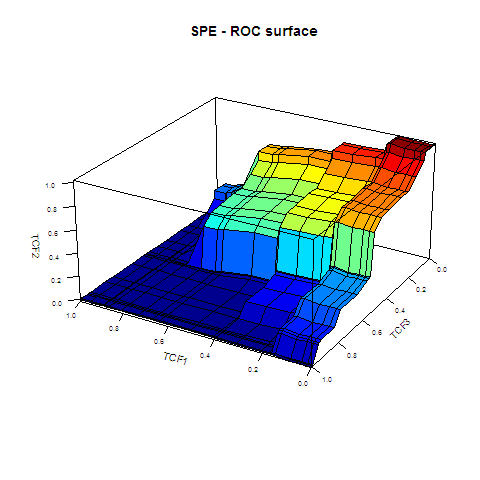}\\[-0.1em]
    \footnotesize (c) IPW  & \footnotesize (d) SPE \\
\end{tabular}
\caption{Bias--corrected estimated ROC surfaces for the test $T$ predicting the response to therapy of late stage EOC patients.}
\label{fg:com5}
\end{center}
\end{figure}
Table \ref{tab:result2} shows the corresponding VUS estimates, along with the na\"ive estimate. The table also gives the estimated standard deviations (via asymptotic theory), bootstrap standard deviations and approximated 95\% confidence intervals. Despite the limited sample size, the results show that $T$ has some ability to predict response to therapy for late stage EOC patients.
\begin{table}[h]
\begin{center}
\caption{Bias--corrected (and Na\"ive) estimated VUS  for the test $T$ predicting the response to therapy
of late stage EOC patients.}
\label{tab:result2}
\begin{tabular}{l c c c c}
\hline 
 & VUS Estimate  &  Asy.sd  & Boot.sd & 95\% C.I. (with Asy.sd) \\
\hline
Na\"ive & 0.3452 & & & \\
FI   & 0.3005 & 0.0512 & 0.0538 & (0.2002, 0.4009) \\
MSI  & 0.3197 & 0.0629 & 0.0656 & (0.1963, 0.4430) \\
IPW  & 0.3231 & 0.0654 & 0.0755 & (0.1949, 0.4512) \\
SPE  & 0.3110 & 0.0675 & 0.0704 & (0.1787, 0.4433) \\
\hline
\end{tabular}
\end{center}
\end{table}

\section{Conclusions}
\label{s:discussion}
This paper proposed several verification bias-corrected estimators of the ROC surface (and the VUS) of a continuous diagnostic test. These estimators, which can be considered an extension to the three-class case of estimators in \cite{alo:05}, are partially parametric in that they require the choice of a parametric model
for the estimation of the disease process, or of the verification process, or of both processes. In some cases, wrong specifications of such models can visibly affect the produced estimates, as highlighted also by our results in the simulation studies. To avoid misspecification problems, one possibility could be to resort on fully nonparametric estimators. This topic will be the focus of our future work.

\appendix
\section{Asymptotic distribution results}
\label{s:asymptotic}

In this section, we discuss validity of  conditions (C1), (C2) and (C3) for the proposed estimators. The discussion covers first the elements of the estimating functions corresponding to  the parameter $\tau$.  Then, we pass to the elements of the estimating functions corresponding to the parameters $\theta_1,\theta_2,\theta_{11},\beta_{12},\beta_{22},\beta_{23}$, specializing the discussion to the various methods. Finally, we give the explicit form of the variance-covariance matrix in Theorem 1. Recall that $\alpha_0$ denotes the true value of $\alpha.$

\noindent
{\bf Parameter $\tau$}. We noted in Section~3 that  estimators FI, MSI and SPE require a multinomial logistic or probit regression model to estimate the disease probabilities $\rho_{ki} = \Pro(D_{ki} = 1|T_i,A_i)$ with $k = 1,2,3$. In the following, we adopt the multinomial logistic model, but  arguments similar to those  given below hold also for the multinomial probit model, despite the rather more complex algebra (see \citet{dag}, Chapter 3, as a general reference).

The estimating function for the nuisance parameter $\tau \equiv \tau_\rho = (\tau_{\rho_1},\tau_{\rho_2})^\top,$ 
\[
G^{\tau_\rho}(\alpha) = \left(G^{\tau_{\rho_1}}(\alpha), G^{\tau_{\rho_2}}(\alpha)\right)^\top \equiv \left(\sum_{i=1}^{n} g_i^{\tau_{\rho_1}}(\alpha), \sum_{i=1}^{n} g_i^{\tau_{\rho_2}}(\alpha)\right)^\top,
\]
is obtained as the first derivative of the log likelihood function. With the multinomial logistic model, we get
\begin{eqnarray}
G^{\tau_\rho}(\alpha) = \left(\sum_{i=1}^{n}V_i U_i (D_{1i} - \rho_{1i}), \sum_{i=1}^{n}V_i  U_i (D_{2i} - \rho_{2i})\right)^\top,
\nonumber
\end{eqnarray}
where $U_i = (1,T_i,A_i)^\top$. Under assumption (A2), condition (C1) trivially holds. Moreover, 
we get
\begin{eqnarray}
\begin{array}{r l r l}
\frac{\partial}{\partial \tau_{\rho_1}} g_i^{\tau_{\rho_1}}(\alpha) &= - V_i U_i U_i^\top \rho_{1i} (1 - \rho_{1i}) , &
\frac{\partial}{\partial \tau_{\rho_2}} g_i^{\tau_{\rho_1}}(\alpha) &= V_i U_i U_i^\top \rho_{1i} \rho_{2i}, \\ [16pt]
\frac{\partial}{\partial \tau_{\rho_2}} g_i^{\tau_{\rho_2}}(\alpha) &= - V_i U_i U_i^\top \rho_{2i}(1 - \rho_{2i}), & 
\frac{\partial}{\partial \tau_{\rho_1}} g_i^{\tau_{\rho_2}}(\alpha) &= V_i U_i U_i^\top \rho_{1i} \rho_{2i},
\end{array}\nonumber
\label{deri_tau:1}
\end{eqnarray}
and $\frac{\partial}{\partial \theta_s} g^{\tau_\rho}_i(\alpha)=0,$  $\frac{\partial}{\partial \beta_{jk}} g^{\tau_\rho}_i(\alpha)=0$ for each $s,j,k.$ The second--order partial derivatives can be easily derived. Hence, for  $G^{\tau_\rho}(\alpha)$, condition (C2) holds and, by assumption (A3)--(A5) condition (C3) also holds.
 
The IPW and SPE estimators require estimates of $\pi_i = \Pro(V_i = 1| T_i,A_i)$. With  $T$ and $A$ as covariates, we can use the logistic or probit models to this end. In these cases, conditions (C1)--(C3) are  satisfied by the estimating functions
\begin{eqnarray}
G^{\tau_\pi}(\alpha) = \sum_{i=1}^{n}g_i^{\tau_\pi}(\alpha) = \sum_{i=1}^{n} U_i (V_i - \pi_i)
\nonumber
\end{eqnarray}
or
\begin{eqnarray}
G^{\tau_\pi}(\alpha) = \sum_{i=1}^{n}g_i^{\tau_\pi}(\alpha) = \sum_{i=1}^{n}\left[\frac{V_i U_i \phi(U_i^\top \tau_\pi)}{\Phi(U_i^\top \tau_\pi)} - (1 - V_i)\frac{U_i \phi(U_i^\top \tau_\pi)}{1 - \Phi(U_i^\top \tau_\pi) }\right],
\nonumber
\end{eqnarray}
where $\phi(\cdot)$ and $\Phi(\cdot)$ are the density function and the cumulative distribution function of the standard normal random variable, respectively.
Recall that $\tau_\pi$ is the component of nuisance parameter $\tau$ corresponding the model for estimating $\pi$.  The first-order derivatives are
\begin{eqnarray}
\frac{\partial}{\partial \tau_\pi} g_i^{\tau_\pi}(\alpha) = - U_i U_i^\top \pi_i (1 - \pi_i),
\nonumber
\end{eqnarray}
or
\begin{eqnarray}
\frac{\partial}{\partial \tau_\pi} g_i^{\tau_\pi}(\alpha) &=& - \frac{V_i U_iU_i^\top \phi(U_i^\top \tau_\pi)\left[-U_i^\top \tau_\pi \Phi(U_i^\top \tau_\pi) - \phi(U_i^\top \tau_\pi)\right]}{\Phi^2(U_i^\top \tau_\pi)}\nonumber\\
&& - \: (1-V_i)\frac{U_i U_i^\top \phi(U_i^\top \tau_\pi)\left[U_i^\top \tau_\pi (\Phi(U_i^\top \tau_\pi)-1) + \phi(U_i^\top \tau_\pi)\right]}{\left[1 - \Phi(U_i^\top \tau_\pi)\right]^2}.\nonumber
\end{eqnarray}

\noindent
{\bf FI and MSI estimators}.  
According to equations (\ref{est_eq:fi2}), (\ref{est_eq:msi1}) and (\ref{est_eq:msi2}), the estimating functions $G_{*}^{\theta_s}(\alpha)$ for FI and MSI estimators can be presented in the form
\begin{eqnarray}
G_{\mathrm{IE}}^{\theta_s}(\alpha) \equiv \sum_{i=1}^{n}g_{i,\mathrm{IE}}^{\theta_s}(\alpha) = \sum_{i=1}^{n}\left\{V_i \left[m D_{si} - \theta_s + (1-m)\rho_{si}\right] + (1 - V_i)(\rho_{si} - \theta_s)\right\},
\nonumber
\end{eqnarray}
with $s = 1,2$. Similarly,
\begin{eqnarray}
G_{\mathrm{IE}}^{\beta_{jk}}(\alpha) &\equiv& \sum_{i=1}^{n} g_{i,\mathrm{IE}}^{\beta_{jk}}(\alpha) \nonumber \\
&=& \sum_{i=1}^{n}\bigg\{V_i \left[m \mathrm{I}(T_i \ge c_j) D_{ki} - \beta_{jk} + (1-m)\mathrm{I}(T_i \ge c_j)\rho_{ki}\right] \nonumber\\
&& + \: (1 - V_i)\left[\mathrm{I}(T_i \ge c_j)\rho_{ki} - \beta_{jk} \right]\bigg\},
\nonumber
\end{eqnarray}
for $j = 1,2$ and $k = 1,2,3$. Here, the notation IE means  ``imputation estimator''. The estimating function corresponds to the FI estimator if $m = 0$, to the MSI estimator if $m = 1$. Using the conditional expectation and the assumption (A1), $\E\left[g_{i,\mathrm{IE}}^{\theta_s}(\alpha_0) \right]$ equals
\begin{align}
\lefteqn{\E_{D_s,T_i,A_i}\left[\E\left[g_{i,\mathrm{IE}}^{\theta_s}(\alpha_0)| T_i,A_i \right]\right]} \nonumber\\
&= \E_{D_{si},T_i,A_i} \left[\E\left[\left\{V_i \left[m D_{si} - \theta_{s0} + (1-m)\rho_{si}\right] + (1 - V_i)\left[\rho_{si} - \theta_{s0} \right]\right\}| T_i,A_i \right]\right] \nonumber\\
&= \E_{D_{si},T_i,A_i} \left[\pi_i \left[m \E \left[D_{si} | T_i,A_i\right] - \theta_{s0} + (1-m)\rho_{si}\right] + (1 - \pi_i)(\rho_{si} - \theta_{s0})\right] \nonumber \\
&= \E_{D_{si},T_i,A_i} \left[\pi_i \left[m \rho_{si} - \theta_{s0} + (1-m)\rho_{si}\right] + (1 - \pi_i)(\rho_{si} - \theta_{s0})\right] \nonumber \\
&= \E_{D_{si},T_i,A_i} \left[\pi_i (\rho_{si} - \theta_{s0}) + (1 - \pi_i)(\rho_{si} - \theta_{s0})\right] \nonumber \\
&= \E_{D_{si},T_i,A_i} \left[\rho_{si} - \theta_{s0}\right] = 0 \nonumber.
\end{align}
Similarly, we compute the expected value of the estimating function components $g_{i,\mathrm{IE}}^{\beta_{jk}}(\alpha_0)$ as follows
\begin{align}
\lefteqn{\E_{D_k,T_i,A_i}\left[\E\left[g_{i,\mathrm{IE}}^{\beta_{jk}}(\alpha_0)| T_i,A_i \right]\right]} \nonumber\\
&= \E_{D_{ki},T_i,A_i} \bigg[ \E \bigg[ \bigg\{V_i \left[m \mathrm{I}(T_i \ge c_j) D_{ki} - \beta_{jk0} + (1-m)\mathrm{I}(T_i \ge c_j)\rho_{ki}\right] \nonumber\\
& + \: (1 - V_i)\left[\mathrm{I}(T_i \ge c_j)\rho_{ki} - \beta_{jk0} \right] \bigg\} \bigg| T_i,A_i\bigg] \bigg] \nonumber \displaybreak[1] \\
&= \E_{D_{ki},T_i,A_i} \bigg[\pi_i \left[m \mathrm{I}(T_i \ge c_j)\rho_{ki} - \beta_{jk0} + (1 - m)\mathrm{I}(T_i \ge c_j)\rho_{ki}\right] \nonumber \\
& + \: (1 - \pi_i)(\mathrm{I}(T_i \ge c_j)\rho_{ki} - \beta_{jk0})\bigg] \nonumber \\
&= \E_{D_{ki},T_i,A_i} \left[\pi_i (\mathrm{I}(T_i \ge c_j)\rho_{ki} - \beta_{jk0}) + (1 - \pi_i)(\mathrm{I}(T_i \ge c_j)\rho_{ki} - \beta_{jk0})\right] \nonumber \\
&= \E_{D_{ki},T_i,A_i} \left[\mathrm{I}(T_i \ge c_j)\rho_{ki} - \beta_{jk0}\right] = 0 \nonumber.
\end{align}
Hence, under assumption (A2), condition (C1) holds for $G_{\mathrm{IE}}^{\theta_s}(\alpha)$ and
$G_{\mathrm{IE}}^{\beta_{jk}}(\alpha)$.  

We now verify  conditions (C2) and (C3). The partial derivative of $G_{\mathrm{IE}}^{\theta_s}(\alpha)$ with respect to $\beta_{jk}$ equals $0$ for all $j,k$. Moreover,
\begin{eqnarray}
\frac{\partial}{\partial \theta_{s'}}G_{\mathrm{IE}}^{\theta_s}(\alpha) &=& \sum_{i=1}^{n} \frac{\partial}{\partial \theta_{s'}} \left\{V_i \left[m D_{si} - \theta_s + (1-m)\rho_{si}\right] + (1 - V_i)\left[\rho_{si} - \theta_s \right]\right\} \nonumber \\
&=& \sum_{i=1}^{n} \mathrm{I}(s' = s)\left\{- V_i - (1 - V_i)\right\} = -n\mathrm{I}(s' = s)\nonumber
\end{eqnarray}
and
\[
\frac{\partial}{\partial \tau_\rho}G_{\mathrm{IE}}^{\theta_s}(\alpha) = \left(\frac{\partial}{\partial \tau_{\rho_1}}G_{\mathrm{IE}}^{\theta_s}(\alpha),\frac{\partial}{\partial \tau_{\rho_2}}G_{\mathrm{IE}}^{\theta_s}(\alpha)\right)^\top.
\]
For each $l = 1,2$ and $s = 1,2$, we have
\[
\frac{\partial}{\partial \tau_{\rho_l}}G_{\mathrm{IE}}^{\theta_s}(\alpha) = \sum_{i=1}^{n}(1-m V_i)\frac{\partial}{\partial \tau_{\rho_l}}\rho_{si}.
\]
Recall that, under the multinomial logistic model,
\begin{equation}
\rho_{si} = \frac{e^{U_i^\top \tau_s}}{1 + e^{U_i^\top \tau_{\rho_1}} + e^{U_i^\top \tau_{\rho_2}}}, \qquad s = 1,2.
\label{form_rho:1}
\end{equation}
Thus, we obtain
\begin{equation}
\begin{array}{r l r l}
\frac{\partial}{\partial \tau_{\rho_1}} \rho_{1i} &= U_i \rho_{1i} (1 - \rho_{1i}), &  \frac{\partial}{\partial \tau_{\rho_2}} \rho_{1i} &= - U_i \rho_{1i}\rho_{2i}, \\ [16pt]
\frac{\partial}{\partial \tau_{\rho_2}} \rho_{2i} &= U_i \rho_{2i} (1 - \rho_{2i}), &  \frac{\partial}{\partial \tau_{\rho_1}} \rho_{2i} &= - U_i \rho_{1i}\rho_{2i}.
\end{array}
\label{deri_rho:1}
\end{equation}
The derivatives of $G_{\mathrm{IE}}^{\beta_{jk}}(\alpha)$ are 
\[
\frac{\partial}{\partial \theta_{s}} G_{\mathrm{IE}}^{\beta_{jk}}(\alpha) = 0, \qquad \frac{\partial}{\partial \beta_{j'k'}} G_{\mathrm{IE}}^{\beta_{jk}}(\alpha) = -n\mathrm{I}(j'k' = jk)
\]
and
\[
\frac{\partial}{\partial \tau_{\rho_l}}G_{\mathrm{IE}}^{\beta_{jk}}(\alpha) = \sum_{i=1}^{n}(1-m V_i)\mathrm{I}(T_i \ge c_j)\frac{\partial}{\partial \tau_{\rho_l}}\rho_{ki},
\]
where $\frac{\partial}{\partial \tau_{\rho_l}}\rho_{si}$ is in (\ref{deri_rho:1}). Hence, we  have the explicit form of the partial derivatives of both $G_{\mathrm{IE}}^{\theta_s}(\alpha)$ and $G_{\mathrm{IE}}^{\beta_{jk}}(\alpha).$
The only not null elements of the second--order partial derivative of $G_{\mathrm{IE}}^{\theta_s}(\alpha)$ and
$G_{\mathrm{IE}}^{\beta_{jk}}(\alpha)$ are those corresponding to the matrices $\frac{\partial^2}{\partial \tau \partial \tau^\top} G_{\mathrm{IE}}^{\theta_s}(\alpha)$ and  $\frac{\partial^2}{\partial \tau \partial \tau^\top} G_{\mathrm{IE}}^{\beta_{jk}}(\alpha)$. These elements involve the derivatives with respect to $\tau$ of quantities in (\ref{deri_rho:1}). It follows that conditions (C2) and (C3) hold for  $G_{\mathrm{IE}}^{\theta_s}(\alpha)$ and  $G_{\mathrm{IE}}^{\beta_{jk}}(\alpha)$ for each $s,j,k.$

\noindent
{\bf IPW estimator}.
Recall that the estimating function for $\theta_s$ is
\[
G_{\mathrm{IPW}}^{\theta_s}(\alpha) = \sum_{i=1}^{n}g_{i,\mathrm{IPW}}^{\theta_s}(\alpha) = \sum_{i=1}^{n}\frac{V_i}{\pi_i}\left( D_{si} - \theta_s\right)\quad  s = 1,2,
\]
and for the parameter $\beta_{jk}$ is
\[
G_{\mathrm{IPW}}^{\beta_{jk}}(\alpha) = \sum_{i=1}^{n}g_{i,\mathrm{IPW}}^{\beta_{jk}}(\alpha) = \sum_{i=1}^{n}\frac{V_i}{\pi_i}\left(\mathrm{I}(T_i \ge c_j) D_{ki} - \beta_{jk}\right) \quad j = 1,2;\,\, k = 1,2,3.
\]
We show that these estimating functions are unbiased under assumptions (A1) and (A2). In fact, we get
\begin{eqnarray}
\E \left[V_i \pi_i^{-1}\left(D_{si} - \theta_{s0}\right)\right] &=& \E_{D_s,T,A}\left[\E \left(V_i \pi_i^{-1}\left(D_{si} - \theta_{s0}\right) |,T_i,A_i\right)\right]\nonumber \\
&=& \E_{D_s,T,A}\left[\pi_i^{-1} \E \left(V_i|T_i,A_i\right)  \left(\rho_{si} - \theta_{s0}\right)\right]\nonumber \\
&=& \E_{D_s,T,A}\left[\rho_{si} - \theta_{s0}\right] = 0,\nonumber
\end{eqnarray} 
and
\begin{eqnarray}
\lefteqn{\E_{D_k,T_i,A_i}\left[\E\left[g_{i,\mathrm{IPW}}^{\beta_{jk}}(\alpha_0)| T_i,A_i \right]\right]} \nonumber \\
&=& \E_{D_{ki},T_i,A_i} \bigg[ \E \bigg[ \bigg\{\frac{V_i}{\pi_i}\left(\mathrm{I}(T_i \ge c_j) D_{ki} - \beta_{jk0}\right) \bigg\} \bigg| T_i,A_i\bigg] \bigg] \nonumber\\
&=& \E_{D_{ki},T_i,A_i} \left[\mathrm{I}(T_i \ge c_j)\rho_{ik} - \beta_{jk0}\right] = 0 \nonumber.
\end{eqnarray}
Therefore, condition (C1) holds for $G_{\mathrm{IPW}}^{\theta_s}(\alpha)$ and
$G_{\mathrm{IPW}}^{\beta_{jk}}(\alpha),$ all $s,j,k.$ 

Next, we obtain the partial derivatives
\[
\begin{array}{r l r l}
\frac{\partial}{\partial \theta_{s'}}G_{\mathrm{IPW}}^{\theta_s}(\alpha) &= -\displaystyle\sum_{i=1}^{n}\frac{V_i}{\pi_i}\mathrm{I}(s' = s), & \frac{\partial}{\partial \beta_{jk}}G_{\mathrm{IPW}}^{\theta_s}(\alpha) &= 0, \\
\frac{\partial}{\partial \theta_s}G_{\mathrm{IPW}}^{\beta_{jk}}(\alpha) &= 0, & \frac{\partial}{\partial \beta_{j'k'}}G_{\mathrm{IPW}}^{\beta_{jk}}(\alpha) &= -\displaystyle\sum_{i=1}^{n}\frac{V_i}{\pi_i}\mathrm{I}(j'k' = jk),
\end{array}
\]
and, for the logistic model (used to estimate the verification process)
\begin{eqnarray}
\frac{\partial}{\partial \tau_\pi}G_{\mathrm{IPW}}^{\theta_s}(\alpha) &=& - \sum_{i=1}^{n}\frac{V_i(D_{si} - \theta_s)U_i}{e^{U_i^\top \tau_\pi}}, \nonumber \\
\frac{\partial}{\partial \tau_\pi}G_{\mathrm{IPW}}^{\beta_{jk}}(\alpha) &=& - \sum_{i=1}^{n}\frac{V_i(\mathrm{I}(T_i \ge c_j)D_{ki} - \beta_{jk})U_i}{e^{U_i^\top \tau_\pi}}, \nonumber
\end{eqnarray}
or the probit model
\begin{eqnarray}
\frac{\partial}{\partial \tau_\pi}G_{\mathrm{IPW}}^{\theta_s}(\alpha) &=& -\displaystyle\sum_{i=1}^{n}\frac{V_i(D_{si} - \theta_s) U_i \phi(U_i^\top \tau_\pi)}{\Phi^2(U_i^\top \tau_\pi)}, \nonumber\\
\frac{\partial}{\partial \tau_\pi}G_{\mathrm{IPW}}^{\beta_{jk}}(\alpha) &=& -\displaystyle\sum_{i=1}^{n}\frac{V_i(\mathrm{I}(T_i \ge c_j)D_{ki} - \beta_{jk}) U_i \phi(U_i^\top \tau_\pi)}{\Phi^2(U_i^\top \tau_\pi)}. \nonumber
\end{eqnarray}
The computation of the second-order derivatives is similar and the results imply that the conditions (C2) and (C3) hold.

\noindent
{\bf SPE estimator}.
Recall that
\begin{eqnarray}
G_{\mathrm{SPE}}^{\theta_s}(\alpha) &=& \sum_{i=1}^{n}\left\{\frac{V_i}{\pi_i}\left(D_{si} - \theta_s\right) - \frac{V_i - \pi_i}{\pi_i}\left(\rho_{si} - \theta_s\right)\right\}, \quad  s = 1,2, \nonumber \\
G_{\mathrm{SPE}}^{\beta_{jk}}(\alpha) &=& \sum_{i=1}^{n}\left\{\frac{V_i}{\pi_i}\left[\mathrm{I}(T_i \ge c_j)D_{ki} - \beta_{jk}\right] - \frac{V_i - \pi_i}{\pi_i}\left[\mathrm{I}(T_i \ge c_j)\rho_{ki} - \beta_{jk}\right]\right\}, \nonumber
\end{eqnarray}
for $j = 1,2$ and $k = 1,2,3$. Under assumption (A1), $\E\left[g_{i,\mathrm{SPE}}^{\theta_k}(\alpha_0)\right]$ equals
\begin{eqnarray}
\lefteqn{\E_{D_s,T_i,A_i}\left[\E\left[g_{i,\mathrm{SPE}}^{\theta_s}(\alpha_0)| T_i,A_i \right]\right]} \nonumber\\
&=& \E_{D_{si},T_i,A_i} \left[\E\left[\left\{\frac{V_i}{\pi_i}\left(D_{si} - \theta_{s0}\right) - \frac{V_i - \pi_i}{\pi_i}\left(\rho_{si} - \theta_{s0}\right)\right\}| T_i,A_i \right]\right] \nonumber\\
&=& \E_{D_{si},T_i,A_i} \left[\pi^{-1}_i \left[ \E \left[D_{si} | T_i,A_i\right] - \theta_{s0}\right]\pi_i\right] \nonumber \\
&& - \: \pi_i^{-1}\E_{D_{si},T_i,A_i} \left[ \E \left[(V_i - \pi_i)(\rho_{si} - \theta_{s0})| T_i,A_i\right]\right] \nonumber \\
&=& \E_{D_{si},T_i,A_i}\left[ \rho_{si} - \theta_{s0}\right] - \pi_i^{-1}\E_{D_{si},T_i,A_i} \left[(\rho_{si} - \theta_{s0}) \E \left[(V_i - \pi_i)| T_i,A_i\right]\right] \nonumber \\
&=& \E_{D_{si},T_i,A_i} \left[\rho_{si} - \theta_{s0}\right] = 0 \nonumber.
\end{eqnarray}
and
\begin{eqnarray}
\lefteqn{\E_{D_k,T_i,A_i}\left[\E\left[g_{i,\mathrm{SPE}}^{\beta_{jk}}(\alpha_0)| T_i,A_i \right]\right]} \nonumber\\
&=& \E_{D_{ki},T_i,A_i} \bigg[ \E \bigg[ \bigg\{ \frac{V_i}{\pi_i}\left[\mathrm{I}(T_i \ge c_j)D_{ki} - \beta_{jk0}\right] \nonumber \\
&& - \: \frac{V_i - \pi_i}{\pi_i}\left[\mathrm{I}(T_i \ge c_j)\rho_{ki} - \beta_{jk0}\right] \bigg\} \bigg| T_i,A_i\bigg] \bigg] \nonumber\\
&=& \E_{D_{ki},T_i,A_i} \left[\left[\mathrm{I}(T_i \ge c_j) \E \left[D_{ki} | T_i,A_i\right] - \beta_{jk0}\right]\right] \nonumber \\
&& - \: \pi_i^{-1}\E_{D_{ki},T_i,A_i} \left[ \E \left[(V_i - \pi_i)(\mathrm{I}(T_i \ge c_j)\rho_{ki} - \beta_{jk0})| T_i,A_i\right]\right] \nonumber \\
&=& \E_{D_{ki},T_i,A_i} \left[\mathrm{I}(T_i \ge c_j)\rho_{ki} - \beta_{jk0}\right] = 0 \nonumber.
\end{eqnarray}
Therefore, condition (C1) holds for $G_{\mathrm{SPE}}^{\theta_s}(\alpha)$ and
$G_{\mathrm{SPE}}^{\beta_{jk}}(\alpha),$ all $s,j,k.$ 

Next, we obtain the partial derivatives
\[
\begin{array}{r l r l}
\frac{\partial}{\partial \theta_{s'}}G_{\mathrm{SPE}}^{\theta_s}(\alpha) &= -n\mathrm{I}(s' = s) & \frac{\partial}{\partial \beta_{jk}}G_{\mathrm{SPE}}^{\theta_s}(\alpha) &= 0 \\
\frac{\partial}{\partial \theta_s}G_{\mathrm{SPE}}^{\beta_{jk}}(\alpha) &= 0 & \frac{\partial}{\partial \beta_{j'k'}}g_{\mathrm{SPE}}^{\beta_{jk}}(\alpha) &= -n\mathrm{I}(j'k' = jk)
\end{array}
\]
and the partial derivative with respect to $\tau_{\rho} \equiv (\tau_{\rho_1},\tau_{\rho_2})$
\begin{eqnarray}
\frac{\partial}{\partial \tau_{\rho_l}}G_{\mathrm{SPE}}^{\theta_s}(\alpha) &=& \sum_{i=1}^{n}-\frac{V_i - \pi_i}{\pi_i}\frac{\partial}{\partial \tau_{\rho_l}}\rho_{si}, \nonumber \\
\frac{\partial}{\partial \tau_{\rho_l}}G_{\mathrm{SPE}}^{\beta_{jk}}(\alpha) &=& \sum_{i=1}^{n}-\frac{V_i - \pi_i}{\pi_i}\mathrm{I}(T_i \ge c_j)\frac{\partial}{\partial \tau_{\rho_l}}\rho_{si}, \nonumber
\end{eqnarray}
where $\frac{\partial}{\partial \tau_{\rho_l}}\rho_{si}$ is given in (\ref{deri_rho:1}). The partial derivative with respect to $\tau_\pi$, are
\begin{eqnarray}
\frac{\partial}{\partial \tau_\pi}G_{\mathrm{SPE}}^{\theta_s}(\alpha) &=& \sum_{i=1}^{n}\frac{V_iU_i(\rho_{si} - D_{si})}{e^{U_i^\top \tau_\pi}}, \nonumber \\
\frac{\partial}{\partial \tau_\pi}G_{\mathrm{SPE}}^{\beta_{jk}}(\alpha) &=& \sum_{i=1}^{n}\frac{V_iU_i\mathrm{I}(T_i \ge c_j)(\rho_{ki} - D_{ki})}{e^{U_i^\top \tau}}, \nonumber
\end{eqnarray}
when the logistic model is used for the verification process. If the
probit model is used, we have
\begin{eqnarray}
\frac{\partial}{\partial \tau_\pi}G_{\mathrm{SPE}}^{\theta_s}(\alpha) &=& \sum_{i=1}^{n}\frac{V_iU_i(D_{si} - \rho_{si})\phi(U_i^\top \tau_\pi)}{\Phi^2(U_i^\top \tau_\pi)}, \nonumber\\
\frac{\partial}{\partial \tau_\pi}G_{\mathrm{SPE}}^{\beta_{jk}}(\alpha) &=& \sum_{i=1}^{n}\frac{V_iU_i\mathrm{I}(T_i \ge c_j)(D_{si} - \rho_{ki})\phi(U_i^\top \tau_\pi)}{\Phi^2(U_i^\top \tau_\pi)}. \nonumber
\end{eqnarray}
Also in this case, computation of the second--order partial derivatives develops similarly and  the results imply that the conditions (C2) and (C3) hold.

\noindent
{\bf Asymptotic covariance matrix}.
Recall that the  asymptotic covariance matrix of TCF estimators is obtained as
\[
\frac{\partial h({\alpha_0})}{\partial \alpha} {\Sigma} \frac{\partial h^\top(\alpha_0)}{\partial \alpha},
\]
where $h(\alpha) = \left(1 - \frac{\beta_{11}}{\theta_1}, \frac{\beta_{12} - \beta_{22}}{\theta_2}, \frac{\beta_{23}}{1 - (\theta_1 + \theta_2)}\right)^\top$ and 
\[
\Sigma = \left[\E \left\{\frac{\partial}{\partial \alpha}g_{i,*}(\alpha_0)\right\}\right]^{-1} \E\{g_{i,*}(\alpha_0)g_{i,*}(\alpha_0)^\top\}\left[\E \left\{\frac{\partial}{\partial \alpha}g_{i,*}^\top(\alpha_0)\right\}\right]^{-1}.
\]
It is easy to derive that
\[
\frac{\partial h({\alpha})}{\partial \alpha} = \left(\begin{array}{r r r r r r r}
\frac{ {\beta}_{11}}{{\theta}_1^2} & 0 & -\frac{1}{{\theta}_1} & 0 & 0 & 0 & 0\\
0 & -\frac{{\beta}_{12} -  {\beta}_{22}}{{\theta}_2^2} & 0 & \frac{1}{{\theta}_2} & -\frac{1}{{\theta}_2} & 0 & 0 \\
\frac{{\beta}_{23}}{(1 -  {\theta}_1 -  {\theta}_2)^2} & \frac{{\beta}_{23}}{(1 -  {\theta}_1 -  {\theta}_2)^2} & 0 & 0 & 0  & \frac{1}{1 -  {\theta}_1 -  {\theta}_2} & 0
\end{array}
\right).
\]
The elements $g_{i,*}({\alpha})$ of the estimating functions $G_{*}({\alpha})$ are given in the previous paragraphs. 
Now, we derive the explicit form for $\frac{\partial}{\partial \alpha}g_{i,*}({\alpha})$.

First, we consider the class of imputation estimators. We get
\begin{eqnarray}
\frac{\partial}{\partial \alpha}g_{i,\mathrm{IE}}^{\theta_1}(\alpha) &=& \left(-1,0,0,0,0,0,A_{11i},A_{21i}\right), \nonumber\\
\frac{\partial}{\partial \alpha}g_{i,\mathrm{IE}}^{\theta_2}(\alpha) &=& \left(0,-1,0,0,0,0,A_{12i},A_{22i}\right), \nonumber\\
\frac{\partial}{\partial \alpha}g_{i,\mathrm{IE}}^{\beta_{11}}(\alpha) &=& \left(0,0,-1,0,0,0,B_{111i},B_{121i}\right), \nonumber\\
\frac{\partial}{\partial \alpha}g_{i,\mathrm{IE}}^{\beta_{12}}(\alpha) &=& \left(0,0,0,-1,0,0,B_{112i},B_{122i}\right), \nonumber\\
\frac{\partial}{\partial \alpha}g_{i,\mathrm{IE}}^{\beta_{22}}(\alpha) &=& \left(0,0,0,0,-1,0,B_{212i},B_{222i}\right), \nonumber\\
\frac{\partial}{\partial \alpha}g_{i,\mathrm{IE}}^{\beta_{23}}(\alpha) &=& \left(0,0,0,0,0,-1, B_{213i}, B_{223i}\right), \nonumber\\
\frac{\partial}{\partial \alpha}g_{i,\mathrm{IE}}^{\tau_{\rho_1}}(\alpha) &=& \left(0,0,0,0,0,0,C_{11i},C_{21i}\right), \nonumber\\
\frac{\partial}{\partial \alpha}g_{i,\mathrm{IE}}^{\tau_{\rho_2}}(\alpha) &=& \left(0,0,0,0,0,0,C_{12i},C_{22i}\right), \nonumber
\end{eqnarray}
where 
\[
A_{lsi} = (1-mV_i)\frac{\partial}{\partial \tau_{\rho_l}}\rho_{si}, \quad B_{jlki} = (1 - mV_i)\mathrm{I}(T_i \ge c_j)\frac{\partial}{\partial \tau_{\rho_l}}\rho_{ki}, \quad C_{lsi} = \frac{\partial}{\partial \tau_{\rho_l}}g_{i}^{\tau_{\rho_s}}(\alpha),
\]
with $j,l,s = 1,2$, $k = 1,2,3$ and $i = 1,\ldots,n$ (see (\ref{deri_tau:1}) and (\ref{deri_rho:1}) for the multinomial logistic modeling of 
the disease process). 

Thus,
\begin{eqnarray}
\frac{\partial}{\partial \alpha}g_{i,\mathrm{IE}}(\alpha) = \left(\begin{array}{r r r r r r r r}
-1 & 0 & 0 & 0 & 0 & 0 & A_{11i} & A_{21i} \\
0 & -1 & 0 & 0 & 0 & 0 & A_{12i} & A_{22i} \\
0 & 0 & -1 & 0 & 0 & 0 & B_{111i} & B_{121i} \\
0 & 0 & 0 & -1 & 0 & 0 & B_{112i} & B_{122i} \\
0 & 0 & 0 & 0 & -1 & 0 & B_{212i} & B_{222i} \\
0 & 0 & 0 & 0 & 0 & -1 & B_{213i} & B_{223i} \\
0 & 0 & 0 & 0 & 0 & 0 & C_{11i} & C_{21i} \\
0 & 0 & 0 & 0 & 0 & 0 & C_{12i} & C_{22i}
\end{array}\right)\nonumber.
\end{eqnarray}
Then, we consider the inverse probability weighted estimators. Let
\[
A_{ki} = \frac{\partial}{\partial \tau_\pi}g_{i,\mathrm{IPW}}^{\theta_k}(\alpha), \qquad B_{jki} = \frac{\partial}{\partial \tau_\pi}g_{i,\mathrm{IPW}}^{\beta_{jk}}(\alpha), \qquad C_i = \frac{\partial}{\partial \tau_\pi} g_i^{\tau_\pi}(\alpha).
\]
Note that these quantities change according to the model, logit or probit, chosen for the
verification process. We obtain
\begin{align}
\frac{\partial}{\partial \alpha}g_{i,\mathrm{IPW}}^{\theta_1}(\alpha) &= \left(-\frac{V_i}{\pi_i},0,0,0,0,0,A_{1i}\right), \nonumber  \\
\frac{\partial}{\partial \alpha}g_{i,\mathrm{IPW}}^{\theta_2}(\alpha) &= \left(0,-\frac{V_i}{\pi_i},0,0,0,0,A_{2i}\right), \nonumber \\
\frac{\partial}{\partial \alpha}g_{i,\mathrm{IPW}}^{\beta_{11}}(\alpha) &= \left(0,0,-\frac{V_i}{\pi_i},0,0,0,B_{11i}\right), \nonumber \displaybreak[1]\\
\frac{\partial}{\partial \alpha}g_{i,\mathrm{IPW}}^{\beta_{12}}(\alpha) &= \left(0,0,0,-\frac{V_i}{\pi_i},0,0,B_{12i}\right), \nonumber \\
\frac{\partial}{\partial \alpha}g_{i,\mathrm{IPW}}^{\beta_{22}}(\alpha) &= \left(0,0,0,0,-\frac{V_i}{\pi_i},0,B_{22i}\right), \nonumber \\
\frac{\partial}{\partial \alpha}g_{i,\mathrm{IPW}}^{\beta_{23}}(\alpha) &= \left(0,0,0,0,0,-\frac{V_i}{\pi_i},B_{23i}\right), \nonumber\\
\frac{\partial}{\partial \alpha}g_{i}^{\tau_\pi}(\alpha) &= \left(0,0,0,0,0,0,C_i \right), \nonumber
\end{align}
Summarizing
\begin{eqnarray}
\frac{\partial}{\partial \alpha}g_{i,\mathrm{IPW}}(\alpha) = \left(\begin{array}{r r r r r r r}
-\frac{V_i}{\pi_i} & 0 & 0 & 0 & 0 & 0 & A_{1i} \\
0 & -\frac{V_i}{\pi_i} & 0 & 0 & 0 & 0 & A_{2i} \\
0 & 0 & -\frac{V_i}{\pi_i} & 0 & 0 & 0 & B_{11i} \\
0 & 0 & 0 & -\frac{V_i}{\pi_i} & 0 & 0 & B_{12i} \\
0 & 0 & 0 & 0 & -\frac{V_i}{\pi_i} & 0 & B_{22i} \\
0 & 0 & 0 & 0 & 0 & -\frac{V_i}{\pi_i} & B_{23i} \\
0 & 0 & 0 & 0 & 0 & 0 & C_i
\end{array}\right)\nonumber.
\end{eqnarray}
Finally, we consider the SPE estimators. We have
\begin{eqnarray}
\frac{\partial}{\partial \alpha}g_{i,\mathrm{SPE}}^{\theta_1}(\alpha) &=& \left(-1,0,0,0,0,0,H_{11i},H_{21i},D_{1i}\right), \nonumber\\
\frac{\partial}{\partial \alpha}g_{i,\mathrm{SPE}}^{\theta_2}(\alpha) &=& \left(0,-1,0,0,0,0,H_{12i},H_{22i},D_{2i}\right), \nonumber\\
\frac{\partial}{\partial \alpha}g_{i,\mathrm{SPE}}^{\beta_{11}}(\alpha) &=& \left(0,0,-1,0,0,0,G_{111i},G_{121i},E_{11i}\right), \nonumber\\
\frac{\partial}{\partial \alpha}g_{i,\mathrm{SPE}}^{\beta_{12}}(\alpha) &=& \left(0,0,0,-1,0,0,G_{112i},G_{122i},E_{12i}\right), \nonumber\\
\frac{\partial}{\partial \alpha}g_{i,\mathrm{SPE}}^{\beta_{22}}(\alpha) &=& \left(0,0,0,0,-1,0,G_{212i},G_{222i},E_{22i}\right), \nonumber\\
\frac{\partial}{\partial \alpha}g_{i,\mathrm{SPE}}^{\beta_{23}}(\alpha) &=& \left(0,0,0,0,0,-1, G_{213i}, G_{223i}, E_{23i}\right), \nonumber\\
\frac{\partial}{\partial \alpha}g_{i,\mathrm{SPE}}^{\tau_{\rho_1}}(\alpha) &=& \left(0,0,0,0,0,0,C_{11i},C_{21i},0\right), \nonumber\\
\frac{\partial}{\partial \alpha}g_{i,\mathrm{SPE}}^{\tau_{\rho_2}}(\alpha) &=& \left(0,0,0,0,0,0,C_{12i},C_{22i},0\right), \nonumber \\
\frac{\partial}{\partial \alpha}g_{i,\mathrm{SPE}}^{\tau_\pi}(\alpha) &=& \left(0,0,0,0,0,0,0,0,C_i\right), \nonumber
\end{eqnarray}
where
\[
\begin{array}{r l r l}
H_{lki} &= -\dfrac{V_i - \pi_i}{\pi_i}\dfrac{\partial}{\partial \tau_{\rho_l}}\rho_{ki}, & G_{jlki} &= -\frac{V_i - \pi_i}{\pi_i}\mathrm{I}(T_i \ge c_j)\frac{\partial}{\partial \tau_{\rho_l}}\rho_{ki}, \\
D_{si} &= \dfrac{\partial}{\partial \tau_\pi}g_{i,\mathrm{SPE}}^{\theta_s}(\alpha), & E_{jki} &= \dfrac{\partial}{\partial \tau_\pi}g_{i,\mathrm{SPE}}^{\beta_{jk}}(\alpha),
\end{array}
\]
and $C_{lsi}$ and $C_i$ are defined above. Therefore
\begin{eqnarray}
\frac{\partial}{\partial \alpha}g_{i,\mathrm{SPE}}(\alpha) = \left(\begin{array}{r r r r r r r r r}
-1 & 0 & 0 & 0 & 0 & 0 & H_{11i} & H_{21i} & D_{1i} \\
0 & -1 & 0 & 0 & 0 & 0 & H_{12i} & H_{22i} & D_{2i}\\
0 & 0 & -1 & 0 & 0 & 0 & G_{111i} & G_{121i} & E_{11i}\\
0 & 0 & 0 & -1 & 0 & 0 & G_{112i} & G_{122i} & E_{12i}\\
0 & 0 & 0 & 0 & -1 & 0 & G_{212i} & G_{222i} & E_{22i}\\
0 & 0 & 0 & 0 & 0 & -1 & G_{213i} & G_{223i} & E_{23i}\\
0 & 0 & 0 & 0 & 0 & 0 & C_{11i} & C_{21i} & 0 \\
0 & 0 & 0 & 0 & 0 & 0 & C_{12i} & C_{22i} & 0\\
0 & 0 & 0 & 0 & 0 & 0 & 0 & 0 & C_i
\end{array}\right)\nonumber.
\end{eqnarray}

\section{Simulation results of Study 1 and Study 2}\label{app:simu:roc}
In this section, we present  results of simulations in Study 1 and Study 2. Tables \ref{scen1:1:500}--\ref{scen1:3:1000} show simulation results of Study 1 for sample sizes equal to $500$ and $1000$, respectively. The results of Study 2 are presented in Tables \ref{scen2:1:1000} and \ref{scen2:3:1000}, corresponding to the first and third value of $\Lambda,$ respectively.

\begin{sidewaystable}
\begin{center}
\caption{Simulation results from 5000 replications when both models for $\rho_{k}$ and $\pi$ are correctly specified (Study 1) and the first value of $\Lambda$ is considered. ``True'' denotes the true parameter value. Sample size = 500.}
\label{scen1:1:500}
\begin{scriptsize}
\begin{tabular}{r c c c c c c c c c c c c}
  \hline
  & TCF$_1$ & TCF$_2$ & TCF$_3$ & MC.sd$_1$ & MC.sd$_2$ & MC.sd$_3$ & asy.sd$_1$ & asy.sd$_2$ & asy.sd$_3$ & boot.sd$_1$ & boot.sd$_2$ & boot.sd$_3$ \\ 
   \hline 
   & \multicolumn{12}{c}{cut-point = (2,4)} \\
True & 0.5000 & 0.4347 & 0.9347 &  &  &  &  &  &  &  &  &  \\ 
  FI & 0.5007 & 0.4357 & 0.9343 & 0.0373 & 0.0339 & 0.0190 & 0.0343 & 0.0309 & 0.0360 & 0.0372 & 0.0340 & 0.0193 \\ 
  MSI & 0.5008 & 0.4357 & 0.9343 & 0.0382 & 0.0378 & 0.0225 & 0.0353 & 0.0354 & 0.0380 & 0.0381 & 0.0380 & 0.0227 \\ 
  IPW & 0.5020 & 0.4357 & 0.9345 & 0.0506 & 0.0508 & 0.0260 & 0.0493 & 0.0502 & 0.0280 & 0.0499 & 0.0507 & 0.0262 \\ 
  SPE & 0.5012 & 0.4357 & 0.9345 & 0.0401 & 0.0456 & 0.0256 & 0.0399 & 0.0450 & 0.0249 & 0.0409 & 0.0457 & 0.0259 \\ 
  \hline 
     & \multicolumn{12}{c}{cut-point = (2,5)} \\
  True  & 0.5000 & 0.7099 & 0.7752 &  &  &  &  &  &  &  &  &  \\ 
  FI & 0.5007 & 0.7115 & 0.7747 & 0.0373 & 0.0329 & 0.0372 & 0.0343 & 0.0321 & 0.0436 & 0.0372 & 0.0329 & 0.0374 \\ 
  MSI & 0.5008 & 0.7111 & 0.7743 & 0.0382 & 0.0361 & 0.0395 & 0.0353 & 0.0355 & 0.0455 & 0.0381 & 0.0362 & 0.0396 \\ 
  IPW & 0.5020 & 0.7111 & 0.7743 & 0.0506 & 0.0496 & 0.0464 & 0.0493 & 0.0479 & 0.0500 & 0.0499 & 0.0487 & 0.0463 \\ 
  SPE & 0.5012 & 0.7112 & 0.7744 & 0.0401 & 0.0434 & 0.0442 & 0.0399 & 0.0425 & 0.0433 & 0.0409 & 0.0436 & 0.0448 \\ 
  \hline 
     & \multicolumn{12}{c}{cut-point = (2,7)} \\
  True & 0.5000 & 0.9230 & 0.2248 &  &  &  &  &  &  &  &  &  \\ 
  FI & 0.5007 & 0.9228 & 0.2241 & 0.0373 & 0.0167 & 0.0377 & 0.0343 & 0.0229 & 0.0310 & 0.0372 & 0.0169 & 0.0370 \\ 
  MSI & 0.5008 & 0.9230 & 0.2242 & 0.0382 & 0.0199 & 0.0382 & 0.0353 & 0.0253 & 0.0317 & 0.0381 & 0.0202 & 0.0376 \\ 
  IPW & 0.5020 & 0.9232 & 0.2242 & 0.0506 & 0.0266 & 0.0534 & 0.0493 & 0.0251 & 0.0520 & 0.0499 & 0.0263 & 0.0525 \\ 
  SPE & 0.5012 & 0.9235 & 0.2245 & 0.0401 & 0.0255 & 0.0416 & 0.0399 & 0.0244 & 0.0403 & 0.0409 & 0.0253 & 0.0545 \\ 
  \hline 
     & \multicolumn{12}{c}{cut-point = (4,5)} \\
  True & 0.9347 & 0.2752 & 0.7752 &  &  &  &  &  &  &  &  &  \\ 
  FI & 0.9349 & 0.2758 & 0.7747 & 0.0176 & 0.0285 & 0.0372 & 0.0161 & 0.0246 & 0.0436 & 0.0174 & 0.0289 & 0.0374 \\ 
  MSI & 0.9348 & 0.2754 & 0.7743 & 0.0194 & 0.0326 & 0.0395 & 0.0179 & 0.0291 & 0.0455 & 0.0191 & 0.0328 & 0.0396 \\ 
  IPW & 0.9352 & 0.2754 & 0.7743 & 0.0299 & 0.0472 & 0.0464 & 0.0263 & 0.0466 & 0.0500 & 0.0284 & 0.0472 & 0.0463 \\ 
  SPE & 0.9353 & 0.2755 & 0.7744 & 0.0270 & 0.0407 & 0.0442 & 0.0249 & 0.0399 & 0.0433 & 0.0291 & 0.0404 & 0.0448 \\ 
  \hline 
     & \multicolumn{12}{c}{cut-point = (4,7)} \\
  True & 0.9347 & 0.4883 & 0.2248 &  &  &  &  &  &  &  &  &  \\ 
  FI & 0.9349 & 0.4872 & 0.2241 & 0.0176 & 0.0375 & 0.0377 & 0.0161 & 0.0347 & 0.0310 & 0.0174 & 0.0375 & 0.0370 \\ 
  MSI & 0.9348 & 0.4872 & 0.2242 & 0.0194 & 0.0396 & 0.0382 & 0.0179 & 0.0373 & 0.0317 & 0.0191 & 0.0398 & 0.0376 \\ 
  IPW & 0.9352 & 0.4876 & 0.2242 & 0.0299 & 0.0520 & 0.0534 & 0.0263 & 0.0511 & 0.0520 & 0.0284 & 0.0516 & 0.0525 \\ 
  SPE & 0.9353 & 0.4877 & 0.2245 & 0.0270 & 0.0462 & 0.0416 & 0.0249 & 0.0456 & 0.0403 & 0.0291 & 0.0463 & 0.0545 \\ 
  \hline 
     & \multicolumn{12}{c}{cut-point = (5,7)} \\
  True & 0.9883 & 0.2132 & 0.2248 &  &  &  &  &  &  &  &  &  \\ 
  FI & 0.9882 & 0.2114 & 0.2241 & 0.0051 & 0.0310 & 0.0377 & 0.0047 & 0.0274 & 0.0310 & 0.0054 & 0.0306 & 0.0370 \\ 
  MSI & 0.9882 & 0.2118 & 0.2242 & 0.0069 & 0.0330 & 0.0382 & 0.0058 & 0.0299 & 0.0317 & 0.0069 & 0.0329 & 0.0376 \\ 
  IPW & 0.9886 & 0.2121 & 0.2242 & 0.0137 & 0.0467 & 0.0534 & 0.0088 & 0.0441 & 0.0520 & 0.0130 & 0.0452 & 0.0525 \\ 
  SPE & 0.9886 & 0.2123 & 0.2245 & 0.0133 & 0.0398 & 0.0416 & 0.0097 & 0.0387 & 0.0403 & 0.0127 & 0.0398 & 0.0545 \\ 
   \hline
\end{tabular}
\end{scriptsize}
\end{center}
\end{sidewaystable}

\begin{sidewaystable}
\begin{center}
\caption{Simulation results from 5000 replications when both models for $\rho_{k}$ and $\pi$ are correctly specified (Study 1) and the first value of $\Lambda$ is considered. ``True'' denotes the true parameter value. Sample size = 1000.}
\label{scen1:1:1000}
\begin{scriptsize}
\begin{tabular}{r c c c c c c c c c c c c}
  \hline
 & TCF$_1$ & TCF$_2$ & TCF$_3$ & MC.sd$_1$ & MC.sd$_2$ & MC.sd$_3$ & asy.sd$_1$ & asy.sd$_2$ & asy.sd$_3$ & boot.sd$_1$ & boot.sd$_2$ & boot.sd$_3$ \\ 
 \hline 
    & \multicolumn{12}{c}{cut-point = (2,4)} \\
True & 0.5000 & 0.4347 & 0.9347 &  &  &  &  &  &  &  &  &  \\ 
  FI & 0.5001 & 0.4346 & 0.9348 & 0.0265 & 0.0235 & 0.0133 & 0.0242 & 0.0217 & 0.0254 & 0.0262 & 0.0238 & 0.0135 \\ 
  MSI & 0.5002 & 0.4349 & 0.9349 & 0.0273 & 0.0264 & 0.0157 & 0.0250 & 0.0250 & 0.0268 & 0.0269 & 0.0268 & 0.0160 \\ 
  IPW & 0.5006 & 0.4357 & 0.9349 & 0.0362 & 0.0357 & 0.0184 & 0.0352 & 0.0358 & 0.0198 & 0.0353 & 0.0360 & 0.0185 \\ 
  SPE & 0.5004 & 0.4353 & 0.9349 & 0.0287 & 0.0321 & 0.0180 & 0.0282 & 0.0320 & 0.0180 & 0.0283 & 0.0322 & 0.0182 \\ 
  \hline 
      & \multicolumn{12}{c}{cut-point = (2,5)} \\
  True & 0.5000 & 0.7099 & 0.7752 &  &  &  &  &  &  &  &  &  \\ 
  FI & 0.5001 & 0.7096 & 0.7758 & 0.0265 & 0.0232 & 0.0260 & 0.0242 & 0.0227 & 0.0308 & 0.0262 & 0.0232 & 0.0263 \\ 
  MSI & 0.5002 & 0.7095 & 0.7756 & 0.0273 & 0.0256 & 0.0276 & 0.0250 & 0.0251 & 0.0321 & 0.0269 & 0.0256 & 0.0279 \\ 
  IPW & 0.5006 & 0.7104 & 0.7756 & 0.0362 & 0.0349 & 0.0325 & 0.0352 & 0.0342 & 0.0354 & 0.0353 & 0.0345 & 0.0327 \\ 
  SPE & 0.5004 & 0.7100 & 0.7757 & 0.0287 & 0.0309 & 0.0307 & 0.0282 & 0.0303 & 0.0308 & 0.0283 & 0.0305 & 0.0310 \\ 
  \hline 
      & \multicolumn{12}{c}{cut-point = (2,7)} \\
  True & 0.5000 & 0.9230 & 0.2248 &  &  &  &  &  &  &  &  &  \\ 
  FI & 0.5001 & 0.9228 & 0.2250 & 0.0265 & 0.0117 & 0.0260 & 0.0242 & 0.0160 & 0.0220 & 0.0262 & 0.0119 & 0.0262 \\ 
  MSI & 0.5002 & 0.9230 & 0.2252 & 0.0273 & 0.0141 & 0.0265 & 0.0250 & 0.0178 & 0.0226 & 0.0269 & 0.0142 & 0.0266 \\ 
  IPW & 0.5006 & 0.9233 & 0.2258 & 0.0362 & 0.0187 & 0.0383 & 0.0352 & 0.0181 & 0.0374 & 0.0353 & 0.0186 & 0.0375 \\ 
  SPE & 0.5004 & 0.9235 & 0.2256 & 0.0287 & 0.0180 & 0.0286 & 0.0282 & 0.0176 & 0.0286 & 0.0283 & 0.0180 & 0.0291 \\ 
  \hline 
      & \multicolumn{12}{c}{cut-point = (4,5)} \\
  True & 0.9347 & 0.2752 & 0.7752 &  &  &  &  &  &  &  &  &  \\ 
  FI & 0.9346 & 0.2749 & 0.7758 & 0.0124 & 0.0203 & 0.0260 & 0.0115 & 0.0173 & 0.0308 & 0.0123 & 0.0203 & 0.0263 \\ 
  MSI & 0.9345 & 0.2746 & 0.7756 & 0.0137 & 0.0232 & 0.0276 & 0.0128 & 0.0205 & 0.0321 & 0.0136 & 0.0231 & 0.0279 \\ 
  IPW & 0.9346 & 0.2748 & 0.7756 & 0.0213 & 0.0337 & 0.0325 & 0.0196 & 0.0332 & 0.0354 & 0.0205 & 0.0335 & 0.0327 \\ 
  SPE & 0.9344 & 0.2747 & 0.7757 & 0.0190 & 0.0286 & 0.0307 & 0.0183 & 0.0283 & 0.0308 & 0.0187 & 0.0285 & 0.0310 \\  
  \hline 
      & \multicolumn{12}{c}{cut-point = (4,7)} \\
  True & 0.9347 & 0.4883 & 0.2248 &  &  &  &  &  &  &  &  &  \\ 
  FI & 0.9346 & 0.4882 & 0.2250 & 0.0124 & 0.0262 & 0.0260 & 0.0115 & 0.0245 & 0.0220 & 0.0123 & 0.0264 & 0.0262 \\ 
  MSI & 0.9345 & 0.4881 & 0.2252 & 0.0137 & 0.0279 & 0.0265 & 0.0128 & 0.0263 & 0.0226 & 0.0136 & 0.0280 & 0.0266 \\ 
  IPW & 0.9346 & 0.4876 & 0.2258 & 0.0213 & 0.0365 & 0.0383 & 0.0196 & 0.0364 & 0.0374 & 0.0205 & 0.0366 & 0.0375 \\ 
  SPE & 0.9344 & 0.4882 & 0.2256 & 0.0190 & 0.0325 & 0.0286 & 0.0183 & 0.0324 & 0.0286 & 0.0187 & 0.0326 & 0.0291 \\ 
  \hline 
      & \multicolumn{12}{c}{cut-point = (5,7)} \\
  True & 0.9883 & 0.2132 & 0.2248 &  &  &  &  &  &  &  &  &  \\ 
  FI & 0.9881 & 0.2132 & 0.2250 & 0.0036 & 0.0217 & 0.0260 & 0.0033 & 0.0194 & 0.0220 & 0.0037 & 0.0216 & 0.0262 \\ 
  MSI & 0.9881 & 0.2135 & 0.2252 & 0.0048 & 0.0234 & 0.0265 & 0.0044 & 0.0212 & 0.0226 & 0.0049 & 0.0232 & 0.0266 \\ 
  IPW & 0.9882 & 0.2129 & 0.2258 & 0.0100 & 0.0325 & 0.0383 & 0.0077 & 0.0317 & 0.0374 & 0.0097 & 0.0320 & 0.0375 \\ 
  SPE & 0.9880 & 0.2135 & 0.2256 & 0.0097 & 0.0282 & 0.0286 & 0.0080 & 0.0276 & 0.0286 & 0.0094 & 0.0278 & 0.0291 \\ 
   \hline
\end{tabular}
\end{scriptsize}
\end{center}
\end{sidewaystable}

\begin{sidewaystable}
\begin{center}
\caption{Simulation results from 5000 replications when both models for $\rho_{k}$ and $\pi$ are correctly specified (Study 1) and the second value of $\Lambda$ is considered. ``True'' denotes the true parameter value. Sample size = 500.}
\label{scen1:2:500}
\begin{scriptsize}
\begin{tabular}{r c c c c c c c c c c c c}
  \hline
 & TCF$_1$ & TCF$_2$ & TCF$_3$ & MC.sd$_1$ & MC.sd$_2$ & MC.sd$_3$ & asy.sd$_1$ & asy.sd$_2$ & asy.sd$_3$ & boot.sd$_1$ & boot.sd$_2$ & boot.sd$_3$ \\ 
 \hline 
       & \multicolumn{12}{c}{cut-point = (2,4)} \\
True & 0.5000 & 0.3970 & 0.8970 &  &  &  &  &  &  &  &  &  \\ 
  FI & 0.4999 & 0.3974 & 0.8965 & 0.0356 & 0.0294 & 0.0253 & 0.0326 & 0.0263 & 0.0355 & 0.0355 & 0.0298 & 0.0256 \\ 
  MSI & 0.4999 & 0.3975 & 0.8961 & 0.0368 & 0.0355 & 0.0291 & 0.0339 & 0.0326 & 0.0380 & 0.0367 & 0.0355 & 0.0291 \\ 
  IPW & 0.5000 & 0.3977 & 0.8962 & 0.0470 & 0.0492 & 0.0373 & 0.0460 & 0.0484 & 0.0369 & 0.0464 & 0.0487 & 0.0368 \\ 
  SPE & 0.5000 & 0.3976 & 0.8963 & 0.0402 & 0.0446 & 0.0363 & 0.0397 & 0.0438 & 0.0348 & 0.0400 & 0.0442 & 0.0356 \\ 
 \hline 
       & \multicolumn{12}{c}{cut-point = (2,5)} \\  
  True & 0.5000 & 0.6335 & 0.7365 &  &  &  &  &  &  &  &  &  \\ 
  FI & 0.4999 & 0.6342 & 0.7360 & 0.0356 & 0.0303 & 0.0410 & 0.0326 & 0.0287 & 0.0439 & 0.0355 & 0.0306 & 0.0409 \\ 
  MSI & 0.4999 & 0.6339 & 0.7358 & 0.0368 & 0.0356 & 0.0437 & 0.0339 & 0.0342 & 0.0463 & 0.0367 & 0.0357 & 0.0435 \\ 
  IPW & 0.5000 & 0.6336 & 0.7363 & 0.0470 & 0.0477 & 0.0528 & 0.0460 & 0.0471 & 0.0529 & 0.0464 & 0.0474 & 0.0514 \\ 
  SPE & 0.5000 & 0.6341 & 0.7362 & 0.0402 & 0.0440 & 0.0494 & 0.0397 & 0.0434 & 0.0479 & 0.0400 & 0.0437 & 0.0483 \\ 
  \hline 
        & \multicolumn{12}{c}{cut-point = (2,7)} \\
  True & 0.5000 & 0.8682 & 0.2635 &  &  &  &  &  &  &  &  &  \\ 
  FI & 0.4999 & 0.8677 & 0.2631 & 0.0356 & 0.0222 & 0.0388 & 0.0326 & 0.0233 & 0.0352 & 0.0355 & 0.0219 & 0.0391 \\ 
  MSI & 0.4999 & 0.8678 & 0.2633 & 0.0368 & 0.0263 & 0.0401 & 0.0339 & 0.0272 & 0.0370 & 0.0367 & 0.0261 & 0.0407 \\ 
  IPW & 0.5000 & 0.8677 & 0.2638 & 0.0470 & 0.0354 & 0.0477 & 0.0460 & 0.0341 & 0.0486 & 0.0464 & 0.0349 & 0.0484 \\ 
  SPE & 0.5000 & 0.8679 & 0.2635 & 0.0402 & 0.0336 & 0.0420 & 0.0397 & 0.0326 & 0.0420 & 0.0400 & 0.0331 & 0.0424 \\ 
 \hline 
       & \multicolumn{12}{c}{cut-point = (4,5)} \\  
  True & 0.8970 & 0.2365 & 0.7365 &  &  &  &  &  &  &  &  &  \\ 
  FI & 0.8972 & 0.2368 & 0.7360 & 0.0205 & 0.0257 & 0.0410 & 0.0195 & 0.0223 & 0.0439 & 0.0203 & 0.0258 & 0.0409 \\ 
  MSI & 0.8968 & 0.2364 & 0.7358 & 0.0229 & 0.0310 & 0.0437 & 0.0219 & 0.0279 & 0.0463 & 0.0226 & 0.0308 & 0.0435 \\ 
  IPW & 0.8969 & 0.2359 & 0.7363 & 0.0268 & 0.0421 & 0.0528 & 0.0261 & 0.0411 & 0.0529 & 0.0265 & 0.0415 & 0.0514 \\ 
  SPE & 0.8967 & 0.2365 & 0.7362 & 0.0260 & 0.0374 & 0.0494 & 0.0254 & 0.0370 & 0.0479 & 0.0257 & 0.0373 & 0.0483 \\ 
 \hline 
       & \multicolumn{12}{c}{cut-point = (4,7)} \\  
  True & 0.8970 & 0.4711 & 0.2635 &  &  &  &  &  &  &  &  &  \\ 
  FI & 0.8972 & 0.4703 & 0.2631 & 0.0205 & 0.0356 & 0.0388 & 0.0195 & 0.0328 & 0.0352 & 0.0203 & 0.0356 & 0.0391 \\ 
  MSI & 0.8968 & 0.4703 & 0.2633 & 0.0229 & 0.0398 & 0.0401 & 0.0219 & 0.0370 & 0.0370 & 0.0226 & 0.0394 & 0.0407 \\ 
  IPW & 0.8969 & 0.4699 & 0.2638 & 0.0268 & 0.0492 & 0.0477 & 0.0261 & 0.0483 & 0.0486 & 0.0265 & 0.0486 & 0.0484 \\ 
  SPE & 0.8967 & 0.4703 & 0.2635 & 0.0260 & 0.0454 & 0.0420 & 0.0254 & 0.0445 & 0.0420 & 0.0257 & 0.0449 & 0.0424 \\ 
 \hline 
       & \multicolumn{12}{c}{cut-point = (5,7)} \\  
  True & 0.9711 & 0.2347 & 0.2635 &  &  &  &  &  &  &  &  &  \\ 
  FI & 0.9710 & 0.2335 & 0.2631 & 0.0086 & 0.0283 & 0.0388 & 0.0084 & 0.0260 & 0.0352 & 0.0088 & 0.0284 & 0.0391 \\ 
  MSI & 0.9711 & 0.2339 & 0.2633 & 0.0116 & 0.0327 & 0.0401 & 0.0111 & 0.0304 & 0.0370 & 0.0117 & 0.0325 & 0.0407 \\ 
  IPW & 0.9711 & 0.2341 & 0.2638 & 0.0144 & 0.0402 & 0.0477 & 0.0136 & 0.0397 & 0.0486 & 0.0143 & 0.0400 & 0.0484 \\ 
  SPE & 0.9711 & 0.2339 & 0.2635 & 0.0142 & 0.0376 & 0.0420 & 0.0135 & 0.0370 & 0.0420 & 0.0141 & 0.0373 & 0.0424 \\ 
   \hline
\end{tabular}
\end{scriptsize}
\end{center}
\end{sidewaystable}

\begin{sidewaystable}
\begin{center}
\caption{Simulation results from 5000 replications when both models for $\rho_{k}$ and $\pi$ are correctly specified (Study 1) and the second value of $\Lambda$ is considered. ``True'' denotes the true parameter value. Sample size = 1000.}
\label{scen1:2:1000}
\begin{scriptsize}
\begin{tabular}{r c c c c c c c c c c c c}
  \hline
  & TCF$_1$ & TCF$_2$ & TCF$_3$ & MC.sd$_1$ & MC.sd$_2$ & MC.sd$_3$ & asy.sd$_1$ & asy.sd$_2$ & asy.sd$_3$ & boot.sd$_1$ & boot.sd$_2$ & boot.sd$_3$ \\ 
 \hline 
       & \multicolumn{12}{c}{cut-point = (2,4)} \\
True & 0.5000 & 0.3970 & 0.8970 &  &  &  &  &  &  &  &  &  \\ 
  FI & 0.4997 & 0.3967 & 0.8966 & 0.0248 & 0.0208 & 0.0177 & 0.0230 & 0.0185 & 0.0250 & 0.0250 & 0.0209 & 0.0180 \\ 
  MSI & 0.4997 & 0.3965 & 0.8966 & 0.0257 & 0.0251 & 0.0202 & 0.0240 & 0.0230 & 0.0268 & 0.0259 & 0.0250 & 0.0205 \\ 
  IPW & 0.4994 & 0.3967 & 0.8967 & 0.0323 & 0.0349 & 0.0259 & 0.0327 & 0.0343 & 0.0263 & 0.0327 & 0.0344 & 0.0260 \\ 
  SPE & 0.4997 & 0.3966 & 0.8967 & 0.0279 & 0.0317 & 0.0251 & 0.0281 & 0.0311 & 0.0250 & 0.0282 & 0.0311 & 0.0252 \\ 
 \hline 
       & \multicolumn{12}{c}{cut-point = (2,5)} \\  
  True & 0.5000 & 0.6335 & 0.7365 &  &  &  &  &  &  &  &  &  \\ 
  FI & 0.4997 & 0.6330 & 0.7364 & 0.0248 & 0.0215 & 0.0286 & 0.0230 & 0.0203 & 0.0310 & 0.0250 & 0.0216 & 0.0288 \\ 
  MSI & 0.4997 & 0.6327 & 0.7361 & 0.0257 & 0.0253 & 0.0304 & 0.0240 & 0.0241 & 0.0327 & 0.0259 & 0.0252 & 0.0307 \\ 
  IPW & 0.4994 & 0.6326 & 0.7365 & 0.0323 & 0.0339 & 0.0360 & 0.0327 & 0.0335 & 0.0375 & 0.0327 & 0.0335 & 0.0363 \\ 
  SPE & 0.4997 & 0.6328 & 0.7362 & 0.0279 & 0.0314 & 0.0338 & 0.0281 & 0.0308 & 0.0340 & 0.0282 & 0.0309 & 0.0341 \\ 
 \hline 
       & \multicolumn{12}{c}{cut-point = (2,7)} \\   
  True & 0.5000 & 0.8682 & 0.2635 &  &  &  &  &  &  &  &  &  \\ 
  FI & 0.4997 & 0.8679 & 0.2640 & 0.0248 & 0.0153 & 0.0274 & 0.0230 & 0.0164 & 0.0249 & 0.0250 & 0.0154 & 0.0275 \\ 
  MSI & 0.4997 & 0.8680 & 0.2643 & 0.0257 & 0.0183 & 0.0286 & 0.0240 & 0.0192 & 0.0262 & 0.0259 & 0.0184 & 0.0287 \\ 
  IPW & 0.4994 & 0.8682 & 0.2645 & 0.0323 & 0.0248 & 0.0343 & 0.0327 & 0.0244 & 0.0345 & 0.0327 & 0.0246 & 0.0341 \\ 
  SPE & 0.4997 & 0.8682 & 0.2644 & 0.0279 & 0.0236 & 0.0299 & 0.0281 & 0.0232 & 0.0297 & 0.0282 & 0.0234 & 0.0298 \\ 
 \hline 
       & \multicolumn{12}{c}{cut-point = (4,5)} \\  
  True & 0.8970 & 0.2365 & 0.7365 &  &  &  &  &  &  &  &  &  \\ 
  FI & 0.8971 & 0.2363 & 0.7364 & 0.0144 & 0.0180 & 0.0286 & 0.0138 & 0.0157 & 0.0310 & 0.0143 & 0.0182 & 0.0288 \\ 
  MSI & 0.8971 & 0.2362 & 0.7361 & 0.0160 & 0.0217 & 0.0304 & 0.0155 & 0.0197 & 0.0327 & 0.0160 & 0.0217 & 0.0307 \\ 
  IPW & 0.8972 & 0.2359 & 0.7365 & 0.0188 & 0.0297 & 0.0360 & 0.0186 & 0.0291 & 0.0375 & 0.0187 & 0.0293 & 0.0363 \\ 
  SPE & 0.8972 & 0.2362 & 0.7362 & 0.0183 & 0.0264 & 0.0338 & 0.0181 & 0.0262 & 0.0340 & 0.0182 & 0.0262 & 0.0341 \\ 
 \hline 
       & \multicolumn{12}{c}{cut-point = (4,7)} \\  
  True & 0.8970 & 0.4711 & 0.2635 &  &  &  &  &  &  &  &  &  \\ 
  FI & 0.8971 & 0.4712 & 0.2640 & 0.0144 & 0.0252 & 0.0274 & 0.0138 & 0.0232 & 0.0249 & 0.0143 & 0.0250 & 0.0275 \\ 
  MSI & 0.8971 & 0.4715 & 0.2643 & 0.0160 & 0.0280 & 0.0286 & 0.0155 & 0.0261 & 0.0262 & 0.0160 & 0.0278 & 0.0287 \\ 
  IPW & 0.8972 & 0.4715 & 0.2645 & 0.0188 & 0.0348 & 0.0343 & 0.0186 & 0.0342 & 0.0345 & 0.0187 & 0.0343 & 0.0341 \\ 
  SPE & 0.8972 & 0.4717 & 0.2644 & 0.0183 & 0.0321 & 0.0299 & 0.0181 & 0.0316 & 0.0297 & 0.0182 & 0.0316 & 0.0298 \\ 
  \hline 
        & \multicolumn{12}{c}{cut-point = (5,7)} \\
  True & 0.9711 & 0.2347 & 0.2635 &  &  &  &  &  &  &  &  &  \\ 
  FI & 0.9709 & 0.2350 & 0.2640 & 0.0061 & 0.0201 & 0.0274 & 0.0060 & 0.0184 & 0.0249 & 0.0062 & 0.0200 & 0.0275 \\ 
  MSI & 0.9709 & 0.2353 & 0.2643 & 0.0082 & 0.0229 & 0.0286 & 0.0080 & 0.0216 & 0.0262 & 0.0082 & 0.0229 & 0.0287 \\ 
  IPW & 0.9709 & 0.2356 & 0.2645 & 0.0101 & 0.0285 & 0.0343 & 0.0099 & 0.0282 & 0.0345 & 0.0102 & 0.0283 & 0.0341 \\ 
  SPE & 0.9710 & 0.2354 & 0.2644 & 0.0100 & 0.0266 & 0.0299 & 0.0098 & 0.0263 & 0.0297 & 0.0100 & 0.0264 & 0.0298 \\ 
   \hline
\end{tabular}
\end{scriptsize}
\end{center}
\end{sidewaystable}

\begin{sidewaystable}
\begin{center}
\caption{Simulation results from 5000 replications when both models for $\rho_{k}$ and $\pi$ are correctly specified (Study 1) and the third value of $\Lambda$ is considered. ``True'' denotes the true parameter value. Sample size = 500.}
\label{scen1:3:500}
\begin{scriptsize}
\begin{tabular}{r c c c c c c c c c c c c}
  \hline
  & TCF$_1$ & TCF$_2$ & TCF$_3$ & MC.sd$_1$ & MC.sd$_2$ & MC.sd$_3$ & asy.sd$_1$ & asy.sd$_2$ & asy.sd$_3$ & boot.sd$_1$ & boot.sd$_2$ & boot.sd$_3$ \\ 
  \hline 
        & \multicolumn{12}{c}{cut-point = (2,4)} \\
True & 0.5000 & 0.3031 & 0.8031 &  &  &  &  &  &  &  &  &  \\ 
  FI & 0.5001 & 0.3027 & 0.8034 & 0.0356 & 0.0240 & 0.0348 & 0.0320 & 0.0206 & 0.0373 & 0.0350 & 0.0242 & 0.0346 \\ 
  MSI & 0.5001 & 0.3031 & 0.8034 & 0.0375 & 0.0322 & 0.0384 & 0.0340 & 0.0291 & 0.0408 & 0.0367 & 0.0318 & 0.0383 \\ 
  IPW & 0.5004 & 0.3031 & 0.8037 & 0.0454 & 0.0444 & 0.0454 & 0.0438 & 0.0439 & 0.0451 & 0.0440 & 0.0441 & 0.0452 \\ 
  SPE & 0.5002 & 0.3032 & 0.8036 & 0.0410 & 0.0411 & 0.0441 & 0.0400 & 0.0406 & 0.0438 & 0.0401 & 0.0407 & 0.0440 \\ 
  \hline 
        & \multicolumn{12}{c}{cut-point = (2,5)} \\  
  True & 0.5000 & 0.4682 & 0.6651 &  &  &  &  &  &  &  &  &  \\ 
  FI & 0.5001 & 0.4681 & 0.6656 & 0.0356 & 0.0266 & 0.0438 & 0.0320 & 0.0237 & 0.0430 & 0.0350 & 0.0271 & 0.0428 \\ 
  MSI & 0.5001 & 0.4679 & 0.6654 & 0.0375 & 0.0348 & 0.0469 & 0.0340 & 0.0325 & 0.0461 & 0.0367 & 0.0350 & 0.0459 \\ 
  IPW & 0.5004 & 0.4676 & 0.6655 & 0.0454 & 0.0475 & 0.0538 & 0.0438 & 0.0474 & 0.0526 & 0.0440 & 0.0476 & 0.0524 \\ 
  SPE & 0.5002 & 0.4678 & 0.6654 & 0.0410 & 0.0440 & 0.0513 & 0.0400 & 0.0440 & 0.0500 & 0.0401 & 0.0442 & 0.0503 \\ 
  \hline 
        & \multicolumn{12}{c}{cut-point = (2,7)} \\   
  True & 0.5000 & 0.7027 & 0.3349 &  &  &  &  &  &  &  &  &  \\ 
  FI & 0.5001 & 0.7033 & 0.3346 & 0.0356 & 0.0268 & 0.0424 & 0.0320 & 0.0246 & 0.0383 & 0.0350 & 0.0267 & 0.0412 \\ 
  MSI & 0.5001 & 0.7033 & 0.3346 & 0.0375 & 0.0336 & 0.0455 & 0.0340 & 0.0318 & 0.0414 & 0.0367 & 0.0334 & 0.0441 \\ 
  IPW & 0.5004 & 0.7031 & 0.3352 & 0.0454 & 0.0439 & 0.0515 & 0.0438 & 0.0437 & 0.0505 & 0.0440 & 0.0440 & 0.0504 \\ 
  SPE & 0.5002 & 0.7034 & 0.3347 & 0.0410 & 0.0416 & 0.0481 & 0.0400 & 0.0413 & 0.0465 & 0.0401 & 0.0414 & 0.0468 \\ 
  \hline 
        & \multicolumn{12}{c}{cut-point = (4,5)} \\  
  True & 0.8031 & 0.1651 & 0.6651 &  &  &  &  &  &  &  &  &  \\ 
  FI & 0.8033 & 0.1654 & 0.6656 & 0.0278 & 0.0196 & 0.0438 & 0.0260 & 0.0166 & 0.0430 & 0.0274 & 0.0196 & 0.0428 \\ 
  MSI & 0.8030 & 0.1648 & 0.6654 & 0.0303 & 0.0256 & 0.0469 & 0.0284 & 0.0236 & 0.0461 & 0.0297 & 0.0259 & 0.0459 \\ 
  IPW & 0.8030 & 0.1645 & 0.6655 & 0.0344 & 0.0346 & 0.0538 & 0.0335 & 0.0346 & 0.0526 & 0.0337 & 0.0349 & 0.0524 \\ 
  SPE & 0.8030 & 0.1645 & 0.6654 & 0.0334 & 0.0317 & 0.0513 & 0.0325 & 0.0321 & 0.0500 & 0.0326 & 0.0322 & 0.0503 \\ 
  \hline 
        & \multicolumn{12}{c}{cut-point = (4,7)} \\   
  True & 0.8031 & 0.3996 & 0.3349 &  &  &  &  &  &  &  &  &  \\ 
  FI & 0.8033 & 0.4007 & 0.3346 & 0.0278 & 0.0300 & 0.0424 & 0.0260 & 0.0268 & 0.0383 & 0.0274 & 0.0299 & 0.0412 \\ 
  MSI & 0.8030 & 0.4002 & 0.3346 & 0.0303 & 0.0367 & 0.0455 & 0.0284 & 0.0339 & 0.0414 & 0.0297 & 0.0364 & 0.0441 \\ 
  IPW & 0.8030 & 0.4000 & 0.3352 & 0.0344 & 0.0458 & 0.0515 & 0.0335 & 0.0456 & 0.0505 & 0.0337 & 0.0458 & 0.0504 \\ 
  SPE & 0.8030 & 0.4002 & 0.3347 & 0.0334 & 0.0431 & 0.0481 & 0.0325 & 0.0429 & 0.0465 & 0.0326 & 0.0430 & 0.0468 \\ 
  \hline 
        & \multicolumn{12}{c}{cut-point = (5,7)} \\ 
  True & 0.8996 & 0.2345 & 0.3349 &  &  &  &  &  &  &  &  &  \\ 
  FI & 0.8996 & 0.2353 & 0.3346 & 0.0192 & 0.0245 & 0.0424 & 0.0182 & 0.0220 & 0.0383 & 0.0190 & 0.0248 & 0.0412 \\ 
  MSI & 0.8996 & 0.2354 & 0.3346 & 0.0221 & 0.0307 & 0.0455 & 0.0212 & 0.0288 & 0.0414 & 0.0219 & 0.0310 & 0.0441 \\ 
  IPW & 0.8997 & 0.2355 & 0.3352 & 0.0253 & 0.0384 & 0.0515 & 0.0249 & 0.0388 & 0.0505 & 0.0252 & 0.0391 & 0.0504 \\ 
  SPE & 0.8997 & 0.2356 & 0.3347 & 0.0249 & 0.0364 & 0.0481 & 0.0246 & 0.0366 & 0.0465 & 0.0247 & 0.0367 & 0.0468 \\ 
   \hline
\end{tabular}
\end{scriptsize}
\end{center}
\end{sidewaystable}

\begin{sidewaystable}
\begin{center}
\caption{Simulation results from 5000 replications when both models for $\rho_{k}$ and $\pi$ are correctly specified (Study 1) and the third value of $\Lambda$ is considered. ``True'' denotes the true parameter value. Sample size = 1000.}
\label{scen1:3:1000}
\begin{scriptsize}
\begin{tabular}{r c c c c c c c c c c c c}
  \hline
  & TCF$_1$ & TCF$_2$ & TCF$_3$ & MC.sd$_1$ & MC.sd$_2$ & MC.sd$_3$ & asy.sd$_1$ & asy.sd$_2$ & asy.sd$_3$ & boot.sd$_1$ & boot.sd$_2$ & boot.sd$_3$ \\
  \hline 
        & \multicolumn{12}{c}{cut-point = (2,4)} \\
True & 0.5000 & 0.3031 & 0.8031 &  &  &  &  &  &  &  &  &  \\ 
  FI & 0.5003 & 0.3030 & 0.8040 & 0.0242 & 0.0169 & 0.0243 & 0.0226 & 0.0145 & 0.0264 & 0.0247 & 0.0170 & 0.0243 \\ 
  MSI & 0.5001 & 0.3030 & 0.8038 & 0.0256 & 0.0222 & 0.0270 & 0.0240 & 0.0206 & 0.0288 & 0.0259 & 0.0224 & 0.0270 \\ 
  IPW & 0.5001 & 0.3032 & 0.8038 & 0.0310 & 0.0310 & 0.0320 & 0.0310 & 0.0311 & 0.0320 & 0.0311 & 0.0311 & 0.0319 \\ 
  SPE & 0.5001 & 0.3030 & 0.8040 & 0.0281 & 0.0285 & 0.0312 & 0.0283 & 0.0288 & 0.0310 & 0.0283 & 0.0287 & 0.0311 \\ 
  \hline 
        & \multicolumn{12}{c}{cut-point = (2,5)} \\   
  True & 0.5000 & 0.4682 & 0.6651 &  &  &  &  &  &  &  &  &  \\ 
  FI & 0.5003 & 0.4682 & 0.6663 & 0.0242 & 0.0193 & 0.0301 & 0.0226 & 0.0167 & 0.0304 & 0.0247 & 0.0191 & 0.0301 \\ 
  MSI & 0.5001 & 0.4681 & 0.6663 & 0.0256 & 0.0248 & 0.0320 & 0.0240 & 0.0230 & 0.0326 & 0.0259 & 0.0247 & 0.0324 \\ 
  IPW & 0.5001 & 0.4683 & 0.6664 & 0.0310 & 0.0337 & 0.0368 & 0.0310 & 0.0336 & 0.0373 & 0.0311 & 0.0336 & 0.0370 \\ 
  SPE & 0.5001 & 0.4682 & 0.6665 & 0.0281 & 0.0311 & 0.0350 & 0.0283 & 0.0312 & 0.0355 & 0.0283 & 0.0312 & 0.0355 \\ 
  \hline 
        & \multicolumn{12}{c}{cut-point = (2,7)} \\  
  True & 0.5000 & 0.7027 & 0.3349 &  &  &  &  &  &  &  &  &  \\ 
  FI & 0.5003 & 0.7028 & 0.3359 & 0.0242 & 0.0188 & 0.0289 & 0.0226 & 0.0173 & 0.0271 & 0.0247 & 0.0188 & 0.0290 \\ 
  MSI & 0.5001 & 0.7025 & 0.3359 & 0.0256 & 0.0236 & 0.0307 & 0.0240 & 0.0225 & 0.0293 & 0.0259 & 0.0236 & 0.0311 \\ 
  IPW & 0.5001 & 0.7023 & 0.3360 & 0.0310 & 0.0311 & 0.0350 & 0.0310 & 0.0310 & 0.0358 & 0.0311 & 0.0311 & 0.0356 \\ 
  SPE & 0.5001 & 0.7024 & 0.3358 & 0.0281 & 0.0292 & 0.0324 & 0.0283 & 0.0293 & 0.0329 & 0.0283 & 0.0293 & 0.0330 \\ 
  \hline 
        & \multicolumn{12}{c}{cut-point = (4,5)} \\  
  True & 0.8031 & 0.1651 & 0.6651 &  &  &  &  &  &  &  &  &  \\ 
  FI & 0.8034 & 0.1652 & 0.6663 & 0.0193 & 0.0139 & 0.0301 & 0.0184 & 0.0117 & 0.0304 & 0.0193 & 0.0138 & 0.0301 \\ 
  MSI & 0.8032 & 0.1652 & 0.6663 & 0.0211 & 0.0184 & 0.0320 & 0.0201 & 0.0167 & 0.0326 & 0.0209 & 0.0183 & 0.0324 \\ 
  IPW & 0.8034 & 0.1651 & 0.6664 & 0.0241 & 0.0248 & 0.0368 & 0.0237 & 0.0246 & 0.0373 & 0.0237 & 0.0247 & 0.0370 \\ 
  SPE & 0.8032 & 0.1653 & 0.6665 & 0.0233 & 0.0229 & 0.0350 & 0.0230 & 0.0228 & 0.0355 & 0.0230 & 0.0228 & 0.0355 \\ 
  \hline 
        & \multicolumn{12}{c}{cut-point = (4,7)} \\  
  True & 0.8031 & 0.3996 & 0.3349 &  &  &  &  &  &  &  &  &  \\ 
  FI & 0.8034 & 0.3998 & 0.3359 & 0.0193 & 0.0207 & 0.0289 & 0.0184 & 0.0189 & 0.0271 & 0.0193 & 0.0210 & 0.0290 \\ 
  MSI & 0.8032 & 0.3995 & 0.3359 & 0.0211 & 0.0253 & 0.0307 & 0.0201 & 0.0240 & 0.0293 & 0.0209 & 0.0256 & 0.0311 \\ 
  IPW & 0.8034 & 0.3991 & 0.3360 & 0.0241 & 0.0319 & 0.0350 & 0.0237 & 0.0323 & 0.0358 & 0.0237 & 0.0323 & 0.0356 \\ 
  SPE & 0.8032 & 0.3994 & 0.3358 & 0.0233 & 0.0299 & 0.0324 & 0.0230 & 0.0303 & 0.0329 & 0.0230 & 0.0304 & 0.0330 \\ 
  \hline 
        & \multicolumn{12}{c}{cut-point = (5,7)} \\   
  True & 0.8996 & 0.2345 & 0.3349 &  &  &  &  &  &  &  &  &  \\ 
  FI & 0.8998 & 0.2346 & 0.3359 & 0.0134 & 0.0172 & 0.0289 & 0.0129 & 0.0155 & 0.0271 & 0.0134 & 0.0174 & 0.0290 \\ 
  MSI & 0.8997 & 0.2343 & 0.3359 & 0.0157 & 0.0216 & 0.0307 & 0.0150 & 0.0204 & 0.0293 & 0.0155 & 0.0218 & 0.0311 \\ 
  IPW & 0.8998 & 0.2340 & 0.3360 & 0.0180 & 0.0273 & 0.0350 & 0.0177 & 0.0274 & 0.0358 & 0.0177 & 0.0275 & 0.0356 \\ 
  SPE & 0.8997 & 0.2342 & 0.3358 & 0.0178 & 0.0256 & 0.0324 & 0.0174 & 0.0258 & 0.0329 & 0.0174 & 0.0259 & 0.0330 \\ 
   \hline
\end{tabular}
\end{scriptsize}
\end{center}
\end{sidewaystable}

\begin{sidewaystable}
\begin{center}
\caption{Simulation results from 5000 replications  when the  model for the verification process is misspecified (Study 2) and the first value of $\Lambda$ is used. ``True'' indicates the true parameter value. Sample size = 1000.} 
\label{scen2:1:1000}
\begin{scriptsize}
\begin{tabular}{r c c c c c c c c c c c c}
  \hline
  & TCF$_1$ & TCF$_2$ & TCF$_3$ & MC.sd$_1$ & MC.sd$_2$ & MC.sd$_3$ & asy.sd$_1$ & asy.sd$_2$ & asy.sd$_3$ & boot.sd$_1$ & boot.sd$_2$ & boot.sd$_3$ \\
  \hline 
       & \multicolumn{12}{c}{cut-point = (2,4)} \\
True & 0.5000 & 0.4347 & 0.9347 &  &  &  &  &  &  &  &  &  \\ 
  FI & 0.5011 & 0.4345 & 0.9351 & 0.0277 & 0.0238 & 0.0152 & 0.0239 & 0.0203 & 0.0262 & 0.0275 & 0.0241 & 0.0153 \\ 
  MSI & 0.5011 & 0.4344 & 0.9351 & 0.0280 & 0.0259 & 0.0169 & 0.0243 & 0.0226 & 0.0271 & 0.0279 & 0.0260 & 0.0169 \\ 
  IPW & \textbf{0.5822} & \textbf{0.4436} & 0.9375 & 0.0381 & 0.0407 & 0.0213 & 0.0380 & 0.0400 & 0.0255 & 0.0381 & 0.0401 & 0.0212 \\ 
  SPE & 0.5011 & 0.4345 & 0.9352 & 0.0304 & 0.0334 & 0.0218 & 0.0304 & 0.0330 & 0.0214 & 0.0305 & 0.0331 & 0.0216 \\ 
  \hline 
       & \multicolumn{12}{c}{cut-point = (2,5)} \\  
  True & 0.5000 & 0.7099 & 0.7752 &  &  &  &  &  &  &  &  &  \\ 
  FI & 0.5011 & 0.7105 & 0.7765 & 0.0277 & 0.0227 & 0.0297 & 0.0239 & 0.0223 & 0.0334 & 0.0275 & 0.0228 & 0.0298 \\ 
  MSI & 0.5011 & 0.7101 & 0.7762 & 0.0280 & 0.0245 & 0.0305 & 0.0243 & 0.0241 & 0.0343 & 0.0279 & 0.0245 & 0.0309 \\ 
  IPW & \textbf{0.5822} & \textbf{0.6815} & \textbf{0.8046} & 0.0381 & 0.0376 & 0.0325 & 0.0380 & 0.0370 & 0.0381 & 0.0381 & 0.0371 & 0.0327 \\ 
  SPE & 0.5011 & 0.7099 & 0.7760 & 0.0304 & 0.0309 & 0.0328 & 0.0304 & 0.0306 & 0.0330 & 0.0305 & 0.0307 & 0.0331 \\ 
  \hline 
       & \multicolumn{12}{c}{cut-point = (2,7)} \\  
  True & 0.5000 & 0.9230 & 0.2248 &  &  &  &  &  &  &  &  &  \\ 
  FI & 0.5011 & 0.9233 & 0.2256 & 0.0277 & 0.0144 & 0.0270 & 0.0239 & 0.0193 & 0.0250 & 0.0275 & 0.0143 & 0.0270 \\ 
  MSI & 0.5011 & 0.9234 & 0.2258 & 0.0280 & 0.0161 & 0.0275 & 0.0243 & 0.0204 & 0.0256 & 0.0279 & 0.0158 & 0.0275 \\ 
  IPW & \textbf{0.5822} & \textbf{0.9009} & 0.2306 & 0.0381 & 0.0276 & 0.0306 & 0.0380 & 0.0268 & 0.0316 & 0.0381 & 0.0270 & 0.0308 \\ 
  SPE & 0.5011 & 0.9234 & 0.2258 & 0.0304 & 0.0225 & 0.0279 & 0.0304 & 0.0218 & 0.0280 & 0.0305 & 0.0220 & 0.0281 \\ 
  \hline 
       & \multicolumn{12}{c}{cut-point = (4,5)} \\   
  True & 0.9347 & 0.2752 & 0.7752 &  &  &  &  &  &  &  &  &  \\ 
  FI & 0.9352 & 0.2760 & 0.7765 & 0.0135 & 0.0218 & 0.0297 & 0.0127 & 0.0168 & 0.0334 & 0.0135 & 0.0215 & 0.0298 \\ 
  MSI & 0.9352 & 0.2757 & 0.7762 & 0.0143 & 0.0237 & 0.0305 & 0.0135 & 0.0191 & 0.0343 & 0.0143 & 0.0233 & 0.0309 \\ 
  IPW & \textbf{0.9540} & \textbf{0.2379} & \textbf{0.8046} & 0.0139 & 0.0335 & 0.0325 & 0.0138 & 0.0330 & 0.0381 & 0.0139 & 0.0331 & 0.0327 \\ 
  SPE & 0.9352 & 0.2754 & 0.7760 & 0.0161 & 0.0279 & 0.0328 & 0.0160 & 0.0275 & 0.0330 & 0.0161 & 0.0275 & 0.0331 \\ 
  \hline 
       & \multicolumn{12}{c}{cut-point = (4,7)} \\   
  True & 0.9347 & 0.4883 & 0.2248 &  &  &  &  &  &  &  &  &  \\ 
  FI & 0.9352 & 0.4888 & 0.2256 & 0.0135 & 0.0290 & 0.0270 & 0.0127 & 0.0259 & 0.0250 & 0.0135 & 0.0287 & 0.0270 \\ 
  MSI & 0.9352 & 0.4889 & 0.2258 & 0.0143 & 0.0302 & 0.0275 & 0.0135 & 0.0273 & 0.0256 & 0.0143 & 0.0300 & 0.0275 \\ 
  IPW & \textbf{0.9540} & \textbf{0.4574} & 0.2306 & 0.0139 & 0.0391 & 0.0306 & 0.0138 & 0.0387 & 0.0316 & 0.0139 & 0.0388 & 0.0308 \\ 
  SPE & 0.9352 & 0.4890 & 0.2258 & 0.0161 & 0.0328 & 0.0279 & 0.0160 & 0.0327 & 0.0280 & 0.0161 & 0.0328 & 0.0281 \\
  \hline 
       & \multicolumn{12}{c}{cut-point = (5,7)} \\  
  True & 0.9883 & 0.2132 & 0.2248 &  &  &  &  &  &  &  &  &  \\ 
  FI & 0.9883 & 0.2128 & 0.2256 & 0.0040 & 0.0216 & 0.0270 & 0.0038 & 0.0190 & 0.0250 & 0.0040 & 0.0215 & 0.0270 \\ 
  MSI & 0.9884 & 0.2133 & 0.2258 & 0.0050 & 0.0231 & 0.0275 & 0.0046 & 0.0208 & 0.0256 & 0.0050 & 0.0231 & 0.0275 \\ 
  IPW & \textbf{0.9912} & 0.2195 & 0.2306 & 0.0060 & 0.0305 & 0.0306 & 0.0054 & 0.0301 & 0.0316 & 0.0059 & 0.0302 & 0.0308 \\ 
  SPE & 0.9885 & 0.2135 & 0.2258 & 0.0065 & 0.0256 & 0.0279 & 0.0060 & 0.0256 & 0.0280 & 0.0064 & 0.0257 & 0.0281 \\ 
   \hline
\end{tabular}
\end{scriptsize}
\end{center}
\end{sidewaystable}
 
\begin{sidewaystable}
\begin{center}
\caption{Simulation results from 5000 replications when the  model for the verification process is misspecified (Study 2) and the third value of $\Lambda$ is used. ``True'' indicates the true parameter value. Sample size = 1000.}
\label{scen2:3:1000}
\begin{scriptsize}
\begin{tabular}{r c c c c c c c c c c c c}
  \hline
  & TCF$_1$ & TCF$_2$ & TCF$_3$ & MC.sd$_1$ & MC.sd$_2$ & MC.sd$_3$ & asy.sd$_1$ & asy.sd$_2$ & asy.sd$_3$ & boot.sd$_1$ & boot.sd$_2$ & boot.sd$_3$ \\
   \hline 
        & \multicolumn{12}{c}{cut-point = (2,4)} \\
True & 0.5000 & 0.3031 & 0.8031 &  &  &  &  &  &  &  &  &  \\ 
  FI & 0.4998 & 0.3026 & 0.8043 & 0.0257 & 0.0172 & 0.0280 & 0.0221 & 0.0124 & 0.0293 & 0.0259 & 0.0171 & 0.0278 \\ 
  MSI & 0.4999 & 0.3027 & 0.8044 & 0.0264 & 0.0204 & 0.0297 & 0.0230 & 0.0166 & 0.0308 & 0.0267 & 0.0204 & 0.0293 \\ 
  IPW & \textbf{0.6267} & \textbf{0.2614} & \textbf{0.8259} & 0.0345 & 0.0371 & 0.0371 & 0.0346 & 0.0364 & 0.0372 & 0.0348 & 0.0366 & 0.0365 \\ 
  SPE & 0.5000 & 0.3031 & 0.8047 & 0.0322 & 0.0323 & 0.0361 & 0.0324 & 0.0321 & 0.0352 & 0.0326 & 0.0322 & 0.0354 \\ 
   \hline 
        & \multicolumn{12}{c}{cut-point = (2,5)} \\  
  True & 0.5000 & 0.4682 & 0.6651 &  &  &  &  &  &  &  &  &  \\ 
  FI & 0.4998 & 0.4681 & 0.6667 & 0.0257 & 0.0192 & 0.0341 & 0.0221 & 0.0151 & 0.0342 & 0.0259 & 0.0192 & 0.0343 \\ 
  MSI & 0.4999 & 0.4681 & 0.6664 & 0.0264 & 0.0227 & 0.0354 & 0.0230 & 0.0195 & 0.0357 & 0.0267 & 0.0229 & 0.0358 \\ 
  IPW & \textbf{0.6267} & \textbf{0.3884} & \textbf{0.7253} & 0.0345 & 0.0403 & 0.0396 & 0.0346 & 0.0400 & 0.0413 & 0.0348 & 0.0402 & 0.0401 \\ 
  SPE & 0.5000 & 0.4684 & 0.6665 & 0.0322 & 0.0352 & 0.0389 & 0.0324 & 0.0353 & 0.0391 & 0.0326 & 0.0355 & 0.0393 \\ 
   \hline 
        & \multicolumn{12}{c}{cut-point = (2,7)} \\   
  True & 0.5000 & 0.7027 & 0.3349 &  &  &  &  &  &  &  &  &  \\ 
  FI & 0.4998 & 0.7035 & 0.3360 & 0.0257 & 0.0201 & 0.0318 & 0.0221 & 0.0184 & 0.0311 & 0.0259 & 0.0203 & 0.0320 \\ 
  MSI & 0.4999 & 0.7035 & 0.3360 & 0.0264 & 0.0237 & 0.0337 & 0.0230 & 0.0224 & 0.0331 & 0.0267 & 0.0240 & 0.0339 \\ 
  IPW & \textbf{0.6267} & \textbf{0.6157} & \textbf{0.4102} & 0.0345 & 0.0417 & 0.0386 & 0.0346 & 0.0416 & 0.0398 & 0.0348 & 0.0417 & 0.0386 \\ 
  SPE & 0.5000 & 0.7038 & 0.3360 & 0.0322 & 0.0360 & 0.0350 & 0.0324 & 0.0361 & 0.0350 & 0.0326 & 0.0364 & 0.0352 \\ 
   \hline 
        & \multicolumn{12}{c}{cut-point = (4,5)} \\   
  True & 0.8031 & 0.1651 & 0.6651 &  &  &  &  &  &  &  &  &  \\ 
  FI & 0.8032 & 0.1655 & 0.6667 & 0.0207 & 0.0139 & 0.0341 & 0.0189 & 0.0099 & 0.0342 & 0.0207 & 0.0141 & 0.0343 \\ 
  MSI & 0.8031 & 0.1654 & 0.6664 & 0.0217 & 0.0165 & 0.0354 & 0.0200 & 0.0135 & 0.0357 & 0.0216 & 0.0169 & 0.0358 \\ 
  IPW & \textbf{0.8512} & \textbf{0.1270} & \textbf{0.7253} & 0.0217 & 0.0245 & 0.0396 & 0.0215 & 0.0251 & 0.0413 & 0.0215 & 0.0253 & 0.0401 \\ 
  SPE & 0.8030 & 0.1653 & 0.6665 & 0.0239 & 0.0225 & 0.0389 & 0.0237 & 0.0228 & 0.0391 & 0.0238 & 0.0229 & 0.0393 \\ 
   \hline 
        & \multicolumn{12}{c}{cut-point = (4,7)} \\   
  True & 0.8031 & 0.3996 & 0.3349 &  &  &  &  &  &  &  &  &  \\ 
  FI & 0.8032 & 0.4009 & 0.3360 & 0.0207 & 0.0226 & 0.0318 & 0.0189 & 0.0194 & 0.0311 & 0.0207 & 0.0227 & 0.0320 \\ 
  MSI & 0.8031 & 0.4008 & 0.3360 & 0.0217 & 0.0261 & 0.0337 & 0.0200 & 0.0234 & 0.0331 & 0.0216 & 0.0262 & 0.0339 \\ 
  IPW & \textbf{0.8512} & \textbf{0.3544} & \textbf{0.4102} & 0.0217 & 0.0358 & 0.0386 & 0.0215 & 0.0362 & 0.0398 & 0.0215 & 0.0363 & 0.0386 \\ 
  SPE & 0.8030 & 0.4008 & 0.3360 & 0.0239 & 0.0326 & 0.0350 & 0.0237 & 0.0325 & 0.0350 & 0.0238 & 0.0327 & 0.0352 \\ 
   \hline 
        & \multicolumn{12}{c}{cut-point = (5,7)} \\  
  True & 0.8996 & 0.2345 & 0.3349 &  &  &  &  &  &  &  &  &  \\ 
  FI & 0.8997 & 0.2354 & 0.3360 & 0.0144 & 0.0183 & 0.0318 & 0.0135 & 0.0156 & 0.0311 & 0.0144 & 0.0184 & 0.0320 \\ 
  MSI & 0.8995 & 0.2354 & 0.3360 & 0.0158 & 0.0223 & 0.0337 & 0.0149 & 0.0197 & 0.0331 & 0.0157 & 0.0220 & 0.0339 \\ 
  IPW & \textbf{0.9149} & \textbf{0.2274} & \textbf{0.4102} & 0.0163 & 0.0303 & 0.0386 & 0.0160 & 0.0299 & 0.0398 & 0.0161 & 0.0301 & 0.0386 \\ 
  SPE & 0.8995 & 0.2355 & 0.3360 & 0.0175 & 0.0273 & 0.0350 & 0.0173 & 0.0268 & 0.0350 & 0.0174 & 0.0269 & 0.0352 \\ 
   \hline
\end{tabular}
\end{scriptsize}
\end{center}
\end{sidewaystable} 

\section{Simulation results of VUS estimators}\label{app:simu:vus}
In this section, we give some simulation results concerning the estimators of the VUS presented in Subsection \ref{s:vus}.

The disease  $D$ is generated by a trinomial random vector $(D_{1},D_{2},D_{3})$, such that $D_{k}$ is a Bernoulli random variable with mean $\theta_k$, $k = 1,2,3$. We set $\theta_1 = 0.4, \theta_2 = 0.35$ and $\theta_3 = 0.25$. The pairs $T, A$ are generated from the following conditional models
\[
T,A |D_{k} \sim \mathcal{N}_2 \left(\mu_k, \Lambda\right), \qquad k = 1,2,3,
\]
where $\mu_k = k(\mu_T,\mu_A)^\top.$ 
We consider three values of $\Lambda$,
\[
\left(\begin{array}{l l}
1.2 & 1 \\
1 & 1
\end{array}\right) , \qquad
\left(\begin{array}{c c}
1.75 & 0.1 \\
0.1 & 2.5
\end{array}\right) , \qquad
\left(\begin{array}{c c}
5.5 & 3 \\
3 & 2.5
\end{array}\right)
.
\]
The true VUS value is equal to 0.9472 for the first value of $\Lambda$ and $(\mu_T,\mu_A) = (3,2);$  is equal to 0.7175 for the second value of $\Lambda$  and $(\mu_T,\mu_A) = (2,1);$ is equal to 0.4778 for the third value of $\Lambda$  and $(\mu_T,\mu_A) = (2,1).$ We simulate the verification status $V$ by using the following model
\[
\mathrm{logit}\left\{\Pro(V = 1|T,A)\right\} = \delta_0 + \delta_1 T + \delta_2 A.
\]
The  parameters $(\delta_0,\delta_1,\delta_2)$ are fixed equal to $(1,-2.87,4.06)$ when the first value of $\Lambda$ is considered, and equal to  $(1,-2.2,4)$  otherwise. These choices give rise  to a verification rate of about $0.52$. 
Under our data--generating setting, the disease process follows  a multinomial logistic model. 
We consider two sample sizes, i.e., $n = 200$ and $n = 500$. Each simulation experiment was based on 1000 replications.

FI, MSI, IPW and SPE estimates of VUS are computed under correct working models for both the disease and the verification processes. Tables \ref{tab:result:vus1}--\ref{tab:result:vus3} show Monte Carlo means, Monte Carlo standard deviations (MC.sd), the square roots of the variances estimated via asymptotic results (Asy.sd) and bootstrap standard deviations (Boot.sd) of $\hat{\mu}$.

\begin{table}[h]
\caption{Simulation results for VUS estimators, $\mu = 0.9472.$}
\label{tab:result:vus1}
\begin{center}
\begin{tabular}{c l r r r r}
\toprule 
Sample size & ~ &  Mean &  MC.sd  &  Asy.sd  &  Boot.sd \\
\midrule
\multirow{4}{*}{$n = 200$} & FI   & 0.9471 & 0.0251 & 0.0219 & 0.0256 \\
& MSI  & 0.9466 & 0.0252 & 0.0222 & 0.0258 \\
& IPW  & 0.9498 & 0.0377 & 0.0261 & 0.0271 \\
& SPE  & 0.9461 & 0.0323 & 0.0274 & 0.0315 \\
\midrule
\multirow{4}{*}{$n = 500$} & FI   & 0.9470 & 0.0144 & 0.0143 & 0.0149 \\
& MSI  & 0.9468 & 0.0144 & 0.0144 & 0.0150 \\
& IPW  & 0.9480 & 0.0244 & 0.0192 & 0.0192 \\
& SPE  & 0.9467 & 0.0228 & 0.0181 & 0.0224 \\
\bottomrule
\end{tabular}
\end{center}
\end{table}

\begin{table}[h]
\caption{Simulation results for VUS estimators, $\mu = 0.7175.$}
\label{tab:result:vus2}
\begin{center}
\begin{tabular}{c l r r r r}
\toprule 
Sample size & ~ &  Mean &  MC.sd  &  Asy.sd  &  Boot.sd  \\
\midrule
\multirow{4}{*}{$n = 200$} & FI   & 0.7185 & 0.0549 & 0.0559 & 0.0566 \\
& MSI  & 0.7165 & 0.0552 & 0.0571 & 0.0577 \\
& IPW  & 0.7261 & 0.0981 & 0.1197 & 0.0754  \\
& SPE  & 0.7155 & 0.1021 & 0.0981 & 0.1106  \\
\midrule
\multirow{4}{*}{$n = 500$} & FI & 0.7183 & 0.0357 & 0.0356 & 0.0357 \\
& MSI  & 0.7176 & 0.0358 & 0.0360 & 0.0361 \\
& IPW  & 0.7272 & 0.0814 & 0.0549 & 0.0564 \\
& SPE  & 0.7184 & 0.0813 & 0.0698 & 0.0864  \\
\bottomrule
\end{tabular}
\end{center}
\end{table}

\begin{table}[h]
\caption{Simulation results for VUS estimators, $\mu = 0.4778.$}
\label{tab:result:vus3}
\begin{center}
\begin{tabular}{c l r r r r}
\toprule
Sample size & ~ &  Mean &  MC.sd  &  Asy.sd  &  Boot.sd  \\
\midrule
\multirow{4}{*}{$n = 200$} & FI   & 0.4788 & 0.0575 & 0.0558 & 0.0574  \\
& MSI  & 0.4775 & 0.0584 & 0.0576 & 0.0589 \\
& IPW  & 0.4760 & 0.1054 & 0.0767 & 0.0876 \\
& SPE  & 0.4815 & 0.1121 & 0.1472 & 0.1418 \\
\midrule
\multirow{4}{*}{$n = 500$} & FI   & 0.4782 & 0.0360 & 0.0350 & 0.0354 \\
& MSI  & 0.4779 & 0.0364 & 0.0358 & 0.0361 \\
& IPW  & 0.4804 & 0.0792 & 0.0608 & 0.0640 \\
& SPE  & 0.4868 & 0.0943 & 0.1101 & 0.0995 \\
\bottomrule
\end{tabular}
\end{center}
\end{table}

\section{Some figures related to the first illustration}
\label{app:fg:ex1}
In this section, we provide some extra plots related to the analysis of the first dataset used in the main paper.
In particular, in Figure~\ref{fg:full:ex1} we present the estimate of the ROC surface for the test CA125 based
on the full data set.
\begin{figure}[h]
\begin{center}
\includegraphics[width=0.44\linewidth]{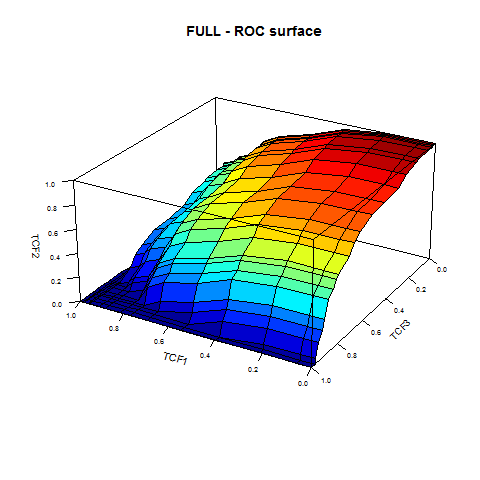}
\caption{Estimated ROC surface for CA125 assessing the classification into three class of EOC: benign disease, early stage (I and II) and late stage (I and II). This surface is estimated by using full data.}
\label{fg:full:ex1}
\end{center}
\end{figure}
Figure~\ref{fg:roc_threshold_projection} and Figure~\ref{fg:roc_logistic_projection} present  
the projections of the estimated ROC surfaces
to the planes defined by $\TCF_1$ versus $\TCF_2$, $\TCF_1$ versus $\TCF_3$ and $\TCF_2$ versus $\TCF_3$, i.e., the ROC curves between  classes 1 and 2, classes 1 and 3, classes 2 and 3. For the IPW and SPE methods, to estimate the verification process, we make use, firstly, of a correctly specified model, i.e., a linear threshold regression model (Figure~\ref{fg:roc_threshold_projection}) and, then, of a misspecified model, i.e., a logistic model (Figure~\ref{fg:roc_logistic_projection}).
\begin{figure}[h]
\begin{center}
\includegraphics[width=0.46\linewidth]{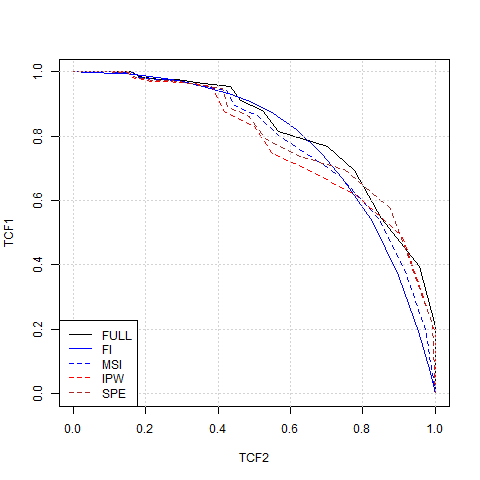}\\[-1.0em]
\begin{tabular}{@{}c@{}c@{}}
  \multicolumn{2}{c}{ }\\[-0.2em]
  \includegraphics[width=0.46\linewidth]{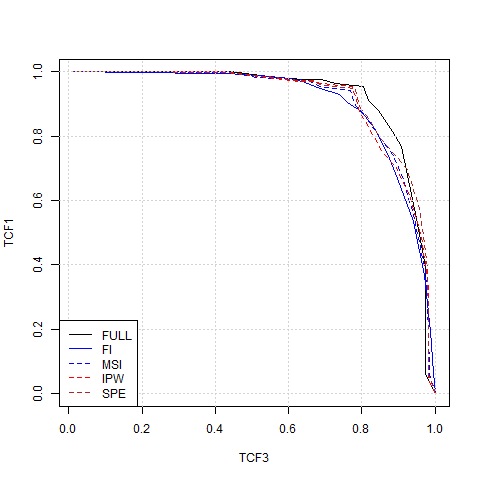}&
  \includegraphics[width=0.46\linewidth]{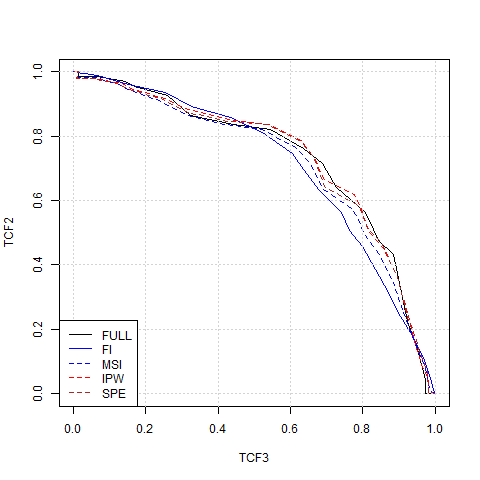}\\[-0.5em]
\end{tabular}
\caption{Two dimensional ROC curve projections. A threshold model is used to estimate the verification process.}
\label{fg:roc_threshold_projection}
\end{center}
\end{figure}
\begin{figure}[h]
\begin{center}
\includegraphics[width=0.46\linewidth]{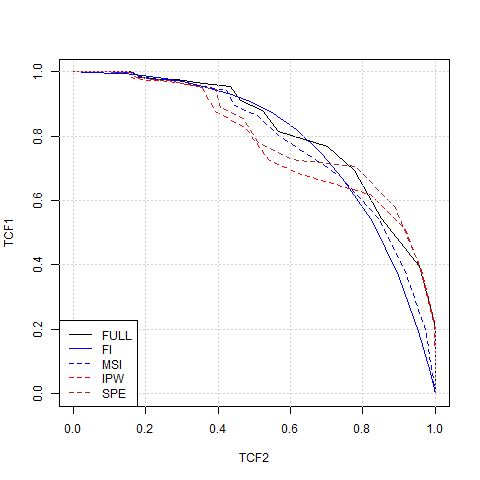}\\[-1.0em]
\begin{tabular}{@{}c@{}c@{}}
  \multicolumn{2}{c}{ }\\[-0.2em]
  \includegraphics[width=0.46\linewidth]{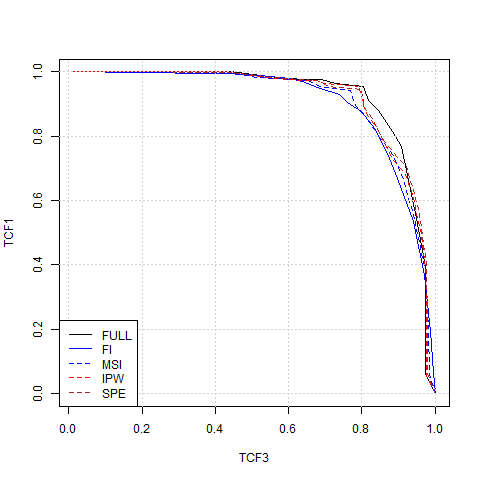}&
  \includegraphics[width=0.46\linewidth]{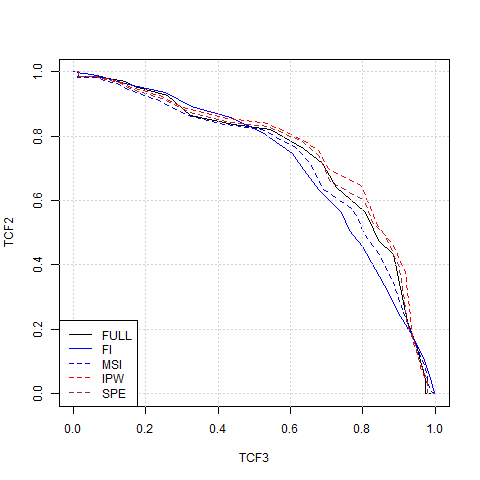}\\[-0.5em]
\end{tabular}
\caption{Two dimensional ROC curve projections. A logistic model is used to estimate the verification process.}
\label{fg:roc_logistic_projection}
\end{center}
\end{figure}
Finally, as an example, Figure~\ref{fg:conf_region}  plots confidence regions for the pair $(\TCF_1(c_1), \TCF_2(c_1,+\infty))$ at 
three   values of $c_1$, when the MSI approach is used.
An approximated 95\% elliptical confidence region is obtained in a standard way as the set of points
\begin{align}
R_{12,\mathrm{MSI}} &= \Bigg\{ \begin{pmatrix}
\TCF_{1}(c_1) \\ \TCF_{2}(c_1,+\infty)
\end{pmatrix}: 
\begin{pmatrix}
\ud\TCF_{1}(c_1) \\ \ud\TCF_{2}(c_1,+\infty)
\end{pmatrix}^\top \hat\Sigma_{12}^{-1}
\begin{pmatrix}
\ud\TCF_{1}(c_1) \\ \ud\TCF_{2}(c_1,+\infty)
\end{pmatrix} \nonumber \\
& \qquad \qquad \qquad \qquad \qquad \qquad \le \chi^2_{0.95,2}; c_1 \in \mathbb{R}
\Bigg\}, \nonumber
\end{align}
where
\[
\begin{pmatrix}
\ud\TCF_{1}(c_1) \\ \ud\TCF_{2}(c_1,+\infty)
\end{pmatrix} = \begin{pmatrix}
\TCF_{1}(c_1) \\ \TCF_{2}(c_1,+\infty)
\end{pmatrix} - 
\begin{pmatrix}
\widehat{\TCF}_{1,\mathrm{MSI}}(c_1) \\ \widehat{\TCF}_{2,\mathrm{MSI}}(c_1,+\infty)
\end{pmatrix},
\]
the quantity $\hat\Sigma_{12}$ is the estimated asymptotic covariance matrix of $\Big(\widehat{\TCF}_{1,\mathrm{MSI}}(c_1),$ $ \widehat{\TCF}_{2,\mathrm{MSI}}(c_1,+\infty)\Big)$ and $\chi^2_{0.95,2}$ is the 95--th quantile of a Chi--square distribution with $2$ degree of freedom. In the plot, the black solid line represents the full data estimated ROC curve, whereas the blue dashed line is the bias--corrected estimated ROC curve. 
\begin{figure}[h]
\begin{center}
\includegraphics[width=0.46\linewidth]{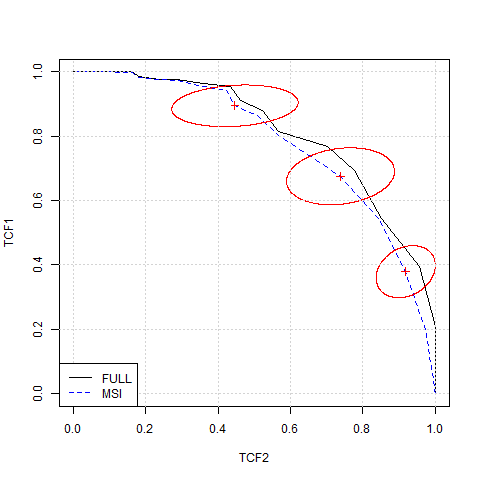}
\caption{ROC curve between  classes 1 and 2 estimated using the MSI approach
along with approximate 0.95 confidence regions for $c_1=-0.237, -0.399$ and 1.672, respectively.}
\label{fg:conf_region}
\end{center}
\end{figure}

\end{document}